%% file: main.tex
\documentclass[11pt]{article}

\usepackage[margin=1in]{geometry}
\usepackage{rpmacros}
\usepackage[T1]{fontenc}
\usepackage[usenames,dvipsnames]{color}
\usepackage{stmaryrd}
\usepackage{lineno}
\usepackage{xfrac}
\usepackage{mathpazo}
\usepackage{todonotes}
\usepackage{setspace}
\usepackage{nicefrac}
\usepackage{float}
\usepackage{scrextend}
\usepackage[linesnumbered,ruled]{algorithm2e}
\allowdisplaybreaks

\usepackage{fullpage}
\usepackage[utf8]{inputenc}

\usepackage{color}
\usepackage[
pagebackref, 
pdfstartview=FitH,pdfpagemode=UseNone,colorlinks=true,citecolor=blue,linkcolor=blue]{hyperref}
\hypersetup{
  linkcolor=[rgb]{0.3,0.3,0.6},
  citecolor=[rgb]{0.2, 0.6, 0.2},
  urlcolor=[rgb]{0.6, 0.2, 0.2}
}
\usepackage[nameinlink,capitalise]{cleveref}

\usepackage[title]{appendix}
\crefname{appendix}{Appendix}{Appendices}
\Crefname{appendix}{Appendix}{Appendices}

\input{thmmacros}

\numberwithin{algocf}{section}        

\makeatletter
\let\c@algocf\c@theorem

\makeatother

\usetikzlibrary{decorations.pathreplacing,arrows}

\newif\ifnote
\notefalse
\ifnote
\newcommand{\LPnote}[1]{\textcolor{BrickRed}{\guillemotleft PH: #1\guillemotright}}
\newcommand{\MKnote}[1]{\textcolor{Purple}{\guillemotleft MK: #1\guillemotright}}
\newcommand{\MSnote}[1]{\textcolor{OliveGreen}{\guillemotleft MS: #1\guillemotright}}
\newcommand{\LPnote}[1]{\textcolor{NavyBlue}{\guillemotleft SB: #1\guillemotright}}
\else
\newcommand{\LPnote}[1]{}
\newcommand{\MKnote}[1]{}
\newcommand{\MSnote}[1]{}
\fi

\newcommand{\iext}{\Vec{i}_\text{ext}}

\newcommand{\Lap}{\Vec{L}}
\newcommand{\D}{\Vec{D}}
\newcommand{\A}{\Vec{A}}
\renewcommand{\vec}[1]{{\mathbf #1}}

\let\epsilon\varepsilon

\newcommand{\vp}{\vecp}

\newcommand{\vf}{\vecf}
\newcommand{\vx}{\vecx}

\title{Bounded-Independence Sampling of Edges\\ for Combinatorial Graph Properties}
\author{Aaron Putterman\thanks{Harvard University, Cambridge, Massachusetts, USA. Supported in part by the Simons Investigator awards of Madhu Sudan and Salil Vadhan, and AFOSR award FA9550-25-1-0112. Email: \texttt{aputterman@g.harvard.edu}.}
\and
{ 
{Salil Vadhan\thanks{Harvard University, Cambridge, Massachusetts, USA. Supported by a Simons Investigators Award. Email: \texttt{salil\_vadhan@harvard.edu}.}}
}
\and
{ 
{Vadim Zaripov\thanks{Harvard University, Cambridge, Massachusetts, USA. Email: \texttt{vadimzaripov@college.harvard.edu}. Part of this work was done in partial fulfillment of the author’s undergraduate thesis at Harvard University.}}
}
}

\date{\today}

\begin{document}

\maketitle

\begin{abstract}

Random subsampling of edges is a commonly employed technique in graph algorithms, underlying a vast array of modern algorithmic breakthroughs. Unfortunately, using this technique often leads to randomized algorithms with no clear path to derandomization because the analyses rely on a union bound on exponentially many events. In this work, we revisit this goal of \emph{derandomizing} randomized sampling in graphs.

We give several results related to bounded-independence edge subsampling, and in the process of doing so, generalize several of the results of Alon and Nussboim (FOCS 2008), who studied bounded-independence analogues of random graphs (which can be viewed as edge subsamples of the complete graph). Most notably, we show that in graphs with $m$ edges:
\begin{enumerate}
    \item $O(\log(m))$-wise independence suffices for preserving connectivity when sampling at rate $1/2$ in a graph with minimum cut $\geq \kappa \log(m)$ with probability $1 - 1/\mathrm{poly}(m)$ (for a sufficiently large constant $\kappa$).
    \item $O(\log(m))$-wise $(1/\mathrm{poly}(m))$-almost independence suffices for ensuring cycle-freeness when sampling at rate $1/2$ in a graph with minimum cycle length $\geq \kappa \log(m)$ with probability $1 - 1/\mathrm{poly}(m)$ (for a sufficiently large constant $\kappa$).
    \item If we relax to \emph{arbitrary} distributions, we show there is an explicit distribution on $\{0, 1\}^m$ with marginals $\leq 1/2$ generated using $O(\log(m)\log\log(m))$ random bits such that in a graph with minimum cut $\geq \kappa \log(m)$ (for a sufficiently large constant $\kappa$), a sample from the distribution is still connected with probability $1- 1/\poly(m)$.
\end{enumerate}

To demonstrate the utility of our results, we revisit the classic problem of using parallel algorithms to find graphic matroid bases, first studied in the work of Karp, Upfal, and Wigderson (FOCS 1985). In this regime, we show that the optimal algorithms of Khanna, Putterman, and Song (arxiv 2025) can be \emph{explicitly} derandomized while maintaining near-optimality. 
\end{abstract}

\vfill

\pagenumbering{gobble}

\pagebreak

{\small\tableofcontents}

\pagebreak
\pagenumbering{arabic}
	
\section{Introduction}

\subsection{Background}

Suppose that one wishes to run an algorithm on an Erdős–Rényi random graph $\mathcal{G}(n, p)$. There are exponentially many graphs in the support of $\mathcal{G}(n, p)$, so even storing a graph requires resources polynomial in $n$. A common approach for derandomizing such an algorithm is to replace $\mathcal{G}(n, p)$ with some random looking distribution $\mathcal{G}_n$ whose support is much smaller (for example, $\poly(n)$), so that all the properties of random graphs on which the algorithm relies are still preserved. This line of work was initiated by Goldreich, Goldwasser, and Nussboim \cite{GGN03}, who studied pseudorandom graphs that are indistinguishable from $\mathcal{G}(n, p)$ by any oracle machine that makes $\poly(\log n)$ adjacency queries to a graph. Goldreich et al. considered several properties of random graphs, such as connectedness and Hamiltonicity, and presented pseudorandom graph constructions that preserve these properties while having support polynomial in $n$.

Alon and Nussboim \cite{alon2008k} studied the specific strategy of using $k$-wise independence for sampling a pseudorandom graph. 
These can be sampled with support size $n^{\Theta(k)}$ and stored and accessed with $\widetilde{O}(k\log n)$ space and time. Similarly to \cite{GGN03}, \cite{alon2008k} show that many typical properties and thresholds that are enjoyed by truly random graphs \emph{still} hold true with $k$-wise independent sampling for $k = O(\log n)$: for instance, edge connectivity, vertex connectivity, jumbledness, and the existence of matchings all mimic the behavior of truly independent sampling. 

We continue this line of study by focusing on random subsampling of edges in graphs: instead of looking at $\mathcal{G}(n, p)$, we start with a given graph and randomly take each of its edges in our sample with some fixed probability, getting a sparser subgraph. This is a fundamental technique which underlies a huge portion of modern graph algorithms, including sparsification and its many applications \cite{BK96, ST11}, coloring \cite{ACK19}, vast arrays of sublinear algorithms \cite{mcgregor2014graph}, and many more. However, for these graph algorithms, the gold standard is often the design of \emph{deterministic} algorithms, thereby bypassing any uncertainty in the algorithm's output. Motivated by this, the focus of one line of study, including this work, is on the ability to \emph{derandomize} graph sampling.
 
Subsequent works have studied the feasibility of $k$-wise independent subsampling in arbitrary graphs: as an example, the work of Doron, Murtagh, Vadhan, and Zuckerman \cite{Doron2020SpectralSV} studied the ability of $k$-wise independent sampling to produce \emph{spectral sparsifiers} of graphs, again showing that mild independence (in fact, $k = O(\log(n))$) suffices for preserving the graph's spectrum. 

However, these aforementioned results rely on a shared insight: namely that bounded independence sampling suffices in good spectral expanders for concentration of the \emph{eigenvalues} of a graph's Laplacian. Alon and Nussboim \cite{alon2008k} leverage the fact that a complete graph is a good expander, and therefore has eigenvalues well-separated from $0$, and Doron et al. \cite{Doron2020SpectralSV} assign different sampling probabilities to each edge proportional to their effective resistances, which was a procedure known to preserve graph's eigenvalues with fully independent sampling \cite{SS11}.

Thus, in our work we seek to understand:
\begin{quote}
    \emph{Does $k$-wise independent subsampling for small $k$ also preserve \emph{combinatorial} properties of graphs that are not good expanders?}
\end{quote}
By "combinatorial" properties, we mean those where we do not know how to use the spectrum directly to analyze fully independent subsampling.

As we shall see, we show that for certain properties the answer is strongly affirmative. In the following section, we discuss these results in more detail and subsequently explain how they lend themselves to new explicit derandomized graph algorithms.

\subsection{Our Graph Sampling Results}\label{sec:graphSamplingDiscussion}

As an illustration of the challenges we face, consider a graph $G = (V, E)$ on $n$ vertices with minimum cut $\lambda \approx 100 \log(n)$. Karger's \cite{Kar93} cut-counting bound establishes that in such a graph, the number of cuts of size $\leq \alpha \lambda$ is at most $n^{2 \alpha}$. An immediate corollary of this is that in such a graph $G$, if one samples each edge independently with probability $1/2$, then the resulting graph is connected with probability at least $1 - 1 / \mathrm{poly}(n)$. To see why, we bin our cuts to those of size in the interval $[\lambda, 2 \lambda)$, $[2\lambda, 4\lambda)$, etc. Each cut in the bin $[\alpha\lambda, 2\alpha\lambda)$ has at least one edge in the surviving sample of edges with probability $1 - 1 / 2^{\alpha \lambda} = 1 - 1 / n^{100 \alpha}$. We can then take a union bound over all $\leq n^{4\alpha}$ cuts in this bin and then a further union bound over all $\leq n$ bins to conclude that \emph{every} cut has at least one surviving edge in the sample with high probability, and thus the graph is connected.

As mentioned above, a natural approach to derandomizing the above property is to use $k$-wise independent sampling. Unfortunately, if we revisit the argument above, we can see that an essential step in arguing that the graph is connected under random subsampling is to take a union bound over an \emph{exponential} number of cuts in the graph. This relies on having ``success'' probabilities for individual cuts that are as large as $1 - 2^{- (n-1)}$, since there are $2^{n-1}$ cuts in the graph. In general, no pseudorandom algorithm with a sample space smaller than $2^{n-1}$ can guarantee such high success probabilities, since for such an algorithm, the individual probabilities are multiples of $2^{-(n-1)}$.

Furthermore, a graph having a minimum cut of size $100 \log(n)$ provides \emph{no meaningful bound} on its expansion or the effective resistances of its underlying edges, meaning that the techniques of \cite{alon2008k} and \cite{Doron2020SpectralSV} do not directly lend themselves to this setting. Thus, it may seem that we are unable to conclude this relatively benign connectivity property under bounded-independence sampling.

Despite the appearance of the above barriers, we show that for $k = O(\log(n))$, $k$-wise independent sampling still yields connected graphs with high probability.

\begin{theorem}\label{thm:introConnected}
    Let $G = (V, E)$ be a graph with $m$ edges and minimum cut $\lambda \geq \kappa \log(m)$ for some absolute constant $\kappa$. Let $\widetilde{G}$ denote the result of subsampling edges of $G$ in a $2\kappa \log(m)$-wise independent manner with marginals $1/2$. Then, $\widetilde{G}$ is connected with probability $\geq 1 - 1 / \mathrm{poly}(m)$.
\end{theorem}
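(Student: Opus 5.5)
The plan is to replace the exponential union bound over all cuts by a union bound over \emph{polynomially many small witnesses}. Each witness will involve at most $k = 2\kappa\log(m)$ edges, so that $k$-wise independence gives exactly the same probability as full independence. The starting observation is that if $\widetilde{G}$ is disconnected, then for any connected component $C$ of $\widetilde{G}$ the cut $\delta_G(C)$ has no sampled edge. For a fixed set $F$ of edges, $k$-wise independence with marginals $1/2$ gives $\Pr[F \cap \widetilde{E} = \emptyset] \le 2^{-\min(|F|, k)}$. So a single cut of size $\ge k$ is empty with probability at most $m^{-2\kappa}$. The whole difficulty is that the number of candidate cuts is not polynomially bounded, as discussed in \cref{sec:graphSamplingDiscussion}.

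\textbf{Step 1 (near-minimum cuts).} First handle the cuts of size at most $\alpha_0 \lambda$ for a constant $\alpha_0$. By Karger's cut-counting bound there are at most $n^{2\alpha}$ cuts of size in $[\alpha\lambda, 2\alpha\lambda)$. For such a cut, pick any $\min(\alpha\lambda, k)$ of its edges. If $\alpha\lambda \le k$ these are fully independent and the failure probability is $2^{-\alpha\lambda} \le m^{-\kappa\alpha}$. If $\alpha\lambda > k$ the failure probability is still $2^{-k} = m^{-2\kappa}$. Since $\alpha \le \alpha_0$ is constant, a union bound over the $n^{2\alpha_0}$ relevant cuts works once $\kappa$ is large compared to $\alpha_0$. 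So with probability $1 - 1/\mathrm{poly}(m)$, every cut of $G$ of size at most $\alpha_0\lambda$ has a sampled edge.

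\textbf{Step 2 (reducing large empty cuts to small witnesses).} The main obstacle is cuts of size $\gg k$: there the per-cut bound saturates at $2^{-k}$, while their number grows like $n^{2\alpha}$. I would avoid enumerating such cuts and instead show that any disconnection of $\widetilde{G}$ forces a \emph{bounded-size} event. My first attempt uses a tree packing. By Nash-Williams, $G$ contains $\lambda/2$ edge-disjoint spanning trees $T_1,\dots,T_{\lambda/2}$, and Karger's tree-packing argument shows that a cut of size $\alpha\lambda$ crosses some $T_i$ in at most about $2\alpha$ edges. So a cut is specified by $O(\alpha)$ tree edges. This suggests the witness ``the tree edges specifying the cut, together with $k$ further cut edges, are all unsampled,'' whose probability is $2^{-k}$. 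That alone still loses against the $n^{O(\alpha)}$ count, so I would combine it with a contraction argument.

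Condition on the event of Step 1 and contract the connected components of $\widetilde{G}$. In the contracted multigraph $H$, every vertex has degree at least $\lambda$ in $G$, and all edges of $H$ are unsampled. I would then pick an empty cut of $H$ that is \emph{locally} minimal, for instance a component $C$ of $\widetilde{G}$ with $|\delta_G(C)|$ minimum. Among the $\le \alpha_0\lambda$-sized structures inside $H$, I would try to exhibit a set of at most $k$ unsampled edges that determines $C$ up to $\mathrm{poly}(m)$ choices, aiming for a total count of at most $m^{O(1)}$ witnesses each of probability $2^{-k}$. Choosing $k = 2\kappa\log(m)$ with $\kappa$ large then closes the union bound.

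\textbf{Step 3.} Finally, I would sum the failure probabilities from Steps 1 and 2 to get connectivity with probability $1 - 1/\mathrm{poly}(m)$. I expect Step 2 to be the crux. Specifically, the difficulty is designing a canonical polynomial-size family of $\le k$-edge witnesses such that every disconnection of $\widetilde{G}$ makes some witness entirely unsampled. If the tree-packing encoding turns out too lossy, my fallback is a recursive halving argument: view rate-$1/2$ sampling as repeated sparser independent-looking rounds, and apply Step 1 at each level to the contracted graph, whose minimum cut remains $\Omega(\log m)$.
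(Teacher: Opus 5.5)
There is a genuine gap. Your Step~1 is sound, but it only handles cuts of size $O(\lambda)$. Step~2, which carries the entire content of the theorem, is left as an aspiration.

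For a cut of size $t\gg k$, $k$-wise independence cannot push the probability that the cut is entirely unsampled much below $2^{-k}$. Take $\lambda=\kappa\log m$ and $t=\alpha\lambda$. Even a $k$-th moment bound then gives only about $(O(k/t))^{k/2}=m^{-\Theta(\kappa\log\alpha)}$, while Karger's bound allows $n^{2\alpha}$ such cuts. So the union bound breaks once $\alpha$ exceeds roughly $\kappa\log\kappa$.

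To rescue a union-bound argument you would need a \emph{fixed}, polynomial-size family $\mathcal{W}$ of sets of at most $k$ edges, such that every cut of $G$ contains some $W\in\mathcal{W}$ and $\sum_{W\in\mathcal{W}}2^{-|W|}$ is small. You never construct one. The tree-packing encoding only re-derives Karger's $n^{O(\alpha)}$ count, as you note. The ``locally minimal empty component'' idea gives no reason why a component $C$ with $|\delta_G(C)|\gg\lambda$ should be pinned down by $k$ unsampled edges up to $\mathrm{poly}(m)$ choices. Also, you cannot condition on the Step~1 event before applying $k$-wise bounds, because conditioning on a global event destroys $k$-wise independence.

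Your fallback fails as well. The theorem concerns an \emph{arbitrary} $2\kappa\log(m)$-wise independent distribution with marginals $1/2$, and such a distribution is in general not a union of independent sparser rounds. That decomposition is exactly what the paper uses in \cref{thm:connectedness-from-dnfs}, but only because the distribution $Y$ there is \emph{built} that way, for a different statement.

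The missing idea is to abandon union bounds over cuts and use matrix concentration, which needs only $O(\log m)$-th moments. Applied directly to the unweighted $G$ this does not work, because a large min cut does not imply small leverage scores. For example, a single edge closing a long chain of dense clusters has leverage score near $1$.

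The paper's key new lemma, \cref{thm:lev-scores-min-cut}, removes this obstacle: min cut $\lambda$ admits a reweighting with all leverage scores $O(1/\lambda)$. The construction is as follows.
\begin{itemize}
\item Recursively cluster the graph using \cref{cor:graph-clustering-eff-resistance}.
\item Give the edges clustered at level $i$ weight $\Delta^{-i}$, then contract the clusters.
\item Control the effective-resistance diameters over the $\log m$ levels via \cref{lemma:contraction-eff-resistance-bound}.
\end{itemize}
The bounded-independence sparsification theorem of Doron et al.\ (\cref{thm:spectral-sparsification}) then applies. It shows that $O(\log m)$-wise independent sampling with marginal $1/2$, which dominates the required $\Theta(w_eR_e\log m)=O(1/\kappa)$, yields a spectral sparsifier of the reweighted graph with probability $1-1/\mathrm{poly}(m)$. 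A sparsifier is in particular connected. The weights do not affect which edges are kept, so this sampled subgraph is exactly your $\widetilde{G}$.
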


This result serves as a strong generalization of some of the results of \cite{alon2008k}. Indeed, \cite{alon2008k} studies which properties of \emph{the complete graph} are preserved under $k$-wise independent subsampling. Naturally, working with such a structured graph enables simpler analysis. Nevertheless, in our much less structured regime where graphs only have a lower bound on their minimum cut, we are still able to show that $O(\log(m))$-wise independent sampling suffices for preserving connectivity.

Sampling from a $O(\log m)$-wise independent distribution requires $\Theta(\log^2(m))$ random bits; indeed the support must be of size $2^{\Omega(\log^2(m))}$ \cite{AGM03}. Thus, as an alternative, we construct another sampling procedure that maintains connectivity while having almost-polynomial  support size:

\begin{theorem}\label{thm:introConnected2}
There exists an explicit distribution $Y$ on $\{0, 1\}^m$ that can be sampled using $O(\log m\cdot \log\log m)$ random bits and has marginals $\leq 1/2$ such that the following holds. For every graph $G = (V, E)$ with $m$ edges and minimum cut $\geq \kappa\log(m)$ for some absolute constant $\kappa$, subsampling the edges of $G$ using $Y$ yields a connected subgraph with probability $1 - 1/\poly(m)$.
\end{theorem}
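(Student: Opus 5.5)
Rather than controlling all exponentially many cuts at once, I would make connectivity follow from a few events, each fooled by a pseudorandom generator with short seed. Concretely, $Y$ will be the AND (coordinatewise product) of a constant number $t$ of independent samples $Y_1,\dots,Y_t$. Each $Y_i$ has marginals roughly $1/2^{1/t}$, so the AND has marginals at most $1/2$. Equivalently, we subsample in $t$ rounds, each keeping a constant fraction of edges. Since each round keeps a constant fraction, the minimum cut drops only by a constant factor per round. Using $t = O(\log\log m)$ rounds with marginals halving each time gives one option; the budget $O(\log m \log\log m)$ suggests $O(\log\log m)$ stages, each costing $O(\log m)$ bits.

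The plan is a recursive ``sparsify while preserving connectivity'' argument in the spirit of \cref{thm:introConnected}.

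\textbf{Step 1.} I isolate the combinatorial core of \cref{thm:introConnected}. In a graph with min cut $\lambda \ge \kappa\log m$, connectivity of the sample is implied by a structured family of events: for each cut in Karger's bins, some edge survives. I would try to certify this instead by a spanning structure. Nash-Williams gives $\lambda/2$ edge-disjoint spanning trees. Alternatively, I would use the expander-decomposition viewpoint: contract components that are well-connected, so that only $\mathrm{poly}(m)$ ``relevant'' cuts of size $O(\lambda)$ matter at each scale.

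\textbf{Step 2.} For events depending on $O(\log m)$ edges, an $\epsilon$-biased or almost $k$-wise independent space with $k=O(\log m)$ and $\epsilon = 1/\mathrm{poly}(m)$ costs only $O(\log m)$ bits (Naor--Naor, AGHP). This is the key saving over exact $k$-wise independence. The plan is to show that a single stage needs only events of the form ``a fixed set $S$ of $\Theta(\log m)$ edges is not entirely deleted''. These have failure probability $2^{-|S|}$ and are fooled up to additive $1/\mathrm{poly}(m)$. There must also be only $\mathrm{poly}(m)$ such events per stage. The events are the small cuts at scale $[\alpha\lambda,2\alpha\lambda)$, restricted to $O(\log m)$ of their edges.

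\textbf{Step 3.} Stage $j$ samples at a rate reducing the density by a constant factor. I would argue via Karger's bound applied to the current (random) graph that its min cut remains $\ge \kappa'\log m$ after each stage, with probability $1-1/\mathrm{poly}(m)$. Taking $O(\log\log m)$ stages, or a constant number, to reach marginal $\le 1/2$ while ending connected requires $O(\log m)$ bits per stage. Union bounding over stages gives the total seed of $O(\log m\log\log m)$ bits.

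\textbf{Main obstacle.} The hard part is Step 2: reducing ``the graph stays highly connected'' to $\mathrm{poly}(m)$ events, each on $O(\log m)$ coordinates. Large cuts (size $\alpha\lambda$ with large $\alpha$) are exponentially numerous, so I cannot just pick $O(\log m)$ edges of each. My first attempt would be to require in each stage only that every cut of size $\le 2\lambda$ retains $\ge$ a constant fraction of its edges, which involves $\le m^4$ cuts. I would use almost $O(\log m)$-wise independence plus a tail bound for the count of surviving edges. I would then argue that larger cuts are handled because the min cut of the resulting graph is again controlled by the next stage's small cuts. Whether this inductive reasoning about cut sizes closes, rather than needing true control of large cuts, is what I expect to be the crux. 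If it fails, I would fall back on embedding $\Omega(\log m)$ edge-disjoint spanning trees and fooling per-tree events.
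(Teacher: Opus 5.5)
\textbf{There is a genuine gap, and it is the step you flag as the crux.} Your stages are combined by AND, so edges are only ever deleted. Each stage must therefore preserve a global property of the current graph, namely its min cut up to a constant.

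Controlling only the $\le m^4$ cuts of size $\le 2\lambda$ does not give this. A cut of size $\alpha\lambda$ with $\alpha>2$ is not among your events, and nothing in your argument stops it from losing essentially all its edges. There are up to $n^{2\alpha}$ such cuts.

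Deferring them to ``the next stage's small cuts'' is circular. Those are cuts of the already-sampled graph, whose min cut is exactly what you have not bounded, and a later AND stage cannot restore deleted edges.

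Handling the large cuts directly by a union bound needs per-cut failure probability below $m^{-2\alpha}$. An almost $O(\log m)$-wise independent space with $O(\log m)$-bit seed has additive error only a fixed $1/\mathrm{poly}(m)$, so it cannot certify this as $\alpha$ grows. This is the counting obstruction described in the paper's overview.

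Two further points:
\begin{itemize}
\item If one $O(\log m)$-bit stage did preserve the min cut up to a constant, a single stage at rate $1/2$ would already give an $O(\log m)$-seed result, stronger than the theorem. So the multi-stage structure buys nothing.
\item The spanning-tree fallback does not supply short events either, since survival of a fixed spanning tree has probability $2^{-(n-1)}$.
\end{itemize}

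\textbf{The missing idea} is to combine independent rounds by OR and analyze them Bor\r{u}vka-style. Then only local, constant-probability events are needed. This is the paper's route:
\begin{itemize}
\item In \cref{thm:connectedness-from-dnfs}, $Y_k=\bigvee_{i=1}^{c\log n}\bigwedge_{j=1}^{\log\ell}X^{(i)}_{kj}$ with $\ell=\kappa\log m$, so each round picks each edge at rate about $1/\ell$.
\item After each round, contract the components found so far. Contraction does not decrease the min cut, so every super-vertex still has $\ge\ell$ incident edges.
\item The event ``one of $\ell$ fixed incident edges is picked'' is a read-once DNF on $\ell\log\ell$ variables. Under uniform bits it has probability $\ge 1-1/e$.
\item By \cref{thm:fooling-dnfs}, a $\tilde{O}(\log m)$-wise $(1/\mathrm{polylog}\, m)$-biased space, costing $O(\log\log m)$ bits, fools this DNF to within $0.1$. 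So in each fresh round, each super-vertex finds a neighbor with probability $\ge 1/2$.
\item Markov's inequality gives constant-factor shrinkage of the number of super-vertices with constant probability per round. Independence across rounds plus Chernoff then contracts everything within $O(\log m)$ rounds with probability $1-1/\mathrm{poly}(m)$.
\end{itemize}
No union bound over cuts ever appears.

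The marginal is at most $1-(1-2/\ell)^{c\log n}\le 1/2$ for large $\kappa$. The seed is $O(\log m)$ rounds of $O(\log\log m)$ bits each, which is the opposite split of the budget from the one you proposed.
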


In a similar manner, we also study \emph{cycle-free-ness} when randomly sampling edges with large girth. In more detail, consider a graph $G = (V, E)$ whose shortest cycle is of length $\geq 100 \log(n) = \lambda$. A bound due to Subramaniam \cite{Sub95} (and rediscovered in several other works, including \cite{FGT16}) states that in such a graph, the number of cycles of length $\leq \alpha \cdot \lambda$, is bounded by $n^{2 \alpha}$. Thus, if one samples the edges in this graph uniformly at rate $1/2$, the exact same argument as in the cut setting above will show that the resulting sample of edges is \emph{cycle free} with high probability. However, this argument again relies on exponential concentration in the tail, and thus does not immediately translate into a proof of the same property holding under bounded independence. Despite this, we are able to show that bounded independence suffices:

\begin{theorem}\label{thm:introCycleFree}
    Let $G = (V, E)$ be a graph with $m$ edges and minimum cycle length $\lambda \geq \kappa \log(m)$ for some absolute constant $\kappa$. Let $\widetilde{G}$ denote the result of subsampling edges of $G$ in a $(1/m^{200})$-almost $2\kappa \log(m)$-wise independent manner with marginals $1/2$. Then, $\widetilde{G}$ is cycle free with probability $\geq 1 - 1 / \mathrm{poly}(m)$.
\end{theorem}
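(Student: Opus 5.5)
The plan is to show that, with high probability, no cycle of $G$ survives entirely in $\widetilde{G}$. The difficulty is that $G$ may have exponentially many long cycles, and under $k$-wise independence with $k=2\kappa\log(m) < \lambda$ we cannot bound the survival probability of a long cycle below $2^{-k}$. So we will not bound the survival of each cycle directly. Instead we reduce to short \emph{paths}: every cycle in $G$ has length at least $\lambda \ge \kappa\log(m)$. If a cycle $C$ survives, then for every $\ell \le \lambda$ every contiguous sub-path of $C$ with $\ell$ edges also survives. Fix $\ell = k = 2\kappa\log(m)$, or a constant fraction of it chosen so that $\ell\le\lambda$ (after adjusting constants). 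A surviving cycle then witnesses a surviving path of $\ell$ edges, but a path alone is too weak an event, since long paths certainly survive in a dense graph. So we need a witness that combines the girth structure with counting.

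The concrete approach is to mimic the binning argument with Subramaniam's cycle-counting bound: the number of cycles of length $\le \alpha\lambda$ is at most $m^{O(\alpha)}$. We split into short cycles, of length $\le \ell$ with $\ell = k$, and long cycles.

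\emph{Short cycles.} For a cycle of length $L \le k$, the $(1/m^{200})$-almost $k$-wise independence gives survival probability at most $2^{-L} + m^{-200}$. Union bounding over the $\le m^{O(L/\lambda)}$ cycles in each dyadic length bin with $\lambda\le L\le k$ costs at most $m^{O(1)}\cdot(m^{-\kappa}+m^{-200})$. Since $k = O(\lambda)$ there are only $O(1)$ bins here, so this is $1/\poly(m)$ for large $\kappa$. The additive $m^{-200}$ is exactly why almost-independence with error $1/m^{200}$ suffices: the number of short cycles is polynomial with exponent independent of $\kappa$, provided we choose $k \le c\lambda$ with $c$ a fixed constant. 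This needs care: the exponent $O(k/\lambda)$ must be $<200$.

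\emph{Long cycles.} This is the main obstacle. The idea I would try first is a \emph{contraction/chunking} argument. Any cycle of length $L > k$ can be cut into $\lceil L/(k/2)\rceil$ arcs of length about $k/2$. I would then try to show that, with high probability, the sampled graph $\widetilde{G}$ has no cycle by arguing about its structure locally. Every ball of radius $<\lambda/2$ in $G$ is a tree, by the girth assumption. So a cycle in $\widetilde{G}$ must be "global", and we want a short certificate for it.

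A cleaner route is to use that a surviving cycle of length $L$ forces $\widetilde{G}$ restricted to a set of $\ell$ consecutive edges to be fully present. We then count such "certificates" weighted so the union bound closes. Concretely, group edges along each cycle into consecutive windows of length $k$. The key counting lemma to aim for is this: the set of length-$k$ windows arising from cycles of length in $[\alpha\lambda,2\alpha\lambda)$ can be covered by a family of at most $m^{O(1)}$ windows for every bin. Each such window survives with probability $\le 2^{-k}+m^{-200}$, and a union bound over $O(\log m)$ bins then finishes. I would attempt to prove this via the Subramaniam/FGT-style encoding of cycles by $O(\alpha)$ "landmark" vertices: each cycle is determined by $O(L/\lambda)$ short geodesic segments between landmarks. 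Then a surviving cycle yields a surviving segment of length $\ge \lambda/O(1)$ that is a shortest path between two landmarks. There are only $\poly(m)$ such segments in total (one per pair of endpoints and a choice of geodesic structure), and each is fully present with probability $2^{-\Omega(\lambda)}+m^{-200}$ as long as $k\ge$ segment length. Making this landmark-segment step rigorous, showing that the relevant segments form a polynomial-size family independent of $\alpha$, is where I expect the real work to lie.
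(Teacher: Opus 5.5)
\textbf{There is a genuine gap: the long-cycle case is never proved.} Your treatment of cycles of length at most $k$ is fine. But the long-cycle case, which you correctly identify as the main obstacle, is only a plan. The ``key counting lemma'' and the landmark/geodesic-segment encoding are stated as goals, not established, so the proposal is incomplete exactly where the argument is needed.

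\textbf{The missing idea.} It is a one-line consequence of the girth assumption, and it is the route you dismissed at the outset. If $t<\lambda/2$, then any two vertices are joined by at most one path with $t$ edges. Two distinct such paths have a nonempty symmetric difference in which every vertex has even degree. That symmetric difference therefore contains a cycle of length at most $2t<\lambda$, a contradiction. Hence $G$ has at most $\binom{n}{2}\le 2m^2$ paths with exactly $t$ edges (ignoring isolated vertices).

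Your objection that ``a path alone is too weak an event, since long paths certainly survive in a dense graph'' misses this point. A union bound only needs the \emph{number} of candidate paths to be polynomial, and high girth forces exactly that. (High girth also forces $G$ to be sparse, so the density intuition does not apply here either.) The misstep is your threshold ``$\ell\le\lambda$''; the right threshold is $\ell<\lambda/2$.

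\textbf{The resulting proof.} With this observation the proof is a single union bound: no short/long split, no dyadic binning, and no appeal to Subramaniam's cycle-counting bound. Take $t=\lceil\kappa\log(m)/2\rceil-1$, so that $t<\lambda/2$ and $t\le k=2\kappa\log(m)$.
\begin{enumerate}
\item Every cycle of $G$ has length at least $\lambda>t$. So if any cycle survives, some path with $t$ edges survives.
\item Each such path involves at most $k$ edges. By $(1/m^{200})$-almost $k$-wise independence it survives with probability at most $2^{-t}+m^{-200}\le 2m^{-\kappa/2}+m^{-200}$.
\item A union bound over the at most $2m^2$ such paths gives failure probability $1/\mathrm{poly}(m)$, once $\kappa$ is a sufficiently large constant.
\end{enumerate}
This is exactly the paper's argument (\cref{clm:derandomizeSampling}, stated there with girth $>20\log m$ and paths of length $10\log m$).

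In the language of your plan, the ``polynomial-size family of segments independent of $\alpha$'' you were looking for is simply the set of all paths of length just under $\lambda/2$. Any sub-path of that length of a surviving cycle serves as the witness; no landmarks or geodesics are needed.
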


Unlike \cref{thm:introConnected}, this theorem allows for the use of an \emph{almost} $O(\log m)$-wise independendent distribution, which can be sampled using $O(\log m)$ random bits.

Together, these results show that bounded-independence sampling \emph{does suffice} for preserving basic combinatorial properties in weakly-structured graphs (i.e., non-expanders). As we elaborate upon more in the Technical Overview (\cref{sec:techOverview}), our results rely on very careful analyses of graph structure.  Furthermore, this improved understanding of graph sampling under bounded-independence immediately lends itself to better deterministic parallel algorithms for finding matroid bases.

\subsection{Applications to Parallel Matroid Algorithms}

\subsubsection{Matroid Background}

Matroids are fundamental and ubiquitous objects in combinatorial optimization, generalizing a number of basic structures like spanning forests in graphs, and bases of vector spaces. More formally, a matroid is modelled as a set system $\mathcal{M} = (E, \mathcal{I})$, where $E$ is the \emph{ground set}, consisting of $m$ elements, and $\mathcal{I} \subseteq 2^E$ is the \emph{set of independent sets}, satisfying:
\begin{enumerate}
    \item $\emptyset \in \mathcal{I}$ (non-empty property).
    \item If $S \in \mathcal{I}$ and $S' \subseteq S$, then $S \in \mathcal{I}$ (downward closure property).
    \item If $S, T\in \mathcal{I}$ and $|S| > |T|$, then $\exists e \in S\setminus T: T \cup \{e\} \in \mathcal{I}$ (exchange property).
\end{enumerate}

Within the study of matroids, a critical notion is that of a \emph{basis}, or namely a set $S \in \mathcal{I}$ that is \emph{maximal} under inclusion. Note that because of the exchange property above, we know that for a given matroid $\mathcal{M} = (E, \mathcal{I})$, all bases of $\mathcal{M}$ are of the same size.

For instance, one can take the ground set $E$ to be the set of edges in a graph $G = (V, E)$, and let a set of edges $S \subseteq E$ be independent if and only if $S$ is \emph{cycle-free} in the graph $G$. This yields the family of so-called \emph{graphic matroids}, with a maximal independent set (i.e., basis) being a spanning forest of $G$.

\subsubsection{Parallel Basis Finding}

Despite their prevalence in combinatorial optimization (see e.g., \cite{edmonds1979matroid, jensen1982complexity, gabow1984efficient, kleinberg2012matroid, balkanski2019optimal, blikstad2023fast}), our understanding of matroids is incomplete, even in basic directions. One such direction is in the study of \emph{parallel algorithms} for finding matroid bases. In this model, first introduced by Karp, Upfal, and Wigderson \cite{KUW85, KUW88}, one wishes to find a basis of some matroid $\mathcal{M} = (E, \mathcal{I})$. However, because the number of matroids on $n$ elements grows as $2^{2^n}$ \cite{BPV15}, it is too expensive to store the entire description of the matroid. Instead, the algorithm is given access to the matroid only in the form of an \emph{independence oracle}; i.e., a function $\mathrm{Ind}: 2^E \rightarrow \{0,1\}$, such that $\mathrm{Ind}(S) = \mathbf{1}[S \in \mathcal{I}]$. Under this form of access, the works of Karp, Upfal, and Wigderson \cite{KUW85, KUW88} asked:

\begin{quote}
    \emph{How many adaptive rounds of independence queries are required to find a basis of an arbitrary matroid, when each round is allowed only $\mathrm{poly}(n)$ many queries to the independence oracle?}
\end{quote}

In the seminal work \cite{KUW85}, it was proven that in the setting of arbitrary matroids, there is a \emph{deterministic} algorithm which finds a basis of any matroid in just $O(\sqrt{n})$ rounds, and that there is a lower bound of $\Omega(n^{1/3})$ many rounds for this task. Despite providing a narrow gap for the parallel complexity of this basic problem, the ensuing four decades saw no improvements to these bounds. Only recently, in the work of Khanna, Putterman, and Song \cite{khanna2025parallel}, were these bounds improved, with the establishment of an $\widetilde{O}(n^{7/15})$ round \emph{randomized} algorithm for finding bases in arbitrary matroids. Nevertheless, the true complexity of this problem is yet to be established.

Due to the complex structure of arbitrary matroids, the work of \cite{KUW85} proposed studying the parallel complexity of \emph{concrete classes} of matroids. Among these, the most notable class studied by \cite{KUW85} is the class of \emph{graphic matroids}:

\begin{definition}
    Given a graph $G = (V, E)$, the \emph{graphic matroid} induced by $G$ is the matroid $\mathcal{M}= (E, \mathcal{I})$ such that for a set $S \subseteq E$, $S \in \mathcal{I}$ if and only if $S$ is cycle-free.
\end{definition}

For this class of matroids, \cite{KUW85} provided deterministic algorithms that, on graphs with $m$ edges, achieve either (a) for any constant $d \in \mathbb{N}$, $m^{1 / d}$ rounds and $m^{O(d)}$ queries per round, or (b) $O(\log(m))$ rounds with $m^{O(\log(m))}$ queries per round. This was later improved in the work of Khanna, Putterman, and Song \cite{khanna2025optimal}, which provided an algorithm that finds bases of graphic matroids with only $\mathrm{poly}(m)$ queries per round and $O(\log(m))$ many rounds. 

Note that, as mentioned in their work, the algorithms of \cite{khanna2025optimal} can be derandomized \emph{non-explicitly}, in an argument akin to Adleman's proof of $\mathbf{BPP} \subseteq \mathbf{P} / \mathbf{poly}$ \cite{adleman1978two}. This of course answers the natural question of the deterministic ``query complexity'' of finding a matroid basis, but does not yield an efficiently implementable uniform algorithm. 

\subsubsection{Our Results}

Our first result uses \cref{thm:introCycleFree} to deterministically find bases in \emph{graphic matroids}. In this regime, we show the following:

\begin{theorem}\label{thm:graphicMain}
    There is a uniform, deterministic algorithm that, for any graphic matroid on $m$ elements, finds a basis in $O(\log m\cdot \log\log m)$ many rounds of $\mathrm{poly}(m)$ queries.
\end{theorem}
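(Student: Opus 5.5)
The plan is to derandomize the $O(\log m)$-round algorithm of \cite{khanna2025optimal} by replacing its truly random edge sampling with the $(1/m^{200})$-almost $O(\log m)$-wise independent sampling of \cref{thm:introCycleFree}. Such a distribution has a seed of only $O(\log m)$ bits. We can therefore enumerate all $\poly(m)$ seeds in parallel within a single round and keep a seed that works. Correctness of a candidate seed is checkable with independence queries, since we can query whether the sampled set is cycle-free. The extra $\log\log m$ factor in the round count should come from the fact that we cannot sample at rate $1/2$ directly. Instead we sample at rate $2^{-j}$ by intersecting $j = O(\log\log m)$ or so independent rate-$1/2$ samples. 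Alternatively, we run an inner phase of $O(\log\log m)$ rounds inside each of the $O(\log m)$ outer rounds.

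The steps are as follows. First, fix the iterative framework. We maintain a current independent set $F$, which is a forest, and contract it conceptually. An independence query on $F \cup S$ lets us test cycle-freeness relative to the contracted graph $G/F$. Next, in a given phase, we would like to guarantee progress. Either the contracted graph has a short cycle, or it has girth $\geq \kappa\log m$. Short cycles, of length $O(\log m)$, can be handled deterministically in the style of \cite{KUW85}: we query all small subsets. With $\poly(m)$ queries we cannot enumerate all $O(\log m)$-size sets, so instead we handle them by breaking them at smaller scales, which is the source of the doubling over $O(\log\log m)$ scales of cycle length. When the girth is large, we apply \cref{thm:introCycleFree}. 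An almost $O(\log m)$-wise independent rate-$1/2$ sample is a forest with high probability, and we add it to $F$. Such a sample contains about half the remaining edges, and many of them are non-loops, so this gives constant-factor progress in the number of non-loop edges. Because some seed succeeds, and success is checked by one independence query per seed, the step is deterministic and uniform.

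The main obstacle is bridging the small-girth regime. Contraction creates short cycles, and those block the application of \cref{thm:introCycleFree}. The approach I would try first is a girth-boosting phase. For $t = 1, 2, 4, \ldots, \kappa\log m$, we identify edges lying on cycles of length $\leq t$ in the contracted graph. We do this by querying, in parallel, the sets formed by one short-cycle candidate built from already-found structure. We then remove parallel edges and self-loops, which independence queries detect directly via pairs and singletons. After that, we use the bounded-independence sampling at a rate tuned to $t$ to find a forest that kills these cycles. Each doubling costs $O(1)$ rounds, giving $O(\log\log m)$ rounds per phase, and there are $O(\log m)$ phases of constant-factor progress. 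The total is $O(\log m \log\log m)$ rounds, each with $\poly(m)$ queries. Verifying that each scale's sampling still has enough almost-independence to apply the analogue of \cref{thm:introCycleFree} at girth $t$ with error $1/\poly(m)$ is the delicate part, and I expect that to need the most care.
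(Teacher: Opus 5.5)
\textbf{There is a genuine gap in the girth-boosting phase.}

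Your outer skeleton matches the paper's framework (\cref{alg:final}). There is an inner phase of $O(\log\log m)$ adaptive rounds that raises the girth to $\Theta(\log m)$. Then a single round enumerates the $\mathrm{poly}(m)$-size support of a $(1/\mathrm{poly}(m))$-almost $O(\log m)$-wise independent distribution, finds a cycle-free set of at least $|E|/10$ edges, and contracts it; $O(\log m)$ repetitions suffice. Your first explanation of the $\log\log m$ factor is not right: intersecting several rate-$1/2$ samples is non-adaptive and costs no extra rounds; only the adaptive girth-boosting phase does.

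The gap is in that girth-boosting phase, which is where essentially all the work lies. As written, it has no mechanism that works.
\begin{enumerate}
\item Short cycles cannot be eliminated by finding or contracting a forest. The girth only increases by \emph{deleting} edges, and parallel deletions must preserve the rank. Deleting an arbitrary edge from each of several overlapping short cycles can disconnect the graph. For example, in a theta graph with paths $P_1,P_2,P_3$, delete an edge of $P_1$ for $P_1\cup P_2$, of $P_2$ for $P_2\cup P_3$, and of $P_3$ for $P_3\cup P_1$. The paper instead removes, from every listed cycle, its largest-index edge under one fixed global order (\cref{lem:deleteCycles}); this is what makes simultaneous deletion safe.
\item More importantly, you give no way to obtain the short cycles from independence queries. ``One short-cycle candidate built from already-found structure'' presupposes knowledge of the graph that the oracle does not give you.
\item The low-girth analogue of \cref{thm:introCycleFree} that you mention only yields cycle-free samples containing a $2^{-\Theta(\log m/t)}$ fraction of the edges. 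It gives neither cycle removal nor constant-factor progress.
\end{enumerate}

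The missing idea is unique-cycle isolation. When the current girth is $\ell\le 20\log m$, sample with $(1/m^{500})$-almost $2\ell$-wise independence at marginal rate $2^{-\lceil 200\log m/\ell\rceil}$. The seed length is $O(\log m)$, so the whole support has size $\mathrm{poly}(m)$. The argument (\cref{clm:UniqueCycleSurvival}) runs as follows.
\begin{enumerate}
\item Every cycle $C$ of length in $[\ell,1.01\ell]$ survives with probability at least $m^{-250}$.
\item Conditioned on $C$ surviving, the remaining edges are still $(1/m^{200})$-almost $0.2\ell$-wise independent.
\item $G/C$ has girth at least $0.2\ell$ (\cref{clm:boundedLengthContract}), so by \cref{cor:smallerGirth} no other cycle survives with high probability.
\end{enumerate}
Hence every near-minimum cycle is the \emph{unique} cycle of some set in the support. \cref{alg:detectSingleCycle} then reads it off with $|S|+1$ non-adaptive queries. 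So one round lists all cycles of length in $[\ell,1.01\ell]$, and deleting them as above raises the girth by a factor $1.01$.

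This also shows why your doubling schedule is too coarse. For $|C|\le(1+\epsilon)\ell$, the symmetric-difference argument of \cref{clm:boundedLengthContract} only shows that the girth of $G/C$ is at least $\frac{1-\epsilon}{2}\ell$, which is vacuous at $\epsilon=1$. Indeed, a cycle of length $2g-2$ with an antipodal chord has girth $g$, yet the chord becomes a self-loop in $G/C$. The ratio per step must therefore be a constant strictly below $2$, which still gives $O(\log\log m)$ steps.
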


This theorem shows that the result of \cite{khanna2025optimal} \emph{can} be explicitly derandomized with only a minor increase in the number of rounds (from $\log(m)$ to $\log(m) \log\log(m)$), while still retaining a polynomial number of queries per round. The proof of this result relies intimately on \cref{thm:introCycleFree} (in fact, even on a strengthening of \cref{thm:introCycleFree}).

Next, as an application of \cref{thm:introConnected2}, we show that basis finding in \emph{cographic} matroids can also be derandomized. A set of edges $S \subseteq E$ in a cographic matroid for a graph $G$ is defined to be independent if and only if $S$ does not completely contain any non-empty cut $C$ of $G$, and so a basis of a cographic matroid is a set of edges that is the \emph{complement} of a spanning forest.

\begin{theorem}\label{thm:cographicMain}
    There is a uniform, deterministic algorithm that, for any cographic matroid on $m$ elements, finds a basis in only $O(\log(m) \log\log(m))$ many rounds of $m^{O(\log\log(m))}$ queries.
\end{theorem}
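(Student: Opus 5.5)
The plan is to derandomize a KPS-style algorithm for cographic matroids: repeatedly subsample the remaining edges of the graph using the explicit distribution $Y$ from \cref{thm:introConnected2}, which has seed length $O(\log m \log\log m)$. Every seed is tried in parallel, so each round costs $2^{O(\log m\log\log m)} = m^{O(\log\log m)}$ queries.

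The basic object we build is a set $B$ that is independent in the cographic matroid. Equivalently, $E\setminus B$ spans every connected component of $G$. The queries reveal this structure directly: for $S \subseteq E$, $\mathrm{Ind}(S)=1$ iff $G\setminus S$ has the same connected components as $G$. The procedure has two phases.

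\textbf{Phase 1: high-connectivity phase.} Suppose the current graph has minimum cut at least $\kappa\log m$ within each component. Then by \cref{thm:introConnected2}, a sample $T$ drawn from $Y$ keeps $E\setminus T$ spanning with probability $1-1/\poly(m)$. In particular, some seed yields an independent set $T$, and one round of queries over all seeds finds it. Because the marginals are $\le 1/2$, we cannot simply take $T$ as the basis. Instead we delete $T$ (placing it in $B$) and contract nothing.

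The hard part of this phase is making progress, since we need the remaining graph to shrink geometrically. For that I would pick, among all seeds $s$ with $\mathrm{Ind}(T_s)=1$, one with $|T_s|$ large. To guarantee that such a seed exists, the marginals should be close to $1/2$, or at least the expected size should be $\Omega(m)$. A Markov/averaging argument over the seed distribution then shows that some good seed removes a constant fraction of edges. So I would slightly strengthen the statement of \cref{thm:introConnected2} to have expected sample size $\ge m/4$. This is plausible from its construction.

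\textbf{Phase 2: low-connectivity phase.} When some component has a cut of size $<\kappa\log m$, sampling fails, so we handle small cuts separately. Following \cite{khanna2025optimal}, edges lying in small cuts are "forced": in the cographic matroid, the graph $G\setminus B$ must keep at least one edge of every cut. The approach is to identify, within $O(\log\log m)$ rounds, a set of low-connectivity edges whose removal from consideration leaves components with high min-cut. Concretely, an edge $e$ can be tested for membership in a cut of size $\le \kappa\log m$ by querying all subsets of size $O(\log m)$ containing it. That is $m^{O(\log m)}$ queries, which is too many, so a cheaper method is required. Here I would contract an appropriate structure instead. Edges whose removal together with a few others disconnects the graph are exactly those in small cuts. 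We keep these edges (they go into the spanning forest side) and recurse on the contracted components, each of which has min-cut $\ge\kappa\log m$. Identifying them cheaply can again use samples from $Y$: an edge is in a small cut roughly when it is the unique surviving edge across some cut in many samples. The $O(\log\log m)$ overhead in rounds arises from repeating this amplification.

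\textbf{Main obstacle and round count.} I expect the main obstacle to be this decomposition step. It must separate forced small-cut structure from well-connected pieces in $O(\log\log m)$ rounds with $m^{O(\log\log m)}$ queries, while ensuring the edge count falls by a constant factor per phase. Granting it, each phase removes a constant fraction of the remaining undecided edges. That gives $O(\log m)$ phases of $O(\log\log m)$ rounds each, for the claimed $O(\log m\log\log m)$ rounds. Maximality of the final $B$ follows because every undecided edge is eventually either added to $B$ or certified as the last surviving edge of some cut.
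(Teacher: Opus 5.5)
Your Phase 1 is essentially the paper's \cref{lem:deterministicQueryCographic}, but you have the roles reversed. \cref{thm:introConnected2} says the $Y$-sample itself is connected, so the independent set is its \emph{complement}, and the complement has marginals $\ge 1/2$. Markov then already gives a seed whose complement is independent and has at least $m/10$ edges, so no strengthening is needed.

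The genuine gap is Phase 2. You explicitly leave it as ``granting it'', yet it is the heart of the theorem, and the mechanism you do sketch fails. Putting every edge that lies in a small cut on the spanning-forest side (contracting it) does not preserve rank, because those edges need not be acyclic. In a cycle $C_n$ every edge lies in a $2$-edge cut, so your rule would leave $B=\emptyset$, although the cographic rank is $1$. Testing single edges for small-cut membership, or looking for a ``unique surviving edge across some cut'', does not identify the cuts themselves, and you give no round or query bound for it. Also, after Phase 1 removes $T$ from the graph the minimum cut can fall back to $1$, so this step must be redone from scratch in every outer iteration.

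The missing idea is the paper's loop that raises the minimum cut. Let $\ell$ be the current minimum cut.

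(i) There is an explicit distribution with seed length $O(\log m\log\log m)$ under which every cut $C$ with $|C|\in[\ell,1.01\ell]$ is, with positive probability, the \emph{unique} cut contained in the sample (\cref{thm:unique-survival-from-dnfs}). It is built by complementing an OR, over $O(\log n)$ independent rounds, of ANDs of low-bias bits. The analysis uses $|C'\setminus C|\ge 0.2\ell$ for every other cut $C'$ (\cref{clm:removeCutNewCutSize}). It then fools the read-once DNFs $\phi_C$ and $\phi_v\vee\phi_C$, so that, conditioned on $C$ being avoided, every contracted vertex still finds a neighbor with constant probability. For $\ell\le 100$ one simply enumerates small edge sets.

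(ii) Running \cref{alg:detectSingleCycle} on every sample recovers every such $C$ in one round.

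(iii) Only the \emph{largest-index} edge of each recovered cut is deleted from the ground set (contracted in the graph). By \cref{lem:deleteCycles} this preserves rank, and it destroys every cut of size $\le 1.01\ell$.

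The minimum cut therefore grows by a factor $1.01$ per round and exceeds $\kappa\log m$ after $O(\log\log m)$ rounds. This geometric growth, not amplification, is the source of the $\log\log m$ factor. After that, Phase 1 removes at least $|E|/10$ elements, which gives $O(\log m)$ outer iterations.
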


Note that, prior to this work, there was no non-trivial uniform, deterministic algorithm for basis finding in cographic matroids. The best algorithm was only known to use $O(\sqrt{m})$ rounds and $\mathrm{poly}(m)$ queries per round - due to \cite{KUW85} for \emph{arbitrary matroids} (thus including cographic ones as a special case). We view \cref{thm:cographicMain} as further evidence that one can likely obtain $\mathrm{polylog}(m)$ rounds and $\mathrm{poly}(m)$ queries, though we leave this as an open question.

As we elaborate upon in the next section, both \cref{thm:graphicMain} and \cref{thm:cographicMain} rely on a new algorithmic framework for finding bases of matroids which, unlike the algorithms of \cite{khanna2025optimal}, uses both deletion and contraction of elements. However, to remove the randomness in this algorithmic framework, our results rely on the concrete derandomization results for basic graph sampling tasks that we discussed above.

\subsection{Technical Overview}\label{sec:techOverview}

In this overview, we discuss some of the intuition that underlies both our derandomized graph sampling results as well as our improved explicit deterministic matroid algorithms. 

\subsubsection{Derandomizing Graph Sampling}

We start by providing intuition for \cref{thm:introConnected}. Before doing so, we briefly recap the discussion in \cref{sec:graphSamplingDiscussion}. In this setting, we are given a graph $G = (V, E)$ with minimum cut $\lambda$ of size $\geq \kappa \log(m)$ for some constant $\kappa$. Our goal is to subsample the edges of this graph at rate $1/2$ using a \emph{bounded-independence} distribution while still ensuring that the resulting sampled graph is connected with high probability. As mentioned in \cref{sec:graphSamplingDiscussion} however, a simple union bound over all of the cuts in the graph \emph{does not} suffice for arguing connectivity when using bounded-independence distributions. Indeed, there are $2^{n-1}$ cuts in a graph on $n$ vertices, and thus any union bound requires that most such cuts have at least one surviving edge with probability $1 - \Omega\left(2^{-n}\right)$; a probability which is too close to $1$ to be possible under $k$-wise independent sampling with $k = o(n)$. 

For comparison, \cite{alon2008k} and \cite{Doron2020SpectralSV} have studied bounded-independence sampling for small $k$ (generally, $k = O(\log(n))$) and showed that in these settings, many spectral properties of graphs which are good expanders can be preserved. Here, spectral properties of a graph $G = (V, E)$ refer to properties of the graph's Laplacian $L_G = \sum_{e= (u,v) \in E} w_e \cdot \chi_e \chi_e^T$, where $\chi_e \in \mathbb{R}^n$ is the vector such that $\chi_e = \mathbf{1}_u - \mathbf{1}_v$ when $e = (u,v)$. In this direction, \cite{Doron2020SpectralSV} provides a more general result than \cite{alon2008k}: indeed, it follows from \cite{Doron2020SpectralSV} that if the \emph{leverage scores} in a graph $G$ are all smaller than $\approx \frac{\epsilon^2}{2 \log(n)}$, then $O(\log(n))$-wise independent sampling with marginals $1/2$ yields a \emph{spectral sparsifier} of the graph $G$. In this context, the leverage score of an edge $e = (u,v)$ is defined as $\ell(u,v) = w_e \cdot \chi_e^T L_G^{\dagger} \chi_e$ and provides a measure of how important the edge $(u,v)$ is for preserving connectivity between $u$ and $v$. A spectral sparsifier of a graph $G$ is simply a subgraph $\widetilde{G}$ such that $(1 - \epsilon)L_G \preceq   L_{\widetilde{G}} \preceq (1 + \epsilon)L_G$, where $\preceq$ is the Loewner order on PSD matrices ($A \preceq B$ if $x^TAx \leq x^TBx$ for all $x$). If $\widetilde{G}$ is a spectral sparsifier of $G$, then this implies that for every $x \in \{0,1\}^n$, $x^T L_{\widetilde{G}}x \in (1 \pm \epsilon) x^T L_G x$, which means that $x^T L_{\widetilde{G}}x = 0$ if and only if $x^T L_G x = 0$. This means that $\widetilde{G}$ and $G$ will have the same number of connected components (since these equal the multiplicity of $0$ as an eigenvalue of the Laplacian), and in fact the connected components will be identical.

This provides a natural first attempt towards showing that $O(\log(m))$-wise independent sampling in our graph $G$ preserves connectivity. I.e., can we try to argue the \emph{stronger} property that the resulting sampled edges constitute a \emph{spectral sparsifier} of $G$?
Unfortunately, this turns out not to be the case. This is because a graph having a large minimum cut is very strongly \emph{not} sufficient for the graph to have small leverage scores (see \cite{FHHP11} for more discussion). Indeed, even if we tried sampling the edges of $G$ \emph{uniformly} at random, the resulting graph would not be guaranteed to be a spectral sparsifier with high probability. 

To overcome this, we rely on a key observation: the graph $G$ that we are working with so far is \emph{unweighted}. But, edge weights \emph{can} alter the behavior of the graph's Laplacian, and in turn, the leverage scores of the edges. Thus, one may wonder: is it possible to re-weight the edges of the graph such that all the leverage scores become smaller? Indeed, one of our key contributions is to show that this is true, and in doing so, we provide a new connection between leverage scores and a graph's minimum cut:

\begin{theorem}\label{thm:levScoreIntro}
Given an unweighted graph $G = (V, E)$ with minimum cut $c$, there exists a weighting $w: E\to \R_{\geq 0}$ such that for the weighted graph $G' = (V, E, w)$ and for each $e \in E$, the leverage score of $e$ in $G'$ is $\ell(e) \leq O(1/c)$.

Moreover, the converse is also true: if there exists a weighting such that all leverage scores are at most $1/c$, the minimal cut of the original graph is $\geq c$.
\end{theorem}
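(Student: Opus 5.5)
The plan is to pass through the probabilistic interpretation of leverage scores. For a connected weighted graph $G'=(V,E,w)$, Kirchhoff's theorem gives
\[
\ell(e) \;=\; w_e\,\chi_e^T L_{G'}^{\dagger}\chi_e \;=\; \Pr_{T\sim\mu_w}[e\in T],
\]
where $\mu_w$ is the weighted uniform spanning tree distribution, $\mu_w(T)\propto\prod_{f\in T}w_f$. Edges of weight $0$ get leverage score $0$. So the forward direction reduces to one task: find weights whose spanning tree distribution has all edge marginals $O(1/c)$. The converse reduces to the fact that every spanning tree crosses every cut.

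\emph{Forward direction.} I would carry this out in four steps.
\begin{enumerate}
\item Use the Nash-Williams--Tutte theorem. A graph with minimum cut $c$ contains $k=\lfloor c/2\rfloor$ edge-disjoint spanning trees $T_1,\dots,T_k$.
\item Set $z^*=\frac1k\sum_i \mathbf 1_{T_i}$. This point lies in the spanning tree polytope of $G$, and $z^*_e\le 1/k=O(1/c)$ for every edge.
\item Restrict attention to the subgraph $H=\bigcup_i T_i$. Edges outside $H$ get weight $0$, so $H$ is connected and carries all the weight. Perturb $z^*$ slightly toward the barycenter of the spanning tree polytope of $H$, to get a point $z$ in the relative interior with $z_e\le 2/k$.
\item Invoke the maximum-entropy theorem for spanning tree distributions (Asadpour--Goemans--M\k{a}dry--Oveis Gharan--Saberi). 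For any $z$ in the relative interior of the spanning tree polytope there are weights $\gamma_e>0$ such that $\mu_\gamma$ has marginals exactly $z_e$. Take $w_e=\gamma_e$ on $H$ and $w_e=0$ elsewhere.
\end{enumerate}
Kirchhoff's identity then gives $\ell(e)=z_e\le 2/k=O(1/c)$ for edges of $H$, and $\ell(e)=0$ for the remaining edges.

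The main obstacle is step 4, since it is the one nontrivial existence claim. If I want to avoid citing it, I would prove it directly. Maximize the strictly concave objective $\sum_e z_e\log\gamma_e-\log\sum_T\prod_{e\in T}\gamma_e$ over $\gamma>0$. Its stationarity conditions say precisely that the marginals of $\mu_\gamma$ equal $z$. Boundedness of the objective, and hence existence of a maximizer, uses exactly that $z$ is in the relative interior, which is why the perturbation in step 3 is needed.

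\emph{Converse.} Suppose some weighting gives $\ell(e)\le 1/c$ for all $e$. The sum of leverage scores equals the rank of $L_{G'}$, so $G'$ must have nonzero weights, and for the spanning tree interpretation to apply the positive-weight subgraph must span; I would handle disconnected supports by working componentwise. Fix any nonempty proper cut $\delta(S)$. Every spanning tree of the support contains at least one edge of $\delta(S)$ when the support is connected, so
\[
1 \;\le\; \mathbb E_{T\sim\mu_w}\bigl[|T\cap\delta(S)|\bigr] \;=\; \sum_{e\in\delta(S)}\ell(e) \;\le\; \frac{|\delta(S)|}{c}.
\]
Hence $|\delta(S)|\ge c$, so the minimum cut is at least $c$. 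If the support were disconnected, the support itself would exhibit a cut containing only zero-weight edges. I would have to rule this out using the intended normalization, namely that the weighted graph has the same connectivity as $G$. I would state that assumption explicitly, since without it the converse is vacuous.
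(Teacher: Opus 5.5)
Your proof is correct. The converse is the paper's own argument: every spanning tree crosses every cut, and tree marginals equal leverage scores. The forward direction, however, takes a genuinely different and much shorter route.

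For that direction the paper never passes through spanning trees. It repeatedly applies the Alev et al.\ clustering into pieces of effective-resistance diameter $\alpha|V_i|/|E_i|\le 2\alpha/c$. It gives the edges inside level-$i$ clusters weight $\Delta^{-i}$ with $\Delta=\Theta(n\log m)$, contracts, and bounds diameters across levels with a flow-lifting lemma. This yields $\ell(e)\le 4\alpha/c$ for the unspecified clustering constant $\alpha$, with strictly positive weights and $w_{\max}/w_{\min}=m^{O(\log m)}$.

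You instead use Kirchhoff's identity to reduce the question to realizing a low-congestion point of the spanning tree polytope. A Nash-Williams--Tutte packing supplies that point, and the max-entropy theorem realizes it as a weighted tree distribution. What this buys:
\begin{itemize}
\item An explicit constant: $1/\lfloor c/2\rfloor+\epsilon$ if you perturb by less than $1/k$.
\item More conceptually, the exact answer. Every positive weighting yields a leverage vector in the spanning tree polytope, so $\inf_w\max_e\ell(e)=1/\tau(G)$. Here $\tau(G)$ is the fractional tree-packing number, which by Nash-Williams--Tutte equals the minimum over partitions of $V$ into $r\ge 2$ parts of (number of crossing edges)$/(r-1)$. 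Since $c/2\le\tau(G)\le c$, both directions of the theorem follow from this single min-max relation.
\end{itemize}
What you give up:
\begin{itemize}
\item Positivity of all weights. This is harmless here, since the statement allows $w\ge 0$, and fixable by perturbing inside the polytope of $G$ rather than $H$.
\item Any control of $w_{\max}/w_{\min}$, as asserted in \cref{thm:lev-scores-min-cut}.
\end{itemize}
Neither matters for the sampling application, where the weights only enter the analysis.

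Two small slips:
\begin{itemize}
\item The max-entropy objective is concave but not strictly concave: it is invariant under adding the same constant to every $\log\gamma_e$. It is also always bounded above by $0$. What the relative-interior hypothesis actually secures is that the supremum is attained.
\item With zero weights allowed, the converse is false rather than vacuous: take $w\equiv 0$, which your rank identity does not exclude. The right hypothesis, tacit in the paper as well, is that $G$ is connected and $w>0$.
\end{itemize}
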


With this theorem in hand, we are then immediately able to show that connectivity is preserved whenever our graph $G$ has minimum cut $\geq \kappa \log(m)$: indeed, given the graph $G = (V, E)$, we implicitly analyze the \emph{re-weighted} graph $\hat{G}$. On $\hat{G}$, we sample each edge using our $O(\log(m))$-wise independent distribution, and are guaranteed via \cite{Doron2020SpectralSV} that the resulting sampled edges are a spectral sparsifier of $\hat{G}$, which in particular means that the sampled edges form a connected graph. Then, one must only observe that the sampling procedure on $G$ and $\hat{G}$ is identical, as both have the same set of edges.

Proving \cref{thm:levScoreIntro} is more subtle; our proof relies on a ``bounded effective resistance diameter'' decomposition theorem of \cite{DBLP:conf/innovations/AlevALG18}, along with a careful, recursive weight assignment scheme. We omit the details from the technical overview for brevity, and direct the reader to \cref{sec:leverageScoreSec} for more details.

In a separate direction, our proof of \cref{thm:introCycleFree} foregoes a spectral argument entirely. Indeed, in this setting where we must argue \emph{cycle-freeness}, we instead show that the existence of \emph{any cycles} in the graph can instead be modeled via the existence of \emph{short paths} in the graph. This allows us to transition from an event space of nearly-exponential size (i.e., the set of all cycles) to instead work with an event space of only polynomial size (pairs of vertices with short paths between them). We present this argument in \cref{sec:cycleFreeness}.

The proof \cref{thm:introConnected2} relies on different techniques. We construct our distribution $Y$ on $\{0, 1\}^m$ by combining $O(\log m)$ independent copies of any distribution $Y_0$ on $\{0, 1\}^{m\times \log \log m}$ that fools read-once DNF formulas on $\widetilde{O}(\log m)$ (out of the $m$ possible) variables. By the results of De, Etesami, Trevisan, and Tulsiani \cite{DETT10}, such a distribution $Y_0$ can be sampled using $O(\log \log m)$ random bits, meaning that we can generate $Y$ using $O(\log m\cdot \log\log m)$ random bits. Interestingly, the way we combine the copies of $Y_0$ to set $Y$ is also using a read-once DNF formula, namely the Tribes function.

Together, these results provide a robust toolkit for reasoning about graph properties under bounded-independence sampling when the underlying graph is not a good expander. As we shall see below, this has direct applications to derandomized matroid basis computation.

\subsubsection{Derandomized Matroid Basis Finding}

Once we have our improved derandomized graph sampling results, there are still barriers towards implementing better explicit parallel basis finding algorithms. To illustrate these barriers, we first revisit the algorithm of \cite{khanna2025optimal}.

Indeed, let us consider the setting of graphic matroids; recall that here there is an underlying graph $G = (V, E)$, and the goal of the algorithm is to find a \emph{spanning forest} of $G$. The algorithms in this setting only have access to an \emph{independence oracle}, meaning that the algorithm can query a set $S \subseteq E$ of edges, and the oracle reports whether the set $S$ has any cycles. The intuition for the algorithm of \cite{khanna2025optimal} is that each round of the algorithm \emph{deletes} more edges from the graph without altering the connected components of the graph. Then, after $O(\log(n))$ many rounds of computation, \cite{khanna2025optimal} shows that the graph has no cycles remaining, and thus the leftover edges must be a spanning forest of the original graph $G$. To implement this argument more carefully, \cite{khanna2025optimal} relies on two key invariants:
\begin{enumerate}
    \item After $i$ rounds of the algorithm, the remaining edges $E^{(i)} \subseteq E$ induce the same connected components as $E$.
    \item After $i$ rounds of the algorithm, there are no cycles in $E^{(i)}$ of length $\leq 1.01^{i}$.
\end{enumerate}
Clearly then, one can see that after $O(\log(n))$ rounds, the remaining edges constitute a spanning forest. 

To actually ensure these invariants hold, \cite{khanna2025optimal} relies on random sampling of edges. Indeed, in an iteration $i$ where the minimum cycle length in $E^{(i)}$ is $\lambda = 1.01^{i}$, \cite{khanna2025optimal} shows that \emph{uniformly random sampling} of edges can be used to isolate any near-minimum length cycle. Formally, \cite{khanna2025optimal} shows:
\begin{lemma}[\cite{khanna2025optimal}]
    Let $G = (V, E)$ be a graph with $m$ edges and minimum cycle length $\lambda$. Let $C \subseteq E$ be a cycle in $G$ of length $\leq 1.01 \lambda$. Now, let $\widetilde{G}$ be the result of uniformly randomly sampling the edges of $G$ at rate $p = \frac{1}{m^{100 / \lambda}}$. Then, $C \subseteq E$ is the \emph{unique} surviving cycle in $\widetilde{G}$ with probability at least $\frac{1}{\mathrm{poly}(m)}$.
\end{lemma}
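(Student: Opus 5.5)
The plan has two parts: a lower bound on the probability that $C$ survives, and an upper bound on the probability that any other cycle survives given that $C$ does. Write $\ell = |C| \le 1.01\lambda$ and $p = m^{-100/\lambda}$.

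\textbf{Survival of $C$.} Under independent sampling,
\[
\Pr[C \subseteq \widetilde{G}] = p^{\ell} \ge m^{-101}.
\]
This is $1/\mathrm{poly}(m)$.

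\textbf{Conditioning on $C$.} Condition on the event $C \subseteq \widetilde{G}$. By independence, the edges outside $C$ remain independently sampled at rate $p$. Any other cycle $C' \neq C$ survives only if its edges outside $C$, namely $C' \setminus C$, all survive. So it suffices to show that
\[
\sum_{C' \neq C} p^{|C' \setminus C|} \le 1/2.
\]
Then $\Pr[C \text{ is the unique surviving cycle}] \ge \tfrac12 p^{\ell}$.

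\textbf{Reduction to counting.} Group the cycles $C' \neq C$ by $t = |C' \setminus C| \ge 1$. For each $t$, I need to bound the number $N_t$ of cycles with exactly $t$ edges outside $C$.

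The key structural fact is the following. Since $C' \neq C$, the set $C' \setminus C$ is nonempty. The symmetric difference $C \triangle C'$ lies in the cycle space, so it contains a cycle, which has length at least $\lambda$. One can then argue that $C' \setminus C$ decomposes into $r$ paths ("ears") whose endpoints lie on $C$. Each ear, together with the shorter arc of $C$ between its endpoints, forms a cycle of length at least $\lambda$. That arc has length at most $\ell/2 \le 0.505\lambda$, so each ear has length at least $0.495\lambda$. Hence $r \le t/(0.495\lambda)$.

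\textbf{Counting ears.} I would count the configurations as follows. First choose the at most $2r$ endpoints on $C$, which gives at most $\ell^{2r} \le m^{2r}$ choices. Then bound the number of paths of length at most $\alpha\lambda$ between two fixed vertices. For this I would use the cycle-counting bound of Subramaniam cited earlier: the number of cycles of length at most $\alpha\lambda$ is at most $n^{2\alpha}$. I expect the same short-cycle/girth argument to give at most $m^{O(\alpha)}$ such paths. This yields
\[
N_t \le m^{O(t/\lambda)},
\]
where the constant in the $O(\cdot)$ is absolute and independent of the constant $100$ in the sampling rate. Therefore
\[
N_t\, p^{t} \le m^{O(t/\lambda) - 100t/\lambda},
\]
which is at most $m^{-t/\lambda}$-small per unit of $t$ once $100$ dominates the hidden constant. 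Summing over $t \ge 1$ gives a geometric series, bounded by $1/2$ provided $m^{-\Omega(1/\lambda)}$ is bounded away from $1$.

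\textbf{Main obstacle.} The hardest step is making the counting rigorous: establishing the path-counting bound $m^{O(\alpha)}$ for paths whose length is a multiple of $\lambda$, with a constant smaller than $100$. A second issue is the regime $\lambda \gtrsim \log m$, where $m^{-1/\lambda}$ is close to $1$ and the geometric sum is not obviously small. There I would adjust the argument by bounding the $t \ge \lambda$ terms directly via $m^{-\Theta(t/\lambda)}$ together with $r \ge 1$. If that fails, I would use $p = m^{-100/\lambda}$ together with the fact that every ear costs at least $m^{-49}$ in probability, while the endpoint and path choices for that ear cost at most $m^{O(1)}$. This per-ear accounting makes the total sum at most $\sum_{r \ge 1} m^{-\Omega(r)} \le 1/2$.
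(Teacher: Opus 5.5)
Your argument is correct in outline, but it takes a different route from the paper. The paper only cites this lemma. Its proof of the derandomized analogue (\cref{clm:boundedLengthContract} and \cref{clm:UniqueCycleSurvival}), specialized to full independence, never counts cycles:
\begin{enumerate}
\item Conditioned on $C$ surviving, another cycle survives only if the contraction $G/C$ retains a cycle.
\item The symmetric-difference fact you mention gives $|C'\setminus C|\ge 0.2\lambda$ for every $C'\neq C$, so $G/C$ has girth at least $0.2\lambda$.
\item The argument of \cref{clm:originalSampling} then finishes. Every cycle of $G/C$ contains a path of length just under $0.1\lambda$. Between any two vertices there is at most one such path. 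Each survives with probability about $p^{0.1\lambda}=m^{-10}$, so a union bound over at most $n^2$ events suffices.
\end{enumerate}
You instead bound the full first moment $\sum_{C'\neq C}p^{|C'\setminus C|}$ over exponentially many cycles via an ear count. This gives a sharper structural fact: each ear has length at least $0.495\lambda$, rather than $0.2\lambda$ for all of $C'\setminus C$. It also leaves a lot of slack in the constant $100$. The cost is that it needs a path-counting lemma and uses full independence essentially, because the events $\{C'\setminus C\text{ survives}\}$ involve unboundedly many edges. The paper's route is chosen because its union bound is over polynomially many events, each depending on $O(\lambda)$ edges. That is exactly what lets it carry over to almost $O(\lambda)$-wise independent sampling.

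A few points to close up.
\begin{enumerate}
\item \emph{Path count.} The bound you flag as the main obstacle is elementary. Let $k=\lfloor(\lambda-1)/2\rfloor$. Two distinct paths of length at most $k$ with the same endpoints would have a nonempty even-degree symmetric difference with fewer than $\lambda$ edges, which is impossible. So a path of length $L$ is determined by its vertices at positions $0,k,2k,\dots$ together with its last vertex. Hence there are at most $n^{L/k+2}\le n^{6L/\lambda+2}$ such paths (for $\lambda\ge 3$). The additive $2$ is absorbed because every ear has length at least $0.495\lambda$.
\item \emph{Defining ears.} Cut $C'$ at every vertex of $V(C)\cap V(C')$, not only at the ends of maximal runs of non-$C$ edges. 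Otherwise an ear can touch $C$ internally, and ``ear plus shorter arc'' need not be a simple cycle. Cycles meeting $C$ in at most one vertex are not ears with two endpoints on $C$. They have $|C'\setminus C|=|C'|\ge\lambda$ and are covered by the same path count.
\item \emph{Bookkeeping.} Include the at most $4^r$ choices of which arcs of $C$ between consecutive ear endpoints lie in $C'$. Also include at most $m$ choices of length per ear.
\end{enumerate}
With these, each ear contributes at most $m^{-40}$ for large $m$. Your per-ear accounting then bounds the total by $\sum_{r\ge 1}m^{-40r}\le 1/2$ uniformly in $\lambda$, so no separate large-$\lambda$ case is needed.
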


With such a lemma in hand, \cite{khanna2025optimal} performs random sampling a large polynomial number times, ensuring that every single near-minimum length cycle is the unique surviving cycle in some sampled graph. This then enables \cite{khanna2025optimal} to recover the identities of \emph{all} near-minimum length cycles (and in so doing, also enables their removal without altering connectivity).

Our goal is to adopt this algorithmic framework, but to use bounded-independence sampling instead of truly random sampling. Ultimately, this then allows us to \emph{enumerate} all possible samples of edges, thereby obtaining explicit, deterministic results.

Unfortunately, directly using our derandomized graph sampling results \emph{does not} work inside this algorithmic framework. There are two primary reasons why:
\paragraph{1. Unique Cycle Survival} First, as currently stated, \cref{thm:introCycleFree} only governs the probability that a sampled graph is \emph{cycle-free}, \emph{not} the probability that there is a single, unique cycle which survives the bounded-independence sampling. Fortunately, \cref{thm:introCycleFree} can be modified to ensure unique cycle survival in the regime where the minimum cycle length is $\lambda = \Theta(\log(m))$. To see why, fix a single cycle $C$ of length $\leq 1.01 \lambda$. For this cycle, if one uses (almost) $k$-wise independent sampling with $k \geq 10 \lambda$, then $C$'s survival probability under bounded-independence sampling is in fact \emph{the same} as under a uniformly random distribution. Now, conditioned on $C$'s survival, we can re-invoke  \cref{thm:introCycleFree} on the remaining graph to ensure that \emph{no other cycles} survive sampling. This argument requires care, and appears in \cref{sec:uniqueSurvival}.

\paragraph{2. Large Cycle Lengths}  The second barrier is that, as stated, \cref{thm:introCycleFree} requires a minimum cycle length of at least $\kappa \log(m)$, along with sampling at rate $1/2$. Not only this, but in the previous barrier, the argument we just discussed relies on using $k$-wise (almost) independent sampling for $k \geq 10 \lambda$, where $\lambda$ is the current minimum cycle length. 

Unfortunately, the algorithmic framework of \cite{khanna2025optimal} requires a spectrum of different cycle lengths and sampling rates, starting with a minimum cycle length of $1$, and slowly increasing as edges are deleted between rounds. Thus, there are even rounds where the minimum cycle length $\lambda = \Omega(m)$, which in our above argument requires $\Omega(m)$-wise independent sampling, which is far too costly. 

To overcome this barrier, we employ a much more careful analysis than \cite{khanna2025optimal}: rather than continuing to remove short cycles until the minimum cycle length is $> m$, we instead invest $O(\log\log(m))$ rounds until the minimum cycle length becomes $\approx \kappa \log(m)$. Note that in these rounds where the minimum cycle length is $o(\log(m))$, we even require a modification to \cref{thm:introCycleFree} which shows that one can sample \emph{at smaller marginal rates} (instead of $1/2$) while still ensuring unique cycle survival. We then progressively alter these sampling rates until the minimum cycle length becomes $\approx \kappa \log(m)$.

Once the minimum cycle length becomes $\approx \kappa \log(m)$, instead of trying to enumerate short cycles in the graph, we instead \emph{directly invoke} \cref{thm:introCycleFree} to find \emph{large sets of cycle-free edges} (i.e., $\Omega(m)$ many edges without any cycles). Once we have such a set of edges, we then \emph{commit} to including these edges in our final spanning forest. Thus, from a graph on $n$ vertices, we now have collected $\Omega(m)$ of the edges we need for our ultimate spanning forest.

The key invariant now is that every $O(\log\log(m))$ rounds, we recover an $\Omega(1)$ fraction of the remaining edges we need for our spanning forest. Thus, after only $O(\log(m) \log\log(m))$ rounds, we recover a spanning forest of our starting graph.

We use the same strategy in the cographic setting. Namely, within each recursive iteration we invest $O(\log\log(m))$ rounds to eliminate small cuts, so that the minimum cut becomes $\approx \kappa \log(m)$ (to achieve this, we rely on a modification of \cref{thm:introConnected2} that allows sampling at smaller marginal rates). We then directly invoke \cref{thm:introConnected2} to find a large independent set and commit to including it in the basis, moving to the next iteration.

\subsection{Organization}

We start with preliminaries in \cref{sec:prelim}. In \cref{sec:connectedness} and \cref{sec:cycleFreeness}, we provide tight guarantees on how graph structure behaves under bounded independence sampling: \cref{sec:connectedness} shows that graphs with logarithmic min-cut remain connected under bounded-independence sampling, while \cref{sec:cycleFreeness} shows that graphs with large minimum cycle length become cycle-free. 

In \cref{sec:connectedness-from-dnfs}, we give a construction of a distribution of edges that has the same connectedness guarantees as $k$-wise independent sampling while needing only $O(\log m\cdot \log\log m)$ random bits.

In \cref{sec:framework}, present a general framework that underlies our basis finding algorithms for both graphic and cographic matroids. In \cref{sec:graphicDerand}, we use the cycle-freeness characterization to design near-optimal explicit, deterministic  parallel algorithms for finding bases of graphic matroids, and in \cref{sec:cographicDerand}, we use the aforementioned connectedness guarantee to provide explicit, deterministic parallel algorithms for finding bases of cographic matroids.

\section{Preliminaries}\label{sec:prelim}

We begin with providing tools and definitions that are necessary for the remainder of this paper.

An \emph{unweighted undirected graph} is defined as $G = (V, E)$, where $V$ is the set of vertices and $E$ is the set of edges. We always use $n, m$ to be the number of vertices and edges of a graph, respectively: $n = |V|, m = |E|$. A \emph{weighted} graph $G = (V, E, w)$ also has a function $w: E\to \mathbb{R}^+$, weighting the edges. For any subset of edges $A \subseteq E$, $w(A)$ denotes the sum of the weights of all edges in $A$. For any $v \in V$, $\delta(v)\subseteq V$ denotes the set of \emph{neighbors} of $v$ in $G$. For any subset of vertices $V' \subseteq V$, $G[V'] = (V', E', w')$ denotes the \emph{subgraph of $G$ induced by $V'$}, while $E(V')\subseteq E$ denotes the set of edges with both endpoints in $V'$. We say that a set of edges $C \subseteq E$ is a \emph{cut} in $G$ if and only if there is a set $S \subseteq V$ such that $C$ is the set of edges between $S$ and $\bar{S}$.

For any subset of vertices $S\subseteq V$, $G/S$ denotes the result of \emph{contracting} $G$ on a set $S$, i.e. identifying all the vertices of $S$. Specifically, $G/S = ((V\setminus S)\cup\{s^*\}, E')$, where $E' = \{(u, v): (u, v)\in E\text{ and } u, v \not \in S\}\cup \{(u, s^*): (u, s)\in E\text{ for some } s \in S\}$. Note that $G/S$ is a multigraph: if $u$ is connected to multiple edges in $S$, we keep all these edges as parallel edges in $G/S$.

Importantly, contracting a graph on a subset of vertices $S\subseteq V$ does not decrease its minimum cut, since it only limits the set of all cuts to those that do not cut through $S$.

We say that $f: \mathbb{N}\to \mathbb{N}$ is $\poly(n)$ if there exist $n_0, c\in \mathbb{N}$ such that $\forall n \geq n_0$, $f(n) \leq n^c$.

Finally, we use $\tilde{O}$ notation to omit polylogarithmic factors: $\tilde{O}(f(n)) = O(f(n)\cdot \poly(\log n))$.

\subsection*{Bounded-Independence Sampling}

We start by recalling the notion of a $\delta$-almost $k$-wise independent distribution:

\begin{definition}
	We say that a distribution $X = (x_1, \dots x_n)$ is \emph{$\delta$-almost $k$-wise independent} with marginals $p$ if for all subsets $S \subseteq [n], |S| \leq k$, we have
	\[
	d_{\mathrm{TV}}(U(S), X(S)) \leq \delta,
	\]
    where $d_{\mathrm{TV}}$ is the \emph{total variation distance} and $U(S)$ denotes the uniform distribution (with marginals $p$) over the set $S$. When $\delta = 0$, i.e. $X(S) = U(S)$ $\forall S: |S|\leq k$, we say that $X$ is $k$-wise independent.
\end{definition}

The work of Naor and Naor \cite{NN90} explicitly constructed such $\delta$-almost $k$-wise independent random variables of small size:

\begin{theorem}[\cite{NN90}]\label{thm:boundedIndSampling}
	There is an explicit construction of a $\delta$-almost $k$-wise independent distribution $X = (x_1, \dots x_n)$ with marginals $1/2$ and seed length $O(k + \log\log(n) + \log(1 / \delta))$.
\end{theorem}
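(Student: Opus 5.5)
The plan is the standard two-stage construction: a small-bias space composed with a linear map whose columns are $k$-wise linearly independent, followed by an XOR-lemma argument that converts small bias into closeness in total variation.

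\textbf{Step 1 (linear $k$-wise independence).} Build an explicit matrix $M \in \mathbb{F}_2^{m\times n}$ with $m = O(k\log n)$ such that every set of at most $k$ columns is linearly independent over $\mathbb{F}_2$. The parity-check matrix of a binary BCH code of designed distance $k+1$ works. Concretely, let $\ell = \lceil \log_2(n+1)\rceil$, identify $[n]$ with distinct nonzero elements $\alpha_1,\dots,\alpha_n \in \mathbb{F}_{2^\ell}$, and let column $j$ be the binary expansion of $(\alpha_j, \alpha_j^3, \alpha_j^5, \dots, \alpha_j^{k-1})$ (odd powers up to about $k$). The distance bound of BCH codes says any $k$ such columns are independent, and $m \approx (k/2)\ell = O(k\log n)$.

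\textbf{Step 2 (small bias).} Take an explicit $\epsilon$-biased distribution $y$ on $\{0,1\}^m$ with seed length $O(\log m + \log(1/\epsilon))$. A suitable choice is the Alon--Goldreich--H{\aa}stad--Peralta powering construction:
\begin{itemize}
\item pick $a,b \in \mathbb{F}_{2^r}$ uniformly;
\item set $y_i = \langle \mathrm{bin}(a^i), \mathrm{bin}(b)\rangle$ for $i \in [m]$.
\end{itemize}
For a nonzero $S\subseteq[m]$, the parity $\bigoplus_{i\in S} y_i$ equals $\langle \mathrm{bin}(p_S(a)),\mathrm{bin}(b)\rangle$ for the nonzero polynomial $p_S(a) = \sum_{i\in S} a^i$ of degree at most $m$. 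This parity is unbiased unless $p_S(a)=0$, which happens with probability at most $m/2^r$. So $r = \log m + \log(1/\epsilon)$ suffices.

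\textbf{Step 3 (composition and bias on small sets).} Output $x = y^{\top} M \in \{0,1\}^n$. For any nonempty $T\subseteq[n]$ with $|T|\le k$,
\[
\textstyle\bigoplus_{j\in T} x_j = \langle y, \sum_{j\in T} M_{\cdot j}\rangle .
\]
The vector $\sum_{j\in T} M_{\cdot j}$ is nonzero by Step 1, so this parity has bias at most $\epsilon$. Hence every projection $x(S)$ with $|S|\le k$ is an $\epsilon$-biased distribution on $\{0,1\}^{|S|}$.

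\textbf{Step 4 (XOR lemma).} By Vazirani's XOR lemma (Parseval/Cauchy--Schwarz on the Fourier coefficients), a distribution on $\{0,1\}^{s}$ in which every nonempty parity has bias at most $\epsilon$ is within statistical distance $\frac12 2^{s/2}\epsilon$ of uniform. Choosing $\epsilon = \delta\, 2^{-k/2}$ gives $d_{\mathrm{TV}}(U(S),X(S))\le\delta$ for all $|S|\le k$, with marginals $1/2$.

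\textbf{Seed length.} The seed is $O(\log m + \log(1/\epsilon)) = O(\log k + \log\log n + k + \log(1/\delta))$, which is $O(k + \log\log n + \log(1/\delta))$.

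\textbf{Main obstacle.} The delicate points are the two explicit algebraic ingredients: the BCH independence claim and the powering-construction bias bound. The $\log\log n$ term appears precisely because the small-bias space only needs to cover $m = O(k\log n)$ bits rather than $n$. Everything else is bookkeeping, and the factor $2^{k/2}$ lost in the XOR lemma is exactly absorbed by the additive $O(k)$ in the seed.
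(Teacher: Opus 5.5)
Your proposal is correct and follows the route the paper indicates. The paper takes this result from \cite{NN90} without proof, remarking that it follows from the $k$-wise $\epsilon$-biased construction of \cref{thm:low-bias-distribution} by setting $\epsilon = \delta/2^{k/2}$; that is exactly your Steps 3--4 (Vazirani's XOR lemma on each projection of size at most $k$). Your Steps 1--2 open up that black box with the standard BCH parity-check plus AGHP powering instantiation of a $k$-wise small-bias space, and your seed-length accounting $O(\log(k\log n) + k + \log(1/\delta))$ is right.
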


By ``explicit'' construction of a distribution $X$ on $\{0, 1\}^n$, we mean that there is a polynomial time algorithm $G:\{0, 1\}^r\to \{0, 1\}^n$ (where $r$ is the \emph{seed length} or the \emph{number of random bits} used to generate $X$), such that $X = G(U_r)$.

When $\delta = 0$ (exact $k$-wise independence) we have the following space bounds:

\begin{theorem}[\cite{vadhan2012pseudorandomness}]\label{thm:boundedIndSamplingKWise}
	There is an explicit construction of a $k$-wise independent distribution $X = (x_1, \dots x_n)$ with marginals $1/2$ and seed length $O(k\log(n))$.
\end{theorem}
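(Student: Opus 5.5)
We would prove this with the standard polynomial-evaluation construction over a finite field of characteristic $2$. Let $\ell = \lceil \log_2 \max(n,2) \rceil$ and $\mathbb{F} = \mathrm{GF}(2^\ell)$, so $|\mathbb{F}| \geq n$. Fix $n$ distinct field elements $\alpha_1, \dots, \alpha_n \in \mathbb{F}$. We may assume $k \leq n$; otherwise we simply output $n$ truly uniform bits, which uses $n \leq k \leq O(k \log n)$ random bits.

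\textbf{Construction.} The seed consists of $k$ uniformly random coefficients $c_0, \dots, c_{k-1} \in \mathbb{F}$. This is $k\ell = O(k\log n)$ bits. Let $f(z) = \sum_{j<k} c_j z^j$. We output $x_i = \pi(f(\alpha_i))$, where $\pi:\mathbb{F}\to\{0,1\}$ is a fixed nonzero $\mathrm{GF}(2)$-linear functional, for instance the first coordinate in the standard basis representation of $\mathbb{F}$ over $\mathrm{GF}(2)$.

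\textbf{Independence.} The argument proceeds in three steps.
\begin{enumerate}
\item Fix any distinct indices $i_1, \dots, i_k$. The map $(c_0,\dots,c_{k-1}) \mapsto (f(\alpha_{i_1}), \dots, f(\alpha_{i_k}))$ is given by a $k\times k$ Vandermonde matrix on distinct nodes. This matrix is invertible, so the map is a bijection of $\mathbb{F}^k$.
\item Hence the $k$ evaluations are independent and uniform on $\mathbb{F}$. Smaller sets $S$ with $|S|<k$ are handled by padding $S$ to size $k$ and marginalizing.
\item Since $\pi$ is a surjective $\mathrm{GF}(2)$-linear map, the preimages $\pi^{-1}(0)$ and $\pi^{-1}(1)$ have equal size $2^{\ell-1}$. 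So $\pi$ of a uniform field element is a uniform bit. Applying $\pi$ coordinatewise to independent uniform field elements therefore gives independent uniform bits. This shows that $X$ is exactly $k$-wise independent with marginals $1/2$.
\end{enumerate}

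\textbf{Explicitness.} We must compute $X$ from the seed in time $\mathrm{poly}(n)$. Field arithmetic in $\mathbb{F}$ is polynomial time once we have an irreducible polynomial of degree $\ell$ over $\mathrm{GF}(2)$. Evaluating $f$ at $n$ points then costs $O(nk)$ field operations.

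The only real point requiring care is finding that irreducible polynomial deterministically. Since $2^\ell = O(n)$, we can enumerate all $O(n)$ monic degree-$\ell$ polynomials over $\mathrm{GF}(2)$. We test each for irreducibility by trial division by all lower-degree polynomials, or by the gcd test with $z^{2^j}-z$, and return the first irreducible one. The total cost is $\mathrm{poly}(n)$.

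Alternatively, we can enlarge $\ell$ to the form $2\cdot 3^j$, which changes it by at most a constant factor. We then use the explicitly irreducible polynomial $z^{2\cdot 3^j}+z^{3^j}+1$. Either way the seed length remains $O(k\log n)$, completing the construction.
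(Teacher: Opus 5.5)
Your proposal is correct and is essentially the standard construction the paper takes from Vadhan's monograph; the paper itself gives no proof. That construction is a uniformly random polynomial of degree at most $k-1$ over $\mathrm{GF}(2^\ell)$ with $2^\ell \geq n$, evaluated at $n$ distinct points and projected to one bit, with $k$-wise independence from Vandermonde invertibility (equivalently, Lagrange interpolation), seed length $k\ell = O(k\log n)$, and your handling of the case $k>n$, of the linear-functional projection, and of finding an irreducible polynomial deterministically is all sound.
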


The following reduction allows us to construct bounded-independence distributions with arbitrarily small marginal probabilities:

\begin{algorithm}[H]
\SetKwInOut{Input}{input}
\SetKwInOut{Output}{output}
\caption{SampleLowerMarginals}\label{alg:arbMarginals}
\Input{\begin{minipage}[t]{0.85\linewidth}$n, k \in \mathbb{N}, \delta \geq 0, p$ s.t. $\log(1/p)\in\mathbb{Z}$; \newline $(y_1, \dots y_{n \log(1/p)})$ sampled from a $\delta$-almost $k \log(1/p)$-wise independent distribution $Y$ with marginals $1/2$.
\end{minipage}}

\Output{$X = (x_1, ..., x_n)$, where $x_i = \prod_{j = (i-1)\log(1/p)+1}^{i \log(1/p)} y_j$ for every $i\in [n]$.}
\end{algorithm}

\begin{lemma}\label{cor:arbMarginals}
	For any $n, k \in \mathbb{N}$, $\delta \geq 0$, $p$ such that $\log(1/p)$ is an integer, and $\delta$-almost $k \log(1/p)$-wise independent distribution $Y = (y_1, \dots y_{n \log(1/p)})$ with marginals $1/2$, the distribution $X = (x_1, \dots x_n)$ constructed using \cref{alg:arbMarginals} is $\delta$-almost $k$-wise independent with marginals $p$.
\end{lemma}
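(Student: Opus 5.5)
The plan is to reduce the statement to a single fixed set of output coordinates and then apply a data-processing argument for total variation distance. Write $\ell = \log(1/p)$, which is an integer by assumption. For each $i \in [n]$ let $B_i = \{(i-1)\ell+1, \dots, i\ell\}$ be the block of $Y$-coordinates whose product defines $x_i$. Fix any $S \subseteq [n]$ with $|S| \leq k$, and put $T = \bigcup_{i \in S} B_i$. The blocks are disjoint, so $|T| = \ell |S| \leq k\ell$. By hypothesis, $Y$ is $\delta$-almost $k\ell$-wise independent, which gives $d_{\mathrm{TV}}(Y(T), U(T)) \leq \delta$, where $U(T)$ is uniform on $\{0,1\}^T$ with marginals $1/2$.

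Next, define the deterministic map $f: \{0,1\}^T \to \{0,1\}^S$ by $f(z)_i = \prod_{j \in B_i} z_j$. By construction, $X(S) = f(Y(T))$. Total variation distance cannot increase under a deterministic map: for any event $A \subseteq \{0,1\}^S$ we have $\Pr[f(Y(T)) \in A] = \Pr[Y(T) \in f^{-1}(A)]$, and likewise for $U(T)$. Hence
\[
d_{\mathrm{TV}}\bigl(X(S), f(U(T))\bigr) \leq d_{\mathrm{TV}}\bigl(Y(T), U(T)\bigr) \leq \delta .
\]

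It remains to identify $f(U(T))$. Under $U(T)$ the bits are fully independent fair coins. The blocks $B_i$ for $i \in S$ are disjoint, so the coordinates $f(U(T))_i$ are mutually independent. Each $f(U(T))_i$ is the AND of $\ell$ independent fair bits, so it equals $1$ with probability $2^{-\ell} = p$. Thus $f(U(T))$ is exactly the product distribution on $\{0,1\}^S$ with marginals $p$, which the paper denotes $U(S)$.

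Combining the two steps gives $d_{\mathrm{TV}}(X(S), U(S)) \leq \delta$ for every $S$ with $|S| \leq k$, which is the claim. When $\delta = 0$ the same argument yields exact $k$-wise independence. Nothing here is a genuine obstacle. The only points needing care are that the blocks are disjoint, so that $|T|$ does not exceed $k\ell$ and the output coordinates are independent under the uniform distribution, and that the data-processing inequality is applied to the restricted distributions on $T$, which is exactly the level at which the hypothesis on $Y$ is stated.
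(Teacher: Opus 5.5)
Your proposal is correct and follows the same route as the paper's proof: restrict to the union $T$ of the disjoint blocks (so $|T| = |S|\log(1/p) \leq k\log(1/p)$), apply the data-processing inequality for total variation distance under the deterministic AND-of-blocks map, and identify the pushforward of the uniform distribution as the product distribution with marginals $p$. Your version is slightly more explicit than the paper's, since you restrict $f$ to $T$ and spell out why data processing holds and why the pushforward of $U(T)$ is exactly $U(S)$.
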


\begin{proof}
	We let $f:\{0,1\}^{n \log(1/p)} \rightarrow \{0,1\}^{n}$ denote the transformation $Y\mapsto X$, where $X, Y$ are defined as in \cref{alg:arbMarginals}, and we let $T_i = [(i-1)\log(1/p)+1, i \log(1/p)]$ denote these indices that $x_i$ depends on. Observe that for this transformation we have $\Pr[x_i = 1] = \Pr[\prod_{j \in T_i} y_j = 1] = (1/2)^{\log(1/p)} = p$. 
	
To see the $k$-wise independence condition, consider the uniform distribution $U'$ over $n \log(1/p)$ variables with marginals $1/2$. Under this distribution, if we perform the same grouping of the variables into groups of size $\log(1/p)$, we immediately get the independent distribution $U$ of $n$ Bernoulli trials with marginals $p$. For a set $S \subseteq [n]$, we let $S' = \bigcup_{i \in S} T_i\subseteq [n\log(1/p)]$, and importantly, $|S'| = |S| \cdot \log(1/p)$, as each $T_i$ is of size $\log(1/p)$. To conclude, we then have that 
	\[
	d_{\mathrm{TV}}(U\vert_S, X\vert_S) = d_{\mathrm{TV}}(f(U')\vert_S, f(Y)\vert_S) \leq d_{\mathrm{TV}}(U'(S'), Y(S')) \leq \delta,
	\]
	as $|S'| \leq k \log(1/p)$, and $Y$ is a $\delta$-almost $k \log(1/p)$-wise independent distribution.
\end{proof}

The support size of the distribution $X$ obtained from \cref{alg:arbMarginals} is the same as the support size of the original distribution $Y$ with marginals $1/2$. Hence, for $\delta > 0$, using \cref{thm:boundedIndSampling} with \cref{alg:arbMarginals} we get a $\delta$-almost $k$-wise independent distribution with marginals $p$ and seed length $O(k\log(1/p) + \log\log(n\log(1/p)) + \log(1 / \delta))$. Similarly, for $\delta = 0$, using \cref{thm:boundedIndSamplingKWise} with \cref{alg:arbMarginals} we get a $k$-wise independent distribution with marginals $p$ and seed length $O(k\log(1/p)\log(n\log(1/p)))$.

\bigskip

\begin{definition}
A distribution $X = (x_1, ..., x_n)$ on $\{0, 1\}^n$ is \emph{$k$-wise $\epsilon$-biased} if for all subsets $S \subseteq [n]$, $|S|\leq k$
\[
\left|\Pr\left[\bigoplus_{i\in S}x_i = 1\right] - \Pr\left[\bigoplus_{i\in S}x_i = 0\right]\right| \leq \epsilon
\]

If $k = n$, we say that $X$ is \emph{$\epsilon$-biased}.
\end{definition}

Naor and Naor gave an explicit construction of $k$-wise $\epsilon$-biased random variables (which in fact implies \cref{thm:boundedIndSampling} by setting $\epsilon = \delta/2^{k/2}$):

\begin{theorem}[\cite{NN90}]\label{thm:low-bias-distribution}
There is an explicit construction of a $k$-wise $\epsilon$-biased distribution $X = (x_1, \dots x_n)$ that can be generated using $O(\log k + \log\log n + \log(1/\epsilon))$ random bits.
\end{theorem}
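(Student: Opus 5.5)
The plan is the standard composition: a linear $k$-wise independent map $\{0,1\}^L \to \{0,1\}^n$ with $L = O(k\log n)$, fed by a small-bias distribution on $\{0,1\}^L$. The point is that a linear map sends every parity over a set $S$ with $|S|\le k$ to a \emph{nonzero} parity of the input. So it suffices to fool all parities on $L$ bits, which costs only $O(\log L + \log(1/\epsilon))$ random bits, independently of $n$.

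\textbf{Step 1 (linear expansion).} Let $t = \lceil \log(n+1)\rceil$ and pick distinct nonzero elements $\alpha_1,\dots,\alpha_n \in \mathrm{GF}(2^t)$. Writing field elements as vectors in $\{0,1\}^t$, let $M$ be the $n\times L$ matrix over $\mathrm{GF}(2)$ whose $i$-th row is $(\alpha_i, \alpha_i^3, \alpha_i^5, \dots, \alpha_i^{2\lceil k/2\rceil - 1})$. Then $L = t\lceil k/2\rceil = O(k\log n)$.

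This is the parity-check matrix of a binary BCH code of designed distance $k+1$. Its defining property is that any $k$ rows of $M$ are linearly independent over $\mathrm{GF}(2)$. The proof is the usual Vandermonde argument: the even powers are handled by the identity $\sum_i \alpha_i^{2j} = (\sum_i \alpha_i^j)^2$ in characteristic $2$. I will take this as the classical BCH bound.

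We then set $x = Mz$ for $z\in\{0,1\}^L$. For any nonempty $S\subseteq[n]$ with $|S|\le k$,
\[
\bigoplus_{i\in S} x_i = \langle v_S, z\rangle, \qquad v_S = \textstyle\sum_{i\in S} M_i,
\]
where $M_i$ is the $i$-th row of $M$. By the independence of any $k$ rows, $v_S\neq 0$.

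\textbf{Step 2 (small-bias seed).} Let $z$ be drawn from an $\epsilon$-biased distribution on $\{0,1\}^L$. For this I will use the powering construction of Alon, Goldreich, H{\aa}stad and Peralta.
\begin{itemize}
\item Let $s = \lceil \log(L/\epsilon)\rceil$ and choose $a,b$ uniformly from $\mathrm{GF}(2^s)$, which costs $2s$ random bits.
\item Set $z_j = \langle a^j, b\rangle$ for $j=1,\dots,L$, where $\langle\cdot,\cdot\rangle$ is the inner product of the bit representations.
\item For a nonzero $v$, we have $\langle v,z\rangle = \langle p_v(a), b\rangle$ with $p_v(X)=\sum_j v_j X^j$, a nonzero polynomial of degree at most $L$.
\item If $p_v(a)\neq 0$, then $\langle p_v(a), b\rangle$ is an unbiased bit over the choice of $b$.
\item $p_v(a)=0$ for at most $L$ values of $a$, which happens with probability at most $L/2^s\le \epsilon$.
\end{itemize}
Hence the bias is at most $\epsilon$.

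\textbf{Step 3 (combine).} By Steps 1 and 2, for every nonempty $S$ with $|S|\le k$, the bias of $\bigoplus_{i\in S}x_i$ equals the bias of $\langle v_S,z\rangle$ with $v_S\neq 0$, which is at most $\epsilon$. The seed length is $O(\log(L/\epsilon)) = O(\log k + \log\log n + \log(1/\epsilon))$. Explicitness holds because arithmetic in $\mathrm{GF}(2^t)$ and $\mathrm{GF}(2^s)$ takes polynomial time once an irreducible polynomial is found deterministically (e.g.\ by Shoup's algorithm, or by using the explicit family $X^{2\cdot 3^\ell}+X^{3^\ell}+1$).

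I expect the only genuinely delicate step to be Step 1, namely the linear independence of any $k$ rows of $M$. If needed, I would prove it directly: suppose some set $S$ with $|S|\le k$ has $\sum_{i\in S}\alpha_i^j=0$ for all odd $j<k$. Squaring extends this to all $j\le k$, and the resulting Vandermonde system on the distinct $\alpha_i$ forces $S=\emptyset$.
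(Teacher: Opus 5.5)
Your proof is correct. The paper gives no proof of this statement; it imports it as a black box from Naor--Naor. Your argument is essentially the reduction used there. A $\mathrm{GF}(2)$-linear map $x = Mz$ whose every $k$ rows are linearly independent (the BCH parity-check matrix, so $L = O(k\log n)$) turns each nontrivial parity of at most $k$ output bits into a nontrivial parity of $z$. It then suffices to feed in an $\epsilon$-biased string of length $L$.

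The one genuine difference is the inner small-bias generator. You use the Alon--Goldreich--H{\aa}stad--Peralta powering construction, whose analysis is a root-counting argument with an explicit seed of $2\lceil \log(L/\epsilon)\rceil$ bits. Naor and Naor's original generator instead amplifies a constant-bias construction via a random walk on an expander, reaching the same $O(\log L + \log(1/\epsilon))$ bound with a more involved analysis. Either choice gives $O(\log k + \log\log n + \log(1/\epsilon))$.

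One small slip is in your direct proof of the BCH property. The hypothesis should be that the power sums vanish for all odd $j \le k$, not $j < k$. For odd $k$ you need $j = k$ to get a full Vandermonde system when $|S| = k$. Your matrix does contain that power, since $2\lceil k/2\rceil - 1 = k$.
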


\section{Connectedness Under Bounded-Independence Edge Subsampling}\label{sec:connectedness}

\subsection{Electric Flow and Effective Resistance}
\begin{definition}[Laplacian Matrix]
Given an undirected weighted graph $G = (V, E, w)$ on $n$ vertices, the \emph{Laplacian matrix} $\Lap_G\in \mathbb{R}^{n\times n}$ of $G$ is defined as:
\[
\Lap_G = \D_G - \A_G,
\]
where $\D_G\in \mathbb{R}^{n\times n}$ is the diagonal matrix of weighted degrees of $G$, and $\A_G$ is the weighted adjacency matrix of $G$: $(\A_G)_{ij} = (\A_G)_{ji} = w_{ij}$.
\end{definition}

It is easy to verify that for any graph $G$, $\Lap_G$ is positive semidefinite. Indeed, for any $\vx \in \mathbb{R}^n$, the Laplacian quadratic form is: $\vx^T\Lap_G\vx = \sum_{(u, v)\in E}w(u, v)\cdot (\vx(u) - \vx(v))^2 \geq 0$ since the weights are non-negative.

Since $\Lap_G$ is symmetric, it has a basis of eigenvectors. Hence, we use $\lambda_1, \lambda_2, ..., \lambda_n$ to denote the eigenvalues of $\Lap_G$ in increasing order ($\lambda_1 \leq \lambda_2 \leq ...\leq \lambda_n$). Since $\Lap_G\vec{1}^n = 0$, $\lambda_1 = 0$ for any graph $G$. It can further be verified that $\lambda_2 = 0$ iff $G$ is not connected.

\begin{definition}[Spectral Approximation]
For $\epsilon \geq 0$, a graph $H$ is an \emph{$\epsilon$-spectral approximation} of a graph $G$ if for all $\vx \in \mathbb{R}^n$ we have that $\vx^T\Lap_H \vx \in (1\pm \epsilon)\cdot \vx^T\Lap_G\vx$. In this case we write $\Lap_H \cong_\epsilon \Lap_G$.
\end{definition}

\begin{definition}[Network Flow]\label{defn:flow}
Given a graph $G = (V, E, w)$, some external flows $\iext\in \mathbb{R}^V$, and an arbitrary orientation of edges $E^\pm$, a \emph{flow} is a function $\vf\in \R^{E^\pm}$ such that $\vf(u, v) = -\vf(v, u)$ $\forall (u, v)\in E^\pm$ and
for any $v \in V$: 
\[
    (\iext)_v + \sum_{u \in \delta(v)} \vf(u, v) = 0\tag*{(flow conservation)}
\]
\end{definition}
The flow conservation law simply implies that no flow disappears or is created at any vertex. In contrast to arbitrary network flows, an electric flow also has to obey Ohm's law:
\begin{definition}[Electric Flow]
Given a graph $G = (V, E, w)$, external flows $\iext\in \mathbb{R}^V$, and an arbitrary orientation of edges $E^\pm$, a network flow $\vf\in \R^{E^\pm}$ (as in \cref{defn:flow}) is an \emph{electric flow} if there exists a potential function $\vp \in \mathbb{R}^V$ such that $\forall (u, v)\in E$:
\[
\vf(u, v) = w(u, v)(\vp(u) - \vp(v))\tag*{(Ohm's law)}
\]
\end{definition}

An important property that distinguishes an electric flow from just an arbitrary network flow is that it minimizes the electrical energy:

\begin{definition}[Electrical Energy]
Given a graph $G = (V, E, w)$ and some network flow $\vf\in \R^{E^\pm}$ (as in \cref{defn:flow}) in this graph, the \emph{electrical energy} of $\vf$ is defined as:
\[
\mathcal{E}(\vf) = \sum_{e\in E^\pm}\frac{\vf(e)^2}{w(e)}
\]
\end{definition}

\begin{fact}
Given a graph $G = (V, E, w)$ and external currents $\iext\in \mathbb{R}^V$, an electric flow $\vf\in \R^{E^\pm}$ is the unique network flow (with the given external currents) minimizing $\mathcal{E}(\vf)$.
\end{fact}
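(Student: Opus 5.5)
The statement to prove is the Fact (Thomson's principle): among all network flows with external currents $\iext$, the electric flow is the unique minimizer of $\mathcal{E}$. The plan is the standard convexity-plus-orthogonality argument. Fix the electric flow $\vf$ with potential $\vp$. Let $\vec{g}$ be any other network flow with the same external currents. Write $\vec{g} = \vf + \vec{h}$, where $\vec{h} = \vec{g}-\vf$. Because flow conservation is linear, $\vec{h}$ is a circulation: $\sum_{u\in\delta(v)}\vec{h}(u,v)=0$ at every vertex $v$, since the $\iext$ terms cancel.

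Next, expand the energy, taking each edge once in its fixed orientation:
\[
\mathcal{E}(\vec{g}) = \mathcal{E}(\vf) + 2\sum_{e}\frac{\vf(e)\vec{h}(e)}{w(e)} + \mathcal{E}(\vec{h}).
\]
The key step is to show that the cross term vanishes. By Ohm's law, $\vf(u,v)/w(u,v) = \vp(u)-\vp(v)$, so the cross term equals $2\sum_{(u,v)}(\vp(u)-\vp(v))\vec{h}(u,v)$. Regrouping this sum by vertices gives $-2\sum_v \vp(v)\sum_{u\in\delta(v)}\vec{h}(u,v)$, up to sign convention, using the antisymmetry $\vec{h}(u,v)=-\vec{h}(v,u)$. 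Each inner sum is $0$ because $\vec{h}$ is a circulation. This summation-by-parts step is the only place any care is needed: one must keep the orientation and sign conventions consistent with \cref{defn:flow}.

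It follows that $\mathcal{E}(\vec{g}) = \mathcal{E}(\vf)+\mathcal{E}(\vec{h}) \ge \mathcal{E}(\vf)$. Since all weights are positive, equality holds iff $\vec{h}\equiv 0$. So $\vf$ is the unique minimizer. This also shows that the electric flow itself is unique whenever it exists: two electric flows would each have to have energy no larger than the other, which forces $\vec{h}=0$.

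Existence is the one remaining point, if it is wanted. The natural choice is $\vp = \Lap_G^{\dagger}\iext$. This solves $\Lap_G\vp = \iext$ up to sign convention, provided $\iext$ sums to zero on each connected component, which is necessary for any flow to exist. The flow then satisfies conservation because the $v$-th entry of $\Lap_G\vp$ is exactly the net current out of $v$.
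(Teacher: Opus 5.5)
Your proof is correct, and it is the standard one. Write any competing flow as $\vf+\vec{h}$ with $\vec{h}$ a circulation. Ohm's law and summation by parts then kill the cross term, which gives $\mathcal{E}(\vf+\vec{h})=\mathcal{E}(\vf)+\mathcal{E}(\vec{h})$, and positivity of the weights gives strictness. The paper gives no proof of this statement; it quotes it as a standard background Fact, so there is no argument of theirs to compare against.

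Your decomposition is slightly stronger than what is stated. It also yields uniqueness of the electric flow, as you note. And it is exactly the inequality the paper later relies on in \cref{lemma:energy-eff-resistance-diam} and \cref{lemma:contraction-eff-resistance-bound}: any flow with the given external currents has energy at least that of the electric flow.

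Two small remarks. First, your signs are exact, not just correct up to convention. With the paper's convention, $(\Lap_G\vp)_v=\sum_{u}w(u,v)(\vp(v)-\vp(u))$ is the net outflow of the electric flow at $v$, and flow conservation forces this to equal the injected current, so $\Lap_G\vp=\iext$ on the nose. Likewise, the regrouped cross term $-2\sum_v \vp(v)\sum_{u}\vec{h}(u,v)$ is correct as written. Second, the paper applies this fact to contracted multigraphs. There the computation should be read edge by edge, summing over incident edges rather than over neighbors in $\delta(v)$. This changes nothing in your argument.
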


\begin{definition}[Effective Resistance]
Given a weighted undirected graph $G = (V, E, w)$, the \emph{effective resistance} between $u, v \in V$ is the potential difference between $u$ and $v$ induced by the unit $u$-$v$ electric flow (i.e. the unique electric flow induced by setting $\iext(u) = 1$, $\iext(v) = -1$, and $\iext(s) = 0$ for $s \not\in \{u, v\}$):
\[
\operatorname{Reff}_G(u, v) = \vp(u) - \vp(v)
\]
\end{definition}
\begin{fact}[Thomson’s Principle]
Given a graph $G = (V, E, w)$ and a unit $u$-$v$ electric flow $\vf_{uv} \in \mathbb{R}^{E^\pm}$, $\operatorname{Reff}_G(u, v) = \mathcal{E}(\vf_{uv})$.
\end{fact}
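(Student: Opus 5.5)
Thomson's Principle follows from the minimum-energy characterization of the electric flow stated just before it, together with a direct computation of the energy of the electric flow via summation by parts. Write $\vf := \vf_{uv}$ and let $\vp$ be the potential certified by Ohm's law, so that $\vf(a,b) = w(a,b)(\vp(a)-\vp(b))$ for each oriented edge $(a,b)\in E^\pm$. Substituting Ohm's law into the energy gives
\[
\mathcal{E}(\vf) = \sum_{(a,b)\in E^\pm} \frac{\vf(a,b)^2}{w(a,b)} = \sum_{(a,b)\in E^\pm} \vf(a,b)\bigl(\vp(a)-\vp(b)\bigr).
\]
Each edge appears once with an arbitrary orientation. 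The summand is invariant under reversing the orientation, because $\vf(b,a) = -\vf(a,b)$. So the choice of orientation is irrelevant.

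The second step is to regroup this sum by vertices. Expanding, each oriented edge $(a,b)$ contributes $\vp(a)\vf(a,b)$ to vertex $a$ and $\vp(b)\vf(b,a)$ to vertex $b$, using antisymmetry for the second term. Hence
\[
\mathcal{E}(\vf) = \sum_{x\in V} \vp(x) \sum_{y\in\delta(x)} \vf(x,y) = -\sum_{x\in V} \vp(x)\sum_{y\in \delta(x)} \vf(y,x).
\]
Flow conservation says $\sum_{y\in\delta(x)} \vf(y,x) = -(\iext)_x$, so $\mathcal{E}(\vf) = \sum_x \vp(x)(\iext)_x$. For the unit $u$-$v$ flow we have $(\iext)_u = 1$, $(\iext)_v=-1$, and zero elsewhere. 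This gives $\mathcal{E}(\vf) = \vp(u)-\vp(v) = \operatorname{Reff}_G(u,v)$ by definition.

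The main point requiring care is the sign and orientation bookkeeping in the regrouping step. One must check that the convention in flow conservation, which uses $\vf(u,v)$ as flow into $v$, matches the sign in Ohm's law, so that the final answer is $+(\vp(u)-\vp(v))$. With $\iext(u)=1$ meaning that the unit of flow is injected at $u$, the flow runs from high potential at $u$ to low potential at $v$, which confirms the sign is positive. In a multigraph, parallel edges are handled identically, since each is a separate term in both sums. Well-definedness of $\vp(u)-\vp(v)$ is the only other point to address. If $u$ and $v$ lie in the same component, the potential is determined up to an additive constant on that component, so the difference is unique. Uniqueness of the flow itself comes from the stated fact that the electric flow is the unique energy minimizer.
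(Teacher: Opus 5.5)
Your argument is correct. The paper gives no proof of this statement; it quotes Thomson's principle as a standard fact, so there is no authors' argument to compare against. Your derivation is the standard one. Ohm's law rewrites the energy as $\sum_{(a,b)\in E^\pm}\vf(a,b)(\vp(a)-\vp(b))$, and summation by parts turns this into $\sum_{x}\vp(x)(\iext)_x$. Your sign check against the paper's conservation convention, $\sum_{y\in\delta(x)}\vf(y,x)=-(\iext)_x$, is right, so the result is $\vp(u)-\vp(v)=\operatorname{Reff}_G(u,v)$.

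Two small remarks. First, the identity $\mathcal{E}(\vf)=\vp(u)-\vp(v)$ is purely algebraic and holds for every electric flow together with its potential. The minimum-energy fact is needed only to make ``the'' unit $u$-$v$ electric flow, and hence $\operatorname{Reff}_G(u,v)$, well defined. If $u$ and $v$ lie in different components, there is no unit $u$-$v$ flow at all: summing conservation over the component of $u$ gives $1=0$, so that case is vacuous. Conversely, combining your identity with the minimum-energy fact gives the variational form ($\operatorname{Reff}_G(u,v)$ equals the minimum energy over all unit $u$-$v$ flows). That form is what the paper actually invokes in \cref{lemma:energy-eff-resistance-diam} and \cref{lemma:contraction-eff-resistance-bound}.

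Second, you were right to read $E^\pm$ as containing each edge exactly once, matching the paper's phrase ``an arbitrary orientation.'' If both orientations were summed in $\mathcal{E}$, the same computation would yield $2\operatorname{Reff}_G(u,v)$.
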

\begin{definition}[Effective Resistance Diameter]
\emph{Effective resistance diameter} of a graph $G = (V, E, w)$ equals the maximum effective resistance among all pairs of vertices:
\[
\operatorname{\operatorname{R_{diam}}}(G) = \max_{u, v \in V}\operatorname{Reff}_G
(u, v)\]
\end{definition}

\begin{definition}[Leverage Score]
Given a graph $G = (V, E, w)$, the \emph{leverage score} of an edge $(u, v) \in E$, denoted as $\ell(u, v)$, is the ratio between the effective resistance between its endpoints $\operatorname{Reff}_G(u, v)$ and the actual resistance of an edge $1/w(u, v)$:
\[
\ell(u, v) = w(u, v)\cdot \operatorname{Reff}_G(u, v)
\]
\end{definition}
\begin{fact}
Consider a connected graph $G = (V, E, w)$. If we sample a spanning tree of $G$ so that the probability of sampling a given tree is proportional to the product of the weights of edges in it, then the probability that a particular edge $(u, v)\in E$ is included in the sampled tree equals $\ell(u, v)$. 
\end{fact}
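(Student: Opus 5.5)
The plan is to go through the weighted Matrix-Tree Theorem and the matrix determinant lemma, and to bound the complementary event $\Pr[e\notin T]$ rather than $\Pr[e \in T]$ directly. For a weighted graph $H$ let $\tau(H)=\sum_{T}\prod_{f\in T} w(f)$, summing over spanning trees $T$ of $H$. Fix a vertex $r\in V$. Kirchhoff's weighted Matrix-Tree Theorem gives $\tau(H)=\det(\Lap_H^{(r)})$, where $\Lap_H^{(r)}$ is the reduced Laplacian obtained by deleting the row and column of $r$. I would take this as a standard fact; it follows from Cauchy--Binet applied to the factorization $\Lap_H=\mathbf{B}^T\mathbf{W}\mathbf{B}$ with $\mathbf{B}$ the signed edge-vertex incidence matrix. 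Since $G$ is connected, $\tau(G)>0$ and $\Lap_G^{(r)}$ is invertible.

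Under the weighted spanning tree distribution, the trees avoiding $e=(u,v)$ are exactly the spanning trees of $G-e$. Hence
\[
\Pr[e\notin T]=\frac{\tau(G-e)}{\tau(G)}=\frac{\det(\Lap_{G-e}^{(r)})}{\det(\Lap_G^{(r)})}.
\]
Write $\chi_e^{(r)}$ for $\chi_e=\mathbf{1}_u-\mathbf{1}_v$ with coordinate $r$ removed. Then $\Lap_{G-e}^{(r)}=\Lap_G^{(r)}-w(u,v)\,\chi_e^{(r)}(\chi_e^{(r)})^T$. The matrix determinant lemma gives
\[
\det(\Lap_{G-e}^{(r)})=\det(\Lap_G^{(r)})\bigl(1-w(u,v)\,(\chi_e^{(r)})^T(\Lap_G^{(r)})^{-1}\chi_e^{(r)}\bigr),
\]
and therefore $\Pr[e\in T]=w(u,v)\,(\chi_e^{(r)})^T(\Lap_G^{(r)})^{-1}\chi_e^{(r)}$. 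This identity also holds when $e$ is a bridge: then $G-e$ is disconnected, both sides of the determinant identity vanish, and the probability is $1$.

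It remains to identify $(\chi_e^{(r)})^T(\Lap_G^{(r)})^{-1}\chi_e^{(r)}$ with $\operatorname{Reff}_G(u,v)$. This is the step needing the most care, though it is not deep. Take the unit $u$-$v$ electric flow, with potential $\vp$ normalized so that $\vp(r)=0$ (potentials are only defined up to an additive constant). Flow conservation together with Ohm's law says exactly $\Lap_G\vp=\iext=\chi_e$. Deleting the equation at $r$ and the now-irrelevant coordinate $\vp(r)=0$ gives $\Lap_G^{(r)}\vp^{(r)}=\chi_e^{(r)}$, so $\vp^{(r)}=(\Lap_G^{(r)})^{-1}\chi_e^{(r)}$. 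Therefore
\[
(\chi_e^{(r)})^T(\Lap_G^{(r)})^{-1}\chi_e^{(r)}=\chi_e^T\vp=\vp(u)-\vp(v)=\operatorname{Reff}_G(u,v).
\]
Here the coordinate at $r$ contributes $0$ whether or not $r\in\{u,v\}$. Combining the displays gives $\Pr[e\in T]=w(u,v)\operatorname{Reff}_G(u,v)=\ell(u,v)$.
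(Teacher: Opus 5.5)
The paper gives no proof of this statement. It is quoted as a standard fact (Kirchhoff's theorem) and used only in the converse direction of \cref{thm:lev-scores-min-cut}, so there is no argument of the paper's to compare yours against.

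Your derivation is correct and complete. The rank-one downdate $L_{G-e}^{(r)} = L_G^{(r)} - w(u,v)\,\chi_e^{(r)}(\chi_e^{(r)})^T$, combined with the weighted Matrix-Tree Theorem and the matrix determinant lemma, gives $\Pr[e\in T] = w(u,v)\,(\chi_e^{(r)})^T(L_G^{(r)})^{-1}\chi_e^{(r)}$ exactly. Your grounding argument ($p(r)=0$) then correctly identifies this quadratic form with $\operatorname{Reff}_G(u,v)$. It also matches the paper's sign conventions: the conservation law $i_{\mathrm{ext}}(v)+\sum_{u}f(u,v)=0$ together with Ohm's law $f(u,v)=w(u,v)(p(u)-p(v))$ is precisely $L_G p = i_{\mathrm{ext}}$.

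Two minor remarks. First, the separate treatment of bridges is unnecessary. The determinant lemma only needs $L_G^{(r)}$ to be invertible, which you already obtained from connectivity and positivity of the weights. Second, the argument applies verbatim to the multigraphs produced by contraction elsewhere in the paper, since $L_G = L_{G-e} + w(e)\chi_e\chi_e^T$ holds for each individual parallel edge as well.
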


\subsection{Graph Clustering using Effective Resistance}
Alev et al. \cite{DBLP:conf/innovations/AlevALG18} show that any graph can be partitioned into components with small effective resistance diameter:
\begin{theorem}[\cite{DBLP:conf/innovations/AlevALG18}]\label{thm:graph-clustering-eff-resistance}
Given a weighted graph $G = (V, E, w)$, and a large enough parameter $\delta > 1$, there is an algorithm with time complexity $\tilde{O}\left(m \cdot n \cdot \log \left(\frac{w(E)}{\min _e w(e)}\right)\right)$ that finds a partition $V = \bigcup_{i=1}^h V_i$ satisfying
\begin{enumerate}
    \item $w\left(E-\cup_{i=1}^h E\left(V_i\right)\right) = O\left(w(E)/\delta\right)$,
    \item $\operatorname{R_{diam}}(G\left[V_i\right]) = O\left(\delta^3 \cdot \frac{|V|}{w(E)}\right)$ for all $i = 1, ..., h$.
\end{enumerate}
\end{theorem}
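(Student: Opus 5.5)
This statement is quoted from \cite{DBLP:conf/innovations/AlevALG18}, and in the paper it can simply be invoked. If I had to prove it, I would use recursive cutting driven by electric potentials. Run the following procedure on $G$. Maintain a partition, initially $\{V\}$. While some part $U$ has $\operatorname{R_{diam}}(G[U]) > R^\star := C\delta^3 |V|/w(E)$, split $U$ along a cut that is \emph{cheap relative to the weight it separates}, and recurse on both sides. Property 2 then holds when the procedure stops. The work is to show that the total weight of edges ever cut is $O(w(E)/\delta)$.

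\textbf{Cutting lemma.} Let $H=G[U]$ contain $u,v$ with $\operatorname{Reff}_H(u,v)=R>R^\star$. Take the potentials $\vp$ of the unit $u$-$v$ electric flow in $H$, shifted so that $\vp(v)=0$ and $\vp(u)=R$, and consider the level cuts $S_t=\{x:\vp(x)\ge t\}$ for $t\in(0,R)$.
\begin{itemize}
\item Every level cut carries the whole unit flow. By Ohm's law this gives $\sum_{e\in\partial S_t} w_e|\Delta\vp_e|\ge 1$.
\item By the coarea identity, $\int_0^R w(\partial S_t)\,dt=\sum_e w_e|\Delta\vp_e|$.
\item By Cauchy--Schwarz and Thomson's principle, this sum is at most $\sqrt{w(E(U))\cdot R}$.
\end{itemize}
Next I would run a ball-growing (region-growing) argument on these level sets, in the style of Garg--Vazirani--Yannakakis. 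For a level set $S_t$ on the smaller side, define its volume as $w(E(S_t))$ plus a seed term $w(E)/|V|$ per vertex. The aim is a threshold $t$ with
\[
w(\partial S_t)\le \tfrac{1}{\delta}\,\mathrm{vol}(S_t).
\]
Such a $t$ should exist: otherwise the volume grows like $e^{t/\delta}$ across a potential range of length at least $R/2$ on one side. Because $R$ exceeds $\delta^3|V|/w(E)$ by a large constant factor, this growth would push the volume past $w(E)$ plus the total seed mass, a contradiction. The extra powers of $\delta$ in $R^\star$ are there to pay for the Cauchy--Schwarz loss and the seed term.

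\textbf{Charging.} Each cut edge is charged to the weight (plus seed mass) on the side with smaller volume, at rate $1/\delta$. A given unit of weight lands on the smaller side only $O(\log)$ times, since that side's volume halves each time. The naive bound therefore loses a logarithmic factor. To remove it, I would charge only the volume lying in the ``shell'' near the cut, using the exponential-growth form of the lemma. The geometric growth of volume across shells makes the charges to any edge sum to a geometric series. The seed mass totals $w(E)$, so the total cut weight is $O(w(E)/\delta)$, which is property 1.

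\textbf{Main obstacle.} I expect the hardest step to be the interaction between recursion and resistances. Effective resistances in $G[U]$ can only increase as $U$ shrinks, because edges are removed. So a part may stop being low-diameter after its neighbourhood is cut away, and the charging must stay valid across all levels without a $\log n$ blow-up. My first attempt would be to fix the seed mass globally at $w(E)/|V|$ per vertex and redo the growth argument inside each induced subgraph. If the log factor still cannot be avoided, I would accept the slack: a $\delta$ versus $\delta^3$ trade-off, taking $\delta$ large enough, should absorb it, which matches the form of the stated bounds. The stated running time comes from one Laplacian solve per cut with $O(n)$ cuts, times a logarithmic factor in the weight range for precision.
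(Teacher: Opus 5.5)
The paper gives no proof of this statement. It imports it as a black box from \cite{DBLP:conf/innovations/AlevALG18}, so your opening sentence is exactly what the paper does. Your self-contained sketch, however, has two genuine gaps.

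\textbf{The region-growing step fails.} Your volume $w(E(S_t))+\sigma|S_t|$, with $\sigma=w(E)/|V|$, does not grow at rate $w(\partial S_t)$ as the threshold moves. An edge crossing level $t$ becomes internal only when the threshold passes its lower endpoint, which can be arbitrarily far away in potential. Garg--Vazirani--Yannakakis have $dV/dr=w(\partial B_r)$ only because their volume is length-weighted. A quick symptom: $t/\delta$ is not dimensionless, since $t$ is a resistance. Rescaling all weights by $\lambda$ leaves the theorem invariant but turns $e^{R/(2\delta)}$ into $e^{R/(2\lambda\delta)}\to 1$, so no contradiction can come out.

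Your three bullets together with $\mathrm{vol}(S_t)\ge\sigma$ only yield a good level when $R\ge\delta^2|V|^2/w(E)$. That is off by a factor of order $|V|/\delta$. The missing idea is to apply Cauchy--Schwarz on each level cut rather than globally. Since $\sum_{e\in\partial S_t}w_e|\Delta p_e|=1$, you get $w(\partial S_t)^2\le\sum_{e\in\partial S_t}w_e/|\Delta p_e|$. Integrating over $t\in[a,b]$ gives $\int_a^b w(\partial S_t)^2\,dt\le D(S_a)$, where $D(S)$ is the weight of edges of $G[U]$ touching $S$ plus $\sigma|S|$. So a potential window over which $D$ grows from $D_0$ to at most $2D_0$ has length at most $2D_0/\min_t w(\partial S_t)^2$. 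This is a quadratic growth law, not a linear one.

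\textbf{The charging cannot be rescued as you propose.} A fixed-conductance guarantee $w(\partial S)\le\mathrm{vol}(S)/\delta$ only supports an $O(w(E)\log|V|/\delta)$ bound under smaller-side charging. Nothing prevents $\Theta(\log|V|)$ nested balanced cuts of conductance $1/\delta$. This loss cannot be absorbed into $\delta^3$, because $\delta$ is a constant independent of $n$; the paper fixes it in \cref{cor:graph-clustering-eff-resistance}. The ``shell'' fix also has nothing to stand on, since the potentials are recomputed in every $G[U]$.

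What works is a volume-dependent target. Run the doubling argument above against $g(D)=c\,R^{-1/3}D^{2/3}$. If every level set had $w(\partial S_t)>g(D(S_t))$, the windows would cover only $O(c^{-2}R^{2/3}\sigma^{-1/3})$ of potential. That is less than $R/2$ once $R\sigma$ exceeds a suitable constant.
\begin{itemize}
\item Hence some level set on the appropriately chosen smaller side has $w(\partial S)/D(S)\le c\,(R\,D(S))^{-1/3}$.
\item When $R\ge C\delta^3/\sigma$, this is at most $O(\delta^{-1}(D(S)/\sigma)^{-1/3})$.
\item A unit of mass charged along successive smaller sides sees $D$ at least halve while staying $\ge\sigma$. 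Its charges therefore form a geometric series summing to $O(1/\delta)$.
\item The total cut weight is then $O(w(E)/\delta)$.
\end{itemize}
The cube root is precisely what produces the $\delta^3$ versus $1/\delta$ trade-off in the statement.

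Finally, the obstacle you single out is not where the difficulty lies. $G[U]$ depends only on $U$, so finished parts are never affected by later cuts elsewhere.
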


Making a constant $\delta$ large enough gives us the following corollary:

\begin{corollary}\label{cor:graph-clustering-eff-resistance}
There exists an absolute constant $\alpha$ such that given a weighted graph $G = (V, E, w)$, there is an algorithm with time complexity $\tilde{O}\left(m \cdot n \cdot \log \left(\frac{w(E)}{\min _e w(e)}\right)\right)$ that finds a partition $V = \bigcup_{i=1}^h V_i$ satisfying:
\begin{enumerate}
    \item $w\left(E-\cup_{i=1}^h E\left(V_i\right)\right) \leq w(E)/2$
    \item $\operatorname{R_{diam}}(G\left[V_i\right])\leq \alpha \frac{|V|}{w(E)}$ for all $i = 1, ..., h$.
\end{enumerate}
\end{corollary}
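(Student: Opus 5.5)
This follows by instantiating \cref{thm:graph-clustering-eff-resistance} with a fixed constant $\delta_0$. Take $\delta_0 > 1$ large enough that the theorem applies. Let $C_1$ and $C_2$ be the absolute constants hidden in its two $O(\cdot)$ bounds, so that the algorithm returns a partition with
\[
w\left(E-\cup_{i=1}^h E\left(V_i\right)\right) \leq C_1 \, w(E)/\delta
\quad\text{and}\quad
\operatorname{R_{diam}}(G[V_i]) \leq C_2 \, \delta^3 \cdot \frac{|V|}{w(E)}
\]
for every $\delta \geq \delta_0$.

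Now fix $\delta^* = \max(\delta_0, 2C_1)$. This is an absolute constant, since it depends on neither $G$ nor $n$, $m$, or $w$. Running the algorithm with $\delta = \delta^*$ gives a cut-edge weight of at most $C_1 w(E)/\delta^* \leq w(E)/2$, which is item 1. It also gives $\operatorname{R_{diam}}(G[V_i]) \leq C_2 (\delta^*)^3 |V|/w(E)$ for each part, so item 2 holds with $\alpha := C_2(\delta^*)^3$. The running time is exactly that of \cref{thm:graph-clustering-eff-resistance}, because $\delta$ does not appear in its time bound.

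The only point needing care is that the constants implicit in the $O(\cdot)$ notation of \cref{thm:graph-clustering-eff-resistance} must be uniform over all graphs and all admissible $\delta$, with the threshold on $\delta$ also an absolute constant. This is the standard reading of such a statement, and the corollary rests on it. Given that uniformity, choosing one $\delta^*$ satisfies both conditions at once, and there is no real obstacle.
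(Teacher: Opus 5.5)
Your proposal is correct and matches the paper's argument. The paper says only that one should fix $\delta$ to a sufficiently large absolute constant in \cref{thm:graph-clustering-eff-resistance}, so that the $O(w(E)/\delta)$ cut weight is at most $w(E)/2$ and the constant $O(\delta^3)$ factor is absorbed into $\alpha$; your explicit choices $\delta^* = \max(\delta_0, 2C_1)$ and $\alpha = C_2(\delta^*)^3$ spell out exactly this reasoning.
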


\subsection{Spectral Sparsification via Bounded Independence Sampling}

Doron et al. \cite{Doron2020SpectralSV} give an algorithm for spectral sparsification via bounded independence sampling:

\begin{algorithm}
\caption{$\operatorname{Sparsify}(G = (V, E, w), \{\tilde{R}_{ab}\}_{(a, b)\in E}, k, \epsilon, \delta)$}
\begin{enumerate}
\item Initialize $H$ to be the empty graph on $n=|V(G)|$ vertices.
\item Set $s \leftarrow \frac{18 e \log n}{\varepsilon^2} \cdot\left(\frac{n}{\delta}\right)^{2 / k}$.
\item For every edge $(a, b) \in E$, set $p_{a b} \leftarrow \min \left\{1, w_{a b} \cdot \widetilde{R}_{a b} \cdot s\right\}$
\item For every edge $(a, b) \in E$, add $(a, b)$ to $H$ with weight $w_{a b} / p_{a b}$ with probability $p_{a b}$. Do this sampling in a $k$-wise independent manner.
\item Return $H$.
\end{enumerate}
\end{algorithm}

\begin{theorem}[\cite{Doron2020SpectralSV} Theorem 3.1]\label{thm:spectral-sparsification}
Let $G = (V, E, w)$ be an undirected connected weighted graph on $n$ vertices with Laplacian $\Lap_G$ and effective resistances $R = \{R_{ab}\}_{(a, b)\in E}$. Let $0 < \epsilon < 1$, $0 < \delta < 1/2$, and let $k \leq \log n$ be an even integer. Let $H$ be an output of $\operatorname{Sparsify}(G, R, k, \epsilon, \delta)$ and let $\Lap_H$ be its Laplacian. Then, with probability at least $1 - 2\delta$ we have:
\begin{enumerate}
    \item $\Lap_H \cong_\epsilon \Lap_G$,
    \item $H$ has $O\left(\frac{1}{\delta^{1+2 / k}} \cdot \frac{\log n}{e^2} \cdot n^{1+\frac{2}{k}}\right)$ edges.
\end{enumerate}
\end{theorem}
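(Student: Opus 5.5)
Since the moment method only sees low-order joint moments, the plan is to reduce everything to the fully independent case through the $k$-th moment of the trace. Normalize by $\Lap_G^{+/2}$. For each edge $e=(a,b)$ let $\vec{A}_e = w_e \Lap_G^{+/2}\chi_e\chi_e^T\Lap_G^{+/2}$, a rank-one PSD matrix with $\|\vec{A}_e\| = \ell(e) = w_e R_{ab}$ and $\sum_e \vec{A}_e = \Pi$, the projection onto $\mathrm{im}(\Lap_G)$. Let $\xi_e$ be the sampling indicator and set
\[
\vec{Y} = \sum_{e} \vec{X}_e, \qquad \vec{X}_e = \left(\frac{\xi_e}{p_e} - 1\right)\vec{A}_e .
\]
Then $\Lap_H \cong_\epsilon \Lap_G$ is equivalent to $\|\vec{Y}\| \le \epsilon$. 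Edges with $p_e = 1$ contribute $\vec{X}_e = 0$. For the remaining edges, $p_e = \ell(e)s$, which gives $\|\vec{X}_e\| \le \ell(e)/p_e = 1/s$ and $\mathbb{E}\vec{X}_e^2 \preceq \frac{1}{p_e}\vec{A}_e^2 \preceq \frac{1}{s}\vec{A}_e$. Hence $\sum_e \mathbb{E}\vec{X}_e^2 \preceq \frac{1}{s}\Pi$.

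Since $k$ is even, Markov's inequality gives $\Pr[\|\vec{Y}\| > \epsilon] \le \epsilon^{-k}\,\mathbb{E}\operatorname{tr}(\vec{Y}^k)$. Expanding $\operatorname{tr}(\vec{Y}^k)$ as a sum over $k$-tuples of edges, each term is a polynomial in at most $k$ of the $\xi_e$'s. Its expectation is therefore identical under $k$-wise independence and under full independence, so it suffices to bound $\mathbb{E}\operatorname{tr}(\vec{Y}^k)$ for independent $\xi_e$. Here I would run the standard combinatorial moment computation. Because $\mathbb{E}\vec{X}_e = 0$, only tuples in which every edge appears at least twice survive. Group the tuples by the set partition of $[k]$ they induce, with all blocks of size $\ge 2$. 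For a partition with $r$ blocks, bound the contribution by peeling off factors: use $\|\vec{X}_e\| \le 1/s$ for the extra occurrences beyond two, and $\sum_e \mathbb{E}\vec{X}_e^2 \preceq \Pi/s$ together with $\operatorname{tr}\Pi \le n$ for the paired occurrences. This yields roughly $n\, s^{-k/2}$ per partition. The number of such partitions is at most $(k/2)^{k/2}$ up to an $e^{O(k)}$ factor, giving
\[
\mathbb{E}\operatorname{tr}(\vec{Y}^k) \le n\left(\frac{c\,k}{s}\right)^{k/2}
\]
for an absolute constant $c$ (the constant $18e$ in $s$ is chosen to absorb this).

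Plugging in $s = \frac{18e\log n}{\epsilon^2}(n/\delta)^{2/k}$ gives
\[
\Pr[\|\vec{Y}\|>\epsilon] \le n\cdot\frac{\delta}{n}\cdot\left(\frac{c\,k}{18e\log n}\right)^{k/2} \le \delta,
\]
where the last step uses $k \le \log n$. For the edge count, $\mathbb{E}|E(H)| \le \sum_e p_e \le s\sum_e \ell(e) = s(n-1)$, using only the marginals. Markov's inequality then gives $|E(H)| \le sn/\delta = O\left(\frac{\log n}{\epsilon^2}\cdot\frac{n^{1+2/k}}{\delta^{1+2/k}}\right)$ with probability at least $1-\delta$. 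A union bound gives both conclusions with probability $1-2\delta$.

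The main obstacle is the non-commutative moment bound. The trace of a product of non-commuting $\vec{X}_e$'s cannot be factored naively, so peeling off factors must be done carefully. I would handle it by ordering each partition's occurrences and repeatedly applying $\operatorname{tr}(\vec{M}\vec{X}_e\vec{N}) \le \|\vec{X}_e\|\cdot\|\cdot\|$-type bounds. For the paired occurrences I would use the operator inequality $\sum_e \vec{X}_e \vec{B}\vec{X}_e \preceq \|\vec{B}\|\,\sum_e \vec{X}_e^2$ in expectation, valid for PSD-sandwiched terms. This is the step where getting an explicit constant compatible with $18e$ requires care.
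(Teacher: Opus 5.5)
The paper gives no proof of this statement. It is quoted from \cite{Doron2020SpectralSV} and used only through \cref{cor:spectral-sparsification}, so your proposal has to stand on its own. Its outer layer is correct:
\begin{itemize}
\item the normalization $Y=\sum_e \eta_e A_e$ with $\eta_e=\xi_e/p_e-1$, and the bounds $\|X_e\|\le 1/s$ and $\sum_e \mathbb{E}X_e^2\preceq \Pi/s$;
\item Markov's inequality applied to $\operatorname{tr}(Y^k)$;
\item the observation that each term of $\operatorname{tr}(Y^k)$ involves at most $k$ indicators, so its mean is the same under $k$-wise and full independence;
\item the final substitution, which needs exactly $c\le 18e$ and $k\le\log n$;
\item the edge count via $\sum_e\ell(e)=n-1$, Markov, and a union bound.
\end{itemize}
The gap is the one estimate everything rests on: $\mathbb{E}\operatorname{tr}(Y^k)\le n(ck/s)^{k/2}$ with $c\le 18e$ under independent sampling. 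The computation you sketch does not establish it.

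\textbf{The partition count is wrong.} Partitions of $[k]$ without singleton blocks are not $(k/2)^{k/2}e^{O(k)}$ in number. Those into blocks of size $4$ alone number $k!/(24^{k/4}(k/4)!)=k^{3k/4}e^{-O(k)}$. Charging every partition $n s^{-k/2}$ therefore needs $s$ at least of order $k^{3/2}$. The theorem, however, allows $s=\Theta(k)$: take $k=\log n$, constant $\epsilon$, and $\delta=1/\mathrm{poly}(n)$. Any per-partition bound must keep its true weight $n s^{-(k-r)}$ for $r$ blocks, since each of the $k-2r$ surplus occurrences costs another factor $1/s$. Summing over partitions without singletons then gives $\sum_\pi s^{|\pi|-k}=s^{-k}\,\mathbb{E}(\mathrm{Poi}(s)-s)^k\le (Ck/s)^{k/2}$, using $s\ge 18ek$.

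\textbf{The per-partition bound does not follow from your peeling tools for crossing partitions.} This is the more serious problem. In $\operatorname{tr}(X_aX_bX_cX_aX_dX_cX_bX_d)$, the matrix between the two copies of $X_a$ is $X_bX_c$. It is not Hermitian, and its factors are paired with occurrences outside the sandwich. So $\sum_e X_eBX_e\preceq\|B\|\sum_eX_e^2$, which needs a fixed Hermitian $B$, is unavailable. If you instead bound the sandwiched product in operator norm, the partners of $X_b$ and $X_c$ become isolated. From $\|X_e\|\le 1/s$, $\sum_e\mathbb{E}X_e^2\preceq\Pi/s$ and $\operatorname{tr}\Pi\le n$ you then get only $n^3s^{-4}$ instead of $ns^{-4}$. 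Grouping by partitions also forces distinct labels on distinct blocks, so you cannot sum each block freely over all edges, because crossing terms can be negative. Controlling crossing terms is exactly what noncommutative Khintchine/Rosenthal inequalities do; it is not routine bookkeeping.

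\textbf{How to repair it.} Stop after your (correct) reduction to full independence and apply such an inequality. For example, the matrix Rosenthal inequality of Mackey, Jordan, Chen, Farrell and Tropp gives, for $k=4$ and all $k\ge 6$,
\[
\bigl(\mathbb{E}\operatorname{tr}Y^k\bigr)^{1/k}\le\sqrt{k-1}\,\Bigl(\operatorname{tr}\bigl(\textstyle\sum_e\mathbb{E}X_e^2\bigr)^{k/2}\Bigr)^{1/k}+(k-1)\Bigl(\textstyle\sum_e\mathbb{E}\operatorname{tr}|X_e|^k\Bigr)^{1/k};
\]
the case $k=2$ is a direct computation.
\begin{itemize}
\item Your variance bound makes the first term at most $\sqrt{k/s}\,n^{1/k}$.
\item For the second term, $\operatorname{tr}|X_e|^k=|\eta_e|^k\ell(e)^k$. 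Since $p_e=s\,\ell(e)<1$, we have $\mathbb{E}|\eta_e|^k\ell(e)^k\le p_e^{1-k}\ell(e)^k+\ell(e)^k\le 2\ell(e)s^{1-k}$. So the second term is at most $k(2n)^{1/k}s^{1/k-1}$.
\item Because $s\ge 18ek$, the second term is below two thirds of the first.
\end{itemize}
Hence $\mathbb{E}\operatorname{tr}(Y^k)\le n(3k/s)^{k/2}$, and your substitution gives failure probability at most $\delta$ with room to spare in the constant $18e$.

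Two other routes also work. You can symmetrize and use noncommutative Khintchine together with a moment bound for $\sum_e X_e^2$. Or you can exploit that $A_e$ has rank one: $\operatorname{tr}(A_{e_1}\cdots A_{e_k})$ is then a cyclic product of entries of the projection $M_{ef}=\sqrt{w_ew_f}\,\chi_e^TL_G^{+}\chi_f$. Crossing terms can be handled with $|M_{ef}|\le\sqrt{\ell(e)\ell(f)}$, $\sum_f M_{ef}^2=\ell(e)$, and AM--GM.
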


Note that if $G$ is connected and $\Lap_H \cong_\epsilon \Lap_G$ for some $\epsilon\geq 0$, then $H$ must also be connected. Thus, setting $\epsilon = 1/2$, $k = \Theta(\log n)$, and $\delta = 1/\text{poly}(n)$, get that $s = \Theta(\log n)$ and $p_{ab} = \Theta(w_{ab}\cdot R_{ab}\cdot \log n)$, giving us the following corollary:
\begin{corollary}\label{cor:spectral-sparsification}
Let $G = (V, E, w)$ be an undirected connected weighted graph on $n$ vertices with effective resistances $\{R_{ab}\}_{(a, b)\in E}$. Then, for any $\delta = 1/\poly(n)$, sampling edges of $G$ in a $\Theta(\log n)$-wise independent manner with marginals $\Theta(w_{ab}\cdot R_{ab}\cdot \log n)$ for each $(a, b)\in E$ yields a connected graph $H$ with probability $1 - 2\delta$.
\end{corollary}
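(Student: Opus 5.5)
The plan is to specialize \cref{thm:spectral-sparsification} to a particular choice of parameters and then show that a constant-factor spectral approximation of a connected graph must itself be connected. I read the phrase ``marginals $\Theta(w_{ab}\cdot R_{ab}\cdot \log n)$'' in the corollary as ``marginals $p_{ab}=\min\{1, s\cdot w_{ab}R_{ab}\}$ for some $s=\Theta(\log n)$ fixed by the parameters below,'' which is exactly the sampling done by $\operatorname{Sparsify}$. I will make this interpretation explicit, since for $k$-wise independent sampling there is no monotonicity argument that would let me pass from these marginals to arbitrary larger ones.

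\textbf{Step 1 (parameters).} Write $\delta = n^{-c}$ for the constant $c$ given by $\delta = 1/\poly(n)$. Fix $\epsilon = 1/2$, and let $k$ be the largest even integer with $k\le \log n$, so $k=\Theta(\log n)$ for $n$ larger than an absolute constant. Smaller $n$ can be absorbed into the constants, for instance by letting $k$ be the number of edges, which makes the sampling fully independent. These choices meet the hypotheses $0<\epsilon<1$, $0<\delta<1/2$, and $k\le\log n$ even. In $\operatorname{Sparsify}$ we then have
\[
s=\frac{18e\log n}{\epsilon^2}\left(\frac{n}{\delta}\right)^{2/k} = 72e\log n\cdot n^{2(c+1)/k} = 72e\log n\cdot 2^{2(c+1)\log n/k}.
\]
Because $k\ge \log n -2$, the exponent $2(c+1)\log n/k$ is bounded by a constant depending only on $c$, so $s=\Theta(\log n)$ and $p_{ab}=\min\{1,\Theta(w_{ab}R_{ab}\log n)\}$.

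\textbf{Step 2 (apply the theorem).} With $\{R_{ab}\}$ the true effective resistances, \cref{thm:spectral-sparsification} says that the reweighted sample $H$ satisfies $\Lap_H\cong_{1/2}\Lap_G$ with probability at least $1-2\delta$. Reweighting kept edges by $w_{ab}/p_{ab}$ does not affect which edges are present, so $H$ and the unweighted sampled edge set have the same connected components.

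\textbf{Step 3 (connectivity).} Suppose $\Lap_H\cong_{1/2}\Lap_G$ and yet $H$ is disconnected. Let $S$ be a connected component of $H$ with $\emptyset\ne S\ne V$, and let $\vx=\mathbf 1_S$. No edge of $H$ crosses $S$, so $\vx^T\Lap_H\vx=0$. Since $G$ is connected, some edge with positive weight crosses $S$, so $\vx^T\Lap_G\vx>0$. This contradicts $\vx^T\Lap_H\vx\ge \tfrac12\,\vx^T\Lap_G\vx$. Hence $H$ is connected with probability at least $1-2\delta$.

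The only step requiring care is the bookkeeping in Step 1. We need $(n/\delta)^{2/k}=O(1)$, and this uses both $k=\Theta(\log n)$ and $\delta$ being only polynomially small. That is why $k$ must be chosen as close to $\log n$ as the restriction $k\le\log n$ allows, rather than some smaller constant multiple of $\log n$.
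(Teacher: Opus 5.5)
Your proposal is correct and follows the paper's route: instantiate \cref{thm:spectral-sparsification} with $\epsilon=1/2$, $k=\Theta(\log n)$ and $\delta=1/\mathrm{poly}(n)$ so that $s=\Theta(\log n)$, then note that a spectral approximation of a connected graph is itself connected (your indicator-vector argument is exactly the justification the paper leaves implicit, and it correctly uses $\epsilon<1$). One small correction to your closing remark: $k$ need not be chosen as close to $\log n$ as possible, since any $k=\beta\log n$ with constant $\beta>0$ gives $(n/\delta)^{2/k}=2^{2(c+1)/\beta}=O(1)$.
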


\subsection{Leverage Scores and Minimum Cut Duality}\label{sec:leverageScoreSec}

In this section, we prove the following theorem:
\begin{theorem}\label{thm:lev-scores-min-cut}
Given an unweighted graph $G = (V, E)$ with $m$ edges and minimum cut $c$, there exists a weighting $w: E\to \R$ such that the weighted graph $G' = (V, E, w)$ satisfies the following:
\begin{itemize}
    \item for each $e \in E$, the leverage score is $\ell(e) \leq 4\alpha/c$, where $\alpha$ is an absolute constant from \cref{cor:graph-clustering-eff-resistance},
    \item $w_{\max}/w_{\min} = m^{O(\log m)}$.
\end{itemize}
Moreover, the converse is also true: if there exists a weighting such that all leverage scores are at most $1/c$, the minimal cut of the original graph is $\geq c$.
\end{theorem}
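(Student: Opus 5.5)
The converse is the easy direction, so I would do it first. Fix any weighting $w$ with all leverage scores at most $1/c$, and fix any nonempty cut $C$ of $G$ (assume $G$ is connected, as otherwise there is nothing to prove). Sample a spanning tree $T$ with probability proportional to the product of its edge weights. Every spanning tree crosses every cut, so $|T\cap C|\geq 1$ always. By the spanning-tree fact, $\mathbb{E}|T\cap C| = \sum_{e\in C}\ell(e)\leq |C|/c$. Combining the two gives $|C|\geq c$.

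For the main direction I would build the weighting by a recursive, level-by-level use of \cref{cor:graph-clustering-eff-resistance}.

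\textbf{Construction.} Set $G_0=G$ with unit weights. At level $i$, $G_i$ is a unit-weight multigraph on $n_i$ vertices and $m_i$ edges, obtained from $G$ by contracting clusters.
\begin{enumerate}
\item Contraction never decreases the minimum cut, so $G_i$ still has minimum cut at least $c$. Hence every vertex has degree at least $c$, and $m_i\geq n_ic/2$.
\item Apply \cref{cor:graph-clustering-eff-resistance} to $G_i$. Every cluster $V_j$ then has $\operatorname{R_{diam}}(G_i[V_j])\leq \alpha n_i/m_i\leq 2\alpha/c$.
\item Declare every edge inside a cluster to be a \emph{level-$i$ edge}, and give it weight $W_i$ in the final weighting.
\item The crossing edges number at most $m_i/2$. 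Contract every cluster to a single vertex to get $G_{i+1}$, whose edges are exactly these crossing edges. Recurse.
\end{enumerate}
Since $m_{i+1}\leq m_i/2$, there are at most $\log m$ levels, and every edge is eventually assigned a level. Choose $W_i = m^{-Ki}$ for a large constant $K$. This gives $w_{\max}/w_{\min}=m^{O(\log m)}$.

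\textbf{Bounding the leverage score of a level-$i$ edge $e=(u,v)$.} I would use two tools.
\begin{itemize}
\item \emph{Rayleigh monotonicity.} Adding edges or increasing weights only lowers effective resistance. So I may discard all edges of level greater than $i$.
\item \emph{An explicit flow and Thomson's principle.} Scaling the unit-weight $G_i$ by $W_i$ gives a graph in which the $u$--$v$ electric flow $f$ (inside $u$'s cluster) has energy at most $2\alpha/(cW_i)$. Each vertex of $G_i$ is a supernode: a set of original vertices connected by edges of levels less than $i$. I lift $f$ to a flow in the real graph by routing, inside each supernode, the flow entering at one original vertex to where it exits.
\end{itemize}

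\textbf{Main obstacle: the cost of routing inside supernodes.} By induction over levels, any two original vertices inside a level-$i$ supernode are joined by a chain of at most $n$ lower-level cluster hops. Each hop has resistance at most $2\alpha/(cW_{i-1})$, and effective resistance is a metric, so the supernode's resistance diameter is at most $2\alpha n/(cW_{i-1})$.
\begin{itemize}
\item An electric flow carries at most one unit on each edge.
\item Decompose the lifted flow into at most $\mathrm{poly}(m)$ unit-or-less pieces, each moving between two points of one supernode.
\item By the triangle inequality for energy, $\sqrt{\mathcal{E}(a+b)}\leq\sqrt{\mathcal{E}(a)}+\sqrt{\mathcal{E}(b)}$, the total extra energy is at most $\mathrm{poly}(m)\cdot n/(cW_{i-1})$.
\end{itemize}
This is at most $2\alpha/(cW_i)$ once $W_{i-1}/W_i\geq m^{K}$ for $K$ large enough. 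Therefore $\operatorname{Reff}(u,v)\leq 4\alpha/(cW_i)$, and the leverage score satisfies $\ell(e)=W_i\operatorname{Reff}(u,v)\leq 4\alpha/c$. The care needed here is in making the polynomial-loss accounting rigorous, via the square-root subadditivity of energy, so that a single polynomial gap between consecutive $W_i$ absorbs it.
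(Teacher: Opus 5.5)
Your converse and your construction match the paper's proof. The shared construction is: recursive clustering via \cref{cor:graph-clustering-eff-resistance}, contraction, geometrically decreasing level weights, Rayleigh monotonicity, and lifting the contracted electric flow through supernodes. The difference is in how the lift is paid for, and one step of yours needs repair.

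\textbf{How the lift is paid for.} The paper proves the tight bound $\operatorname{R_{diam}}(G)\le R_1+hR_0$ (\cref{lemma:contraction-eff-resistance-bound}). It uses two facts: a unit electric flow pushes at most one unit of current through each supernode, and \cref{lemma:energy-eff-resistance-diam} (via convexity) bounds the energy of any unit ``distribution-to-distribution'' flow in a graph of resistance diameter $R$ by $R$. This lets the paper use a per-level weight ratio of only $\Delta=\Theta(n\log m)$. It then runs a clean induction on the real-graph resistance diameter of every cluster, losing only a factor $(1+n/\Delta)^{\log m}\le 2$.

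You instead accept a polynomial loss via the triangle inequality for $\sqrt{\mathcal{E}}$, and absorb it with a per-level ratio of $m^K$. Both routes give $w_{\max}/w_{\min}=m^{O(\log m)}$. Yours avoids the convexity lemma and the vertex-throughput fact, at the price of a larger hidden constant in the exponent. The paper's route yields a sharper, reusable lemma.

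\textbf{The step to repair.} Your justification that ``each hop has resistance at most $2\alpha/(cW_{i-1})$'' does not follow as written. The $2\alpha/c$ bound from \cref{cor:graph-clustering-eff-resistance} is a resistance diameter in the \emph{contracted} graph $G_{i-1}$. Contraction can only decrease effective resistance, so that bound gives no upper bound on resistances between original vertices.

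Your argument does not actually need this bound. Clusters have finite resistance diameter, so they are connected. Hence the original vertices of a level-$i$ supernode are connected by edges of levels $<i$, each of weight at least $W_{i-1}$. A series path then bounds the supernode's diameter by $n/W_{i-1}$. This costs an extra factor of about $c/(2\alpha)\le m$ compared with your claimed bound, which the $m^K$ gap absorbs.

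Alternatively, observe that your lifting argument bounds $\operatorname{Reff}$ between \emph{any} two original vertices of a level-$i$ cluster, not just edge endpoints. You can then induct on that statement, which is exactly the paper's induction.
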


We prove the theorem by recursively decomposing our graph into components with high effective resistance diameter, reweighting the edges within each component, and then contracting the graph on these components. We start by establishing the following useful lemma:

\begin{lemma}\label{lemma:energy-eff-resistance-diam}
Given a graph $G = (V, E, w)$ of effective resistance diameter $R$, and two disjoint vertex sets $s_1, s_2, ..., s_k$ and $t_1, t_2, ..., t_\ell$, let $\vf$ be the unique electric flow induced by injecting current $\alpha_i$ to $s_i$ and extracting $\beta_j$ from $t_j$ so that $\sum_i\alpha_i = \sum_j\beta_j = 1$. Then $\mathcal{E}(\vf) \leq R$.
\end{lemma}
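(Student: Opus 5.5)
The plan is to decompose the multi-source/multi-sink flow into a convex combination of unit $s_i$-$t_j$ flows, and then use convexity of the energy.

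First, define for each pair $(i,j)$ the weight $\gamma_{ij} = \alpha_i \beta_j$. These weights are nonnegative and sum to $\sum_i \alpha_i \sum_j \beta_j = 1$. Let $\vf_{ij}$ denote the unit electric flow from $s_i$ to $t_j$.

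Next, consider the combination $\vg = \sum_{i,j} \gamma_{ij} \vf_{ij}$. By linearity of flow conservation, $\vg$ is a network flow with external currents
\[
\sum_{i,j}\gamma_{ij}(\mathbf{1}_{s_i} - \mathbf{1}_{t_j}) = \sum_i \alpha_i \mathbf{1}_{s_i} - \sum_j \beta_j \mathbf{1}_{t_j}.
\]
These are exactly the prescribed injections and extractions. The disjointness of the two vertex sets ensures each vertex receives a single, well-defined external current. Each $\vf_{ij}$ also satisfies Ohm's law with its own potential, so $\vg$ satisfies Ohm's law with the potential $\sum_{i,j}\gamma_{ij}\vp_{ij}$. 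Hence $\vg$ is an electric flow, and by uniqueness $\vg = \vf$. Even without this uniqueness step, the Fact that the electric flow minimizes energy gives $\mathcal{E}(\vf) \le \mathcal{E}(\vg)$, which is all the argument needs.

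Finally, bound the energy. The map $\vf \mapsto \mathcal{E}(\vf) = \sum_e \vf(e)^2/w(e)$ is a positively weighted sum of squares, so it is convex. Jensen's inequality over the convex weights $\gamma_{ij}$ then gives
\[
\mathcal{E}(\vf) \le \sum_{i,j}\gamma_{ij}\,\mathcal{E}(\vf_{ij}) = \sum_{i,j}\gamma_{ij}\operatorname{Reff}_G(s_i,t_j) \le \sum_{i,j}\gamma_{ij} R = R.
\]
Here the middle equality is Thomson's principle, and the last inequality uses the definition of the effective resistance diameter.

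There is no real obstacle. The one point needing care is that the superposition carries exactly the prescribed external currents, which requires the product-form weights $\gamma_{ij} = \alpha_i\beta_j$. We also need $G$ to be connected so that each $\vf_{ij}$ exists; this is implicit in $R$ being finite.
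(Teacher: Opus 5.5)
Your proof is correct and follows the paper's argument essentially step for step. Both use product weights $\alpha_i\beta_j$ and superpose the unit $s_i$--$t_j$ electric flows to get the prescribed external currents, then apply Jensen's inequality to the convex energy and use the energy-minimality of the electric flow. Your extra observation that the superposition itself obeys Ohm's law (with the averaged potential), and therefore equals $\vf$, is also correct; it sharpens the paper, which only treats the averaged flow as a feasible flow that ``might not obey Ohm's law.''
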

\begin{proof}
Let $\vf_{ij}$ be a unit electric flow from $s_i$ to $t_j$. Then:
\[
\mathcal{E}(\vf_{ij}) = \operatorname{Reff}(s_i, t_j) \leq R
\]

Now, consider the distribution of pairs $(I, J)$ such that $P[I = i, J=j] = \alpha_i\beta_j$. If we pick the source $s_I$ and the sink $s_J$ according to this distribution, and induce a current as above, we will get some flow $\vf_{IJ}$, which in expectation will be:
\[
\mathbb{E}\left[\vf_{IJ}\right] = \sum_{i,j}\alpha_i\beta_j\vf_{ij} = \tilde{\vf}
\]
Importantly, $\tilde{\vf}$ is not necessarily an electric flow, since it might not obey Ohm's law. However, the energy of $\tilde{\vf}$ bounds the energy of the true electric flow with the same external currents.

Finally, note that by definition, $\mathcal{E}(\vf) = \sum_e\vf(e)^2/w_e = \vf^T\vec{R}\vf$, where $\vec{R}$ is the diagonal matrix of resistances. Since $\vec{R}$ is positive semi-definite, then $\mathcal{E}: \mathbb{R}^n\to \mathbb{R}, \vf\mapsto \mathcal{E}(\vf)$ is convex, meaning that by Jensen's inequality:
\[
\mathcal{E}(\tilde{\vf}) \leq \mathbb{E}_{(I,J)}\left[\mathcal{E}(\vf_{IJ})\right] = \sum_{i,j}\alpha_i\beta_j\mathcal{E}(\vf_{ij}) \leq R\sum_{i,j}\alpha_i\beta_j = R
\]
Moreover, note that the external currents of $\tilde{\vf}$ match exactly those of $\vf$: the current injected at $s_i$ is $\sum_j\alpha_i\beta_j = \alpha_i$ and the current extracted from $t_j$ is $\sum_i \alpha_i\beta_j = \beta_j$. Since the actual electrical flow minimizes the energy, then the energy of $\vf$ is $\mathcal{E}(\vf) \leq \mathcal{E}(\tilde{\vf}) \leq R$.
\end{proof}

We use this lemma to show that when a graph is contracted on some subgraphs, the effective resistance diameter of the contracted graph can be combined with effective resistance diameters of contracted components to bound the diameter of the original graph.

\begin{lemma}\label{lemma:contraction-eff-resistance-bound}
Given a graph $G = (V, E, w)$ and its partitioning $V = \sqcup_{i = 1}^h V_i$, let the contracted graph $G' = (V', E', w')$ be constructed by identifying vertices in the same partition set. Assume $\operatorname{R_{diam}}(G') \leq R_1$ and $\operatorname{R_{diam}}(G[V_i]) \leq R_0$ for all $i$. Then $\operatorname{R_{diam}}(G) \leq R_1 + hR_0$.
\end{lemma}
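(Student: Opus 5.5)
To bound $\operatorname{Reff}_G(u,v)$ for arbitrary $u,v\in V$, I will use Thomson's principle in the form ``the energy of \emph{any} unit $u$-$v$ flow upper bounds $\operatorname{Reff}_G(u,v)$'', since the electric flow minimizes energy. The plan is to build an explicit unit $u$-$v$ flow in $G$ by lifting the electric flow of $G'$ and patching it inside each part $V_i$. Then I bound its energy using \cref{lemma:energy-eff-resistance-diam} and the triangle inequality for the square root of energy.

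\textbf{Step 1: the flow on crossing edges.} Let $u\in V_a$ and $v\in V_b$. Take the unit electric flow $\vf'$ in $G'$ from the super-vertex of $V_a$ to that of $V_b$; its energy is $\operatorname{Reff}_{G'}\le R_1$. If $a=b$, set $\vf'=0$. Every edge of $G'$ is an edge of $G$ between distinct parts with the same weight, so define $\vf_1$ on $G$ by $\vf_1(e)=\vf'(e)$ on crossing edges and $0$ elsewhere. Then $\mathcal{E}(\vf_1)\le R_1$. However, $\vf_1$ violates conservation inside each $V_i$. At each vertex $x\in V_i$ there is a net imbalance $d_i(x)$, given by the total flow of $\vf_1$ into $x$ plus the external current at $x$ (a $+1$ at $u$, a $-1$ at $v$). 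Because $\vf'$ is conserved at the super-vertex $V_i$, we have $\sum_{x\in V_i} d_i(x)=0$.

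\textbf{Step 2: patching each part.} For each $i$, split $d_i$ into its positive part (sources) and negative part (sinks), each with total mass $\tau_i$. Route these inside $G[V_i]$ by the electric flow $\vg_i$ of $G[V_i]$ with these external currents. By \cref{lemma:energy-eff-resistance-diam}, applied after scaling by $\tau_i$ and using that energy scales quadratically, $\mathcal{E}(\vg_i)\le \tau_i^2 R_0$. A vertex that is both a source and a sink can be treated by taking the net value, so the source and sink sets are disjoint, as the lemma requires. Since $\vf'$ is a unit flow, $\tau_i\le 1$: I expect to show that the total outflow of a unit electric (acyclic) flow through any super-vertex is at most $1$. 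Then $\vf=\vf_1+\sum_i \vg_i$ is a valid unit $u$-$v$ flow in $G$, and the $\vg_i$ live on disjoint edge sets $E(V_i)$, which are also disjoint from the crossing edges.

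\textbf{Step 3: combining.} Because the edge supports are disjoint, the energies simply add:
\[
\mathcal{E}(\vf)=\mathcal{E}(\vf_1)+\sum_i\mathcal{E}(\vg_i)\le R_1+\sum_i \tau_i^2R_0\le R_1+hR_0.
\]
Hence $\operatorname{Reff}_G(u,v)\le R_1+hR_0$ for every pair $u,v$, which is the claim. No Cauchy--Schwarz step is needed precisely because of this disjointness.

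\textbf{Main obstacle.} The delicate point is justifying $\tau_i\le 1$, i.e., that the throughput of the unit electric flow at any vertex of $G'$ is at most $1$. This holds because an electric flow is acyclic: it flows along decreasing potential, so it decomposes into $u$-$v$ paths of total weight $1$, each visiting a given super-vertex at most once. The boundary parts $V_a$ and $V_b$ need similar care, since there the external $\pm1$ mixes with the through-flow. Still, the positive mass of $d_a$ is at most the larger of the injected unit and the through-flow, so it remains at most $1$. If this path-decomposition argument becomes messy, a cruder fallback is enough: bounding $\tau_i\le 1$ via the path decomposition, or even just by $\tau_i^2\le\tau_i$, still gives the stated $hR_0$.
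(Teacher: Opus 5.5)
Your proposal is correct and matches the paper's proof: lift the unit electric flow of $G'$ onto the crossing edges, patch each part $G[V_i]$ with an electric flow of total current at most $1$ bounded via \cref{lemma:energy-eff-resistance-diam}, and add energies over disjoint edge sets. Your path-decomposition argument for $\tau_i\le 1$ and the explicit $\tau_i^2$ scaling fill in details the paper only asserts. Note, though, that the ``$\tau_i^2\le\tau_i$'' fallback is not an independent route, since it already presupposes $\tau_i\le 1$.
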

\begin{proof}
Consider any $s, t \in V$. We will show that $\operatorname{Reff}(s, t) \leq R_1 + hR_0$. Note that if $s, t \in V_i$ for some $i$, then the effective resistance between $s$ and $t$ in $G$ is bounded by the effective resistance in $G[V_i]$, meaning that $\operatorname{Reff}(s, t) \leq R_0 \leq R_1 + hR_0$. Hence, assume $s \in V_i$, $t\in V_j$ for some $i \neq j$.

Note that the effective resistance equals the minimal energy of a unit flow passing from $s$ to $t$. Hence, it is sufficient to construct some unit flow (not necessarily an electric flow) $\vf$ such that $\mathcal{E}(\vf) \leq R_1 + hR_0$.

Let $s', t' \in V'$ be the vertices of $G'$ that correspond to contracted components $V_i, V_j$ in $G$. Consider an electric unit flow $\vf'$ from $s'$ to $t'$ in $G'$: since $\operatorname{Reff}_{G'}(s', t') \leq \operatorname{R_{diam}}(G') \leq R_1$, then $\mathcal{E}_{G'}(\vf') = \sum_{e\in E'}\vf'(e)^2/w'(e) \leq R_1$. It remains to lift this flow to a flow in $G$.

Our lifted flow $\vf$ would match $\vf'$ on the flows of all edges crossing between partitions: namely, for each edge $e'\in E'$ of $G'$, if the corresponding edge in $G$ is $e\in E$, then we would force $\vf(e) = \vf'(e')$. Then, consider any subgraph $G[V_i]$ of $G$. The currents across the edges going between partitions force particular external flows to $G[V_i]$: specifically, if we limit our view to $G[V_i]$, any vertex $v\in V_i$ has external flow equaling the sum of all flows going to $v$ through these crossing edges (with the only exception of vertices $s$ and $t$, where the external currents are $1$ and $-1$, respectively). These external currents force a unique electric flow $\vf_i$ through $G[V_i]$. Since the total external current flowing in $G[V_i]$ equals the total current flowing out of $G[V_i]$ and is bounded by $1$ (the total current injected in $G$), then by \cref{lemma:energy-eff-resistance-diam} we get that the energy of this flow is $\mathcal{E}_{G[V_i]}(\vf_i) \leq \operatorname{R_{diam}}(G[V_i]) \leq R_0$.

Hence, we get the flow $\vf'$ fixing the flow across the edges going in between partitions, and in each subgraph $G[V_i]$ we get a flow $\vf_i$ consistent with the currents external to $G[V_i]$. Combining these flows, we get a unit $s$-$t$ flow $\vf$ in the original graph $G$. Finally, since the energy of the flow only depends on the flow across each edge and edge weights, we can safely combine the eneries of each of these flows to get the energy of $\vf$:
\[
\mathcal{E}(\vf) = \mathcal{E}(\vf') + \sum_{i = 1}^h\mathcal{E}(\vf_i) \leq R_1 + hR_0
\]

Finally, since the effective resistance between $s$ and $t$ in $G$ equals the minimal energy of a unit flow from $s$ to $t$, the above expression bounds it from above: $\operatorname{Reff}_G(s, t) \leq \mathcal{E}(\vf) \leq R_1 + hR_0$.
\end{proof}

With this lemma, we are finally ready to prove \cref{thm:lev-scores-min-cut}:

\begin{proof}[Proof of \cref{thm:lev-scores-min-cut}]

The converse is fairly straightforward to see. Consider the weighting for which the leverage scores are bounded by $1/c$ and take any cut of $G$ that has $k$ edges. Then sample a spanning tree of $G'$ with the probabilities proportional to the product of the edge weights in the tree. It is known that the probability of any edge being included in the spanning tree equals its leverage score, meaning that each cut edge will be included with probability at most $1/c$. Then, by the union bound, the probability that at least one cut edge is included is at most $k/c$. But one edge of any cut is definitely included in any spanning tree, meaning that $k/c \geq 1$, thus giving that any cut has size $k$ is $\geq c$. 

Now let us prove the main direction. Let $\Delta\in \R$ be the parameter to be chosen later. We will reweight the graph using \cref{alg:edge-weighting}.
\begin{algorithm}[h]
\caption{}\label{alg:edge-weighting}
\begin{enumerate}
    \item Set $i = 0$ and $G_0 = G$
    \item\label{alg:edge-weighting-step2} On recursive step $i$, find a partition of $G_i = (V_i, E_i)$ into subgraphs $V_i = \sqcup_{j=1}^hV_{ij}$ according to \cref{cor:graph-clustering-eff-resistance}, meaning that the number of edges going between partition sets is $\leq |E_i|/2$, and effective resistance diameter of $G_i[V_{ij}]$ is $\leq \alpha\cdot |V_i|/|E_i|$ for all $j$ (where $\alpha$ is a constant from \cref{cor:graph-clustering-eff-resistance}).
    \item For each edge $e$ in each $G_i[V_{ij}]$, set $w(e) = 1/\Delta^i$.
    \item Contract each $V_i$ in a single vertex, obtaining a multigraph $G_{i+1} = (V_{i+1}, E_{i+1})$, where $E_{i+1} = \{(u, v) \in E_i: u \in V_{ij}, v\in V_{ik} \text{ for } j\neq k\}$
    \item If $|V_{i+1}| > 1$, return to step 2 for $i+1$, otherwise output $w$.
\end{enumerate}
\end{algorithm}

Consider any step $i$ of the algorithm and any subgraph $G_i[V_{ij}]$ of $G_i$ that would be contracted on this step. By a slight abuse of notation, write $G[V_{ij}]$ to be the subgraph of the original graph $G$ induced by all the vertices in $V$ that are contracted to vertices $V_{ij}$ on steps $0$ to $i-1$.

We will prove by induction that on each step $i$, for each subgraph $G[V_{ij}]$ of the original graph, the effective resistance diameter (with respects to the weights $w$ assigned) is:
\[
\operatorname{R_{diam}}\left({G[V_{ij}]}\right) \leq 2\alpha\cdot \Delta^i\cdot \left(1 + \frac{n}{\Delta}\right)^i\cdot \frac{1}{c},
\]
where $\alpha$ is a constant from \cref{cor:graph-clustering-eff-resistance}.

Note that since the minimal cut of $G$ is $c$, then each vertex has degree at least $c$, meaning that the total number of vertices in $G$ is at least $c|V|/2$. Then $|V|/|E| \leq |V|/(c|V|/2) = 2/c$, so by our choice of partitioning on step \ref{alg:edge-weighting-step2}:
\[
\operatorname{R_{diam}}\left({G[V_{0j}]}\right) = \operatorname{R_{diam}}\left({G_0[V_{0j}]}\right) \leq \alpha\frac{|V|}{|E|} = 2\alpha\cdot \frac{1}{c},
\]
proving the base case of induction.

Then, consider any step $i \geq 1$ and any subgraph $G_i[V_{ij}]$ of $G_i$. Notice that the minimal cut of $G_i$ is at least $c$, since contracting on subgraphs only limits the set of all cuts to those which don't cut through contracted subgraphs, meaning that it does not decrease the min cut. Hence, $|V_i|/|E_i| \leq 2/c$, so the effective resistance diameter of a subgraph found on Step \ref{alg:edge-weighting-step2} of \cref{alg:edge-weighting} is $\leq 2\alpha/c$. However, note that when we set the weights of all edges of $G_{i}[V_{ij}]$ to $1/\Delta^i$, the effective resistance diameter given above (which is the diameter for unit weights) will be scaled by $\Delta^i$ (since the resistance of each edge is scaled by $\Delta^i$), becoming:
\[
\operatorname{R_{diam}}(G_i[V_{ij}]) \leq 2\alpha\cdot  \Delta^i\cdot \frac{1}{c}
\]

Then, consider any vertex $v_k \in V_{ij}\subseteq V_i$. This vertex was created by contracting some component $V_{(i-1)k}$ on the previous recursive step of the algorithm, so by the inductive hypothesis we get that:
\[
\operatorname{R_{diam}}\left({G[V_{(i-1)k}]}\right) \leq 2\alpha\cdot  \Delta^{i-1}\cdot \left(1 + \frac{n}{\Delta}\right)^{i-1}\cdot \frac{1}{c}
\]

Note that $G_i[V_{ij}]$ of effective resistance diameter $2\alpha \Delta^i/c$ is obtained from $G[V_{ij}]$ by contracting at most $n$ such subgraphs $G[V_{(i-1)k}]$ of effective resistance diameter $2\alpha\cdot \Delta^{i-1}\cdot \left(1 + \frac{n}{\Delta}\right)^{i-1}\cdot \frac{1}{c}$ each, meaning that by applying \cref{lemma:contraction-eff-resistance-bound} we can bound $\operatorname{R_{diam}}(G[V_{ij}])$ as:
\[
\operatorname{R_{diam}}(G[V_{ij}]) \leq 2\alpha\Delta^i\cdot \frac{1}{c} + 2\alpha n\cdot \Delta^{i-1}\cdot\left(1 + \frac{n}{\Delta}\right)^{i-1}\cdot \frac{1}{c} \leq 2\alpha\cdot \Delta^{i}\cdot\left(1 + \frac{n}{\Delta}\right)^{i}\cdot \frac{1}{c},
\]
which concludes our inductive proof for $\operatorname{R_{diam}}$.

Finally, consider any edge that was contracted on step $i$: $e\in G_i[V_{ij}]$. Then, clearly the effective resistance of $e$ in $G$ is upper bounded by the effective resistance of $e$ in a subgraph of $G$, $G[V_{ij}]$, which in turn is upper bounded by $\operatorname{R_{diam}}(G[V_{ij}])$. Then, since $w(e) = 1/\Delta^i$, we get that the leverage score of $e$ in $G$ is upper bounded by:
\[
\ell_G(e) = w(e)\operatorname{Reff}_G(e) \leq w(e)\operatorname{Reff}_{G[V_{ij}]}(e) \leq \frac{\operatorname{R_{diam}}(G[V_{ij}])}{\Delta^i} \leq 2\alpha\cdot \left(1 + \frac{n}{\Delta}\right)^{i}\cdot \frac{1}{c}
\]

To conclude the proof, we observe that on each recursive step $|E_i|$ is reduced by a factor of 2, meaning that the total number of steps the algorithm takes to complete is $\log m$. Then, picking $\Delta = \Theta(n\log m)$ is sufficient to ensure that $\left(1 + \frac{n}{\Delta}\right)^{i} \leq \left(1 + \frac{n}{\Delta}\right)^{\log m} \leq 2$, resulting in all leverage scores being $\leq 4\alpha/c$.

For such $\Delta$, get that $w_{\max}/w_{\min} = \Delta^{O(\log m)} = m^{O(\log m)}$.

\end{proof}

\subsection{Concluding Connectedness Under Subsampling}

Finally, we show that \cref{thm:lev-scores-min-cut} can be combined with \cref{thm:spectral-sparsification} to show that a graph with a large min cut stays connected even when edges are subsampled in a $k$-wise independent manner:
\begin{theorem}\label{thm:connectednessBoundedIndSampling}
Given an undirected unweighted graph $G = (V, E)$ on $m$ edges, if the minimum cut of $G$ is $c\log m$, then sampling the edges of $G$ in a $O(\log m)$-wise independent manner with marginals $O(1/c)$ yields a connected graph with probability $1 - 1 / \mathrm{poly}(m)$.
\end{theorem}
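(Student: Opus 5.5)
Apply \cref{thm:lev-scores-min-cut} to $G$, whose minimum cut is $c\log m$. This gives a weighting $w$ under which every edge has leverage score $w_e\operatorname{Reff}_{G'}(e) \leq 4\alpha/(c\log m)$. The subsampling procedure never looks at the weights, so connectivity of the sampled edge set depends only on which edges are kept. We may therefore analyze the sampling on the weighted graph $G' = (V,E,w)$ and invoke the sparsification guarantee there.

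Concretely, use \cref{cor:spectral-sparsification} on $G'$. Up to constants, the marginal it requires for edge $e$ is $\Theta(w_e R_e \log n)$. Since $n \leq m+1$ and $w_e R_e \leq 4\alpha/(c\log m)$, this marginal is at most $O(1/c)$.

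The issue is that \cref{cor:spectral-sparsification} prescribes edge-dependent probabilities $p_e$, while we sample every edge with the same marginal $p = O(1/c)$, which may exceed $p_e$. To fix this, I would re-run the argument of \cref{thm:spectral-sparsification} with the effective-resistance estimates $\widetilde R_e$ replaced by upper bounds $\widetilde R_e := p/(w_e s) \geq R_e$. Here $s = \Theta(\log n)$ is the oversampling factor from Sparsify with $k=\Theta(\log n)$ and $\delta = 1/\poly(n)$. The Doron et al. algorithm explicitly accepts estimates $\{\widetilde R_{ab}\}$, and its analysis, like standard leverage-score sampling via matrix concentration, only needs $\widetilde R_e \geq R_e$. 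With this choice, $p_e = \min\{1, w_e\widetilde R_e s\} = p$ exactly.

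Two points need care. First, \cref{thm:spectral-sparsification} is stated with the true resistances and with $k \leq \log n$. I expect the main obstacle to be justifying that overestimates are allowed. The plan is to note that in the moment-based proof each rescaled term $w_e\chi_e\chi_e^T/p_e$, in the normalized basis $\Lap_G^{\dagger/2}$, has norm $w_eR_e/p_e \leq 1/s$; the bound $1/s$ is the only property the concentration argument uses. If one prefers not to reopen the proof, a coupling alternative is to split each edge into parallel copies. This scales the weights so that each copy's leverage score equals $p/s$, giving a graph $G''$ on which \cref{cor:spectral-sparsification} applies with uniform marginal $p$. An edge of $G$ then survives if any of its copies is sampled. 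This costs only a constant factor in the independence, and connectivity only improves. The $m^{O(\log m)}$ weight ratio from \cref{thm:lev-scores-min-cut} is irrelevant to this argument, since spectral approximation is scale-invariant.

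Second, take $\epsilon=1/2$, $k = \Theta(\log m)$ even, and $\delta = 1/\poly(m)$. Then with probability $1-2\delta$ the sampled graph $H$ satisfies $\Lap_H \cong_{1/2} \Lap_{G'}$. Since $G$ has positive min cut it is connected, so $\lambda_2(\Lap_{G'})>0$, which forces $\lambda_2(\Lap_H)>0$, i.e., $H$ is connected. Its edge set is exactly the set of sampled edges of $G$, which proves the theorem.
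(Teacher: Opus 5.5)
Your argument is correct and is the paper's proof: reweight via \cref{thm:lev-scores-min-cut}, then apply \cref{cor:spectral-sparsification}. Your overestimate $\widetilde R_e = p/(w_e s) \geq R_e$ also fills a step the paper skips, since the paper writes $\Theta(w_{ab}\tilde R_{ab}\log n)=\Theta(1/c)$ when this is only an upper bound and the corollary prescribes edge-dependent marginals $p_e=\min\{1,w_eR_es\}$.

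Discard the parallel-copies alternative, however: splitting $e$ into $t$ copies of weight $w_e/t$ divides its leverage score by $t$, so the bundle still survives with probability at most $w_eR_es \le p$ and the uniform rate $p$ is never reached. The coupling that does work is $X\wedge Z$, where $Z$ is independent of $X$ and fully independent with marginals $p_e/p$; this is $k$-wise independent with exactly the corollary's marginals and is contained in $X$, so its connectivity implies that of $X$.
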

\begin{proof}
\cref{thm:lev-scores-min-cut} tells us that there is a weighting $w$ of $G$ such that for each $(a, b)\in E$, the leverage score is $w_{ab}\cdot \tilde{R}_{ab} \leq 4\alpha/c\log m$. Then, by \cref{cor:spectral-sparsification}, $O(\log m)$-wise independent sampling with marginals $p_{ab} = \Theta(w_{ab}\cdot \tilde{R}_{ab}\cdot \log n) = \Theta(1/c)$ gives a connected graph with probability $1 - 1 / \mathrm{poly}(m)$.
\end{proof}

We also have the following immediate corollary:

\begin{corollary}\label{cor:connectednessBoundedIndSampling}
Given an undirected unweighted graph $G = (V, E)$ on $m$ edges, there is a choice of constant $\kappa$ such that if the minimum cut of $G$ is $\kappa \log m$, then sampling the edges of $G$ in a $O(\log m)$-wise independent manner with marginals $1/2$ yields a connected graph with probability $1 - 1 / \mathrm{poly}(m)$.
\end{corollary}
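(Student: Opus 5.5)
The plan is to obtain \cref{cor:connectednessBoundedIndSampling} directly from \cref{thm:connectednessBoundedIndSampling}. Two adjustments are needed: first, convert sampling at small marginals into sampling at marginals $1/2$; second, make sure the constants fit together. The key observation is monotonicity. Connectivity of the sampled subgraph is a monotone increasing property of the set of sampled edges. So if we couple the rate-$1/2$ sample with a lower-rate sample that is always a subset of it, connectivity of the smaller sample implies connectivity of the larger one.

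First, I fix the constant. \cref{thm:connectednessBoundedIndSampling} states that for minimum cut $c\log m$, $O(\log m)$-wise independent sampling with marginals $p = \Theta(1/c)$ yields a connected graph with probability $1-1/\poly(m)$. Inspecting its proof, the marginal is $p_{ab} = \Theta(w_{ab}\tilde R_{ab}\log n) \le C_0\alpha/c$ for an absolute constant $C_0$ (coming from \cref{cor:spectral-sparsification} with $\epsilon = 1/2$, $\delta = 1/\poly$). I choose $\kappa$ large enough that $C_0\alpha/\kappa \le 1/2$. Strictly, the sparsification corollary asks for marginals proportional to leverage scores, and one may take every marginal to be the upper bound $q := \max_{ab} p_{ab}$. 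Oversampling only helps the spectral argument of \cite{Doron2020SpectralSV}: the reweighting $w_{ab}/p_{ab}$ still gives an unbiased estimator with smaller variance. So I will treat the marginals as a single value $q \le 1/2$, which we may round down to $2^{-t}$.

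Second, I handle the coupling. Given a $k\log(1/q)$-wise independent distribution $Y$ with marginals $1/2$ on $m\log(1/q)$ bits, \cref{alg:arbMarginals} produces a $k$-wise independent $X$ with marginals $q$, where $x_e = \prod_{j\in T_e} y_j$. Here $\log(1/q)$ is a constant, so $k\log(1/q) = O(\log m)$. In particular $x_e \le y_{j_e}$ for the first index $j_e\in T_e$. The bits $(y_{j_e})_{e\in E}$ are themselves $O(\log m)$-wise independent with marginals $1/2$. Thus the edge set selected by $(y_{j_e})$ contains the set selected by $X$, which is connected with probability $1-1/\poly(m)$. This works for this particular $Y$, but the corollary is about an arbitrary $O(\log m)$-wise independent distribution $Z$ with marginals $1/2$. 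The coupling above does not obviously extend to every such $Z$.

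That mismatch is the main obstacle. To resolve it, I would avoid coupling and instead apply the spectral argument directly at marginals $1/2$. Take the weighting from \cref{thm:lev-scores-min-cut} with all leverage scores at most $4\alpha/(\kappa\log m)$, and feed \cref{thm:spectral-sparsification} with overestimates $\widetilde R_{ab}$ of the effective resistances chosen so that $p_{ab} = \min\{1, w_{ab}\widetilde R_{ab}s\} = 1/2$ exactly for every edge. This is legitimate because \cite{Doron2020SpectralSV}'s analysis, like Spielman--Srivastava's, only requires $p_{ab} \ge w_{ab}R_{ab}s$, i.e.\ upper bounds on effective resistances. The condition holds as long as $s\cdot 4\alpha/(\kappa\log m) \le 1/2$, which is ensured by choosing $\kappa$ large since $s = \Theta(\log n)$. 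Then any $\Theta(\log m)$-wise independent sampling with marginals $1/2$ gives a $1/2$-spectral approximation of the reweighted connected graph with probability $1-1/\poly(m)$, hence a connected subgraph. Connectivity depends only on which edges are kept, not on their weights, so the unweighted sample is connected as well. I expect the only delicate point to be confirming that \cref{thm:spectral-sparsification} tolerates overestimated resistances; if needed I would re-derive it from its moment-based proof, where the sole use of $p_{ab}$ is the bound $w_{ab}R_{ab}/p_{ab} \le 1/s$.
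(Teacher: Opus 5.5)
Your final argument is correct and is essentially the paper's route. The paper treats this as an immediate corollary of \cref{thm:connectednessBoundedIndSampling}: reweight via \cref{thm:lev-scores-min-cut}, then invoke \cref{cor:spectral-sparsification} with $\kappa$ large enough that the required marginal $\Theta(1/\kappa)$ is at most $1/2$; your overestimated resistances $\widetilde R_{ab}$, which force every $p_{ab}$ to equal exactly $1/2$, just make explicit the step the paper leaves implicit, and the coupling detour via \cref{alg:arbMarginals} is unnecessary (as you note, it would not cover an arbitrary $O(\log m)$-wise independent distribution anyway).
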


In \cref{appendix:unique-survival}, we show that these results can be extended to show that by choosing the marginal probabilities correctly, we can ensure that there is a non-negligible probability that \emph{any} near-minimum cut is the \emph{unique} surviving cut in our sample.

\section{Connectedness Under Other Sampling Schemes}\label{sec:connectedness-from-dnfs}
\subsection{Fooling Read-Once DNF Formulas}

\begin{definition}
A \emph{DNF formula} $\phi$ is of the form $\phi = \bigvee_{i=1}^mC_i$, where each term $C_i$ is an AND of literals (variables or negations). A DNF is said to be \emph{read-once} if every variable appears in at most one term. 
\end{definition}

\begin{definition}
We say that the distribution $X = (x_1, \dots x_n)$ \emph{$\delta$-fools} a real function $f:\{0, 1\}^n\to \mathbb{R}$ if 
\[
\left|\mathbb{E}[f(X)] - \mathbb{E}[f(U_n)]\right| \leq \delta
\]
\end{definition}

\begin{observation}[\cite{DETT10}]\label{obs:fooling-and}
If $\phi:\{0, 1\}^n\to \{0, 1\}$ is an AND of some subset of literals, then $\phi$ is $\epsilon$-fooled by every $\epsilon$-biased probability distribution.
\end{observation}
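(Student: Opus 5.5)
The plan is to reduce the AND to a single point-probability and then expand that probability in the Fourier (parity) basis, where every term is controlled by the $\epsilon$-bias condition. First, without loss of generality, the literals in $\phi$ involve distinct variables. If some variable appears both positively and negated, then $\phi \equiv 0$ and there is nothing to prove. If a literal is repeated, it can be dropped. So let $S \subseteq [n]$ with $|S| = t$ be the set of variables that appear, and let $a \in \{0,1\}^S$ be the unique assignment to these variables satisfying $\phi$. Then $\phi(x) = \mathbf{1}[x|_S = a]$, and $\mathbb{E}[\phi(U_n)] = 2^{-t}$.

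Next, I will write the indicator as a product and expand it:
\[
\mathbf{1}[x|_S = a] = \prod_{i \in S} \frac{1 + (-1)^{x_i + a_i}}{2} = 2^{-t} \sum_{T \subseteq S} (-1)^{\sum_{i\in T} a_i}\, (-1)^{\bigoplus_{i \in T} x_i}.
\]
I then take expectations under $X$. The $T = \emptyset$ term contributes exactly $2^{-t}$, which matches $\mathbb{E}[\phi(U_n)]$. For each nonempty $T$, the $\epsilon$-bias of $X$ gives $|\mathbb{E}[(-1)^{\bigoplus_{i\in T} X_i}]| = |\Pr[\bigoplus_{i \in T} X_i = 0] - \Pr[\bigoplus_{i \in T} X_i = 1]| \leq \epsilon$.

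Finally, the triangle inequality yields
\[
\left|\mathbb{E}[\phi(X)] - \mathbb{E}[\phi(U_n)]\right| \leq 2^{-t}\,(2^t - 1)\,\epsilon < \epsilon .
\]
There is no real obstacle here. The only point needing care is that the normalization factor $2^{-t}$ exactly cancels the $2^t - 1$ nonempty parities, so the error is $\epsilon$ rather than $2^t\epsilon$. The bound also only uses parities over subsets of $S$, so the same argument works for $k$-wise $\epsilon$-biased distributions with $k \geq t$.
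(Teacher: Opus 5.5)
Your proof is correct, and it is the standard Fourier argument behind this observation; the paper gives no proof and simply cites \cite{DETT10}, where the point is that an AND of $t$ literals on distinct variables has nonconstant Fourier coefficients of total absolute value $1-2^{-t}$, so $\epsilon$-bias gives error at most $(1-2^{-t})\epsilon \le \epsilon$. Keep your closing remark that only parities over subsets of $S$ are used: \cref{cor:connectedness-distribution-dnfs} actually applies the observation to a $k$-wise $\epsilon$-biased distribution with $k \ge t$, not a fully $\epsilon$-biased one.
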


De et al. \cite{DETT10} showed that low bias distributions fool read-once DNFs:

\begin{theorem}[\cite{DETT10}]\label{thm:fooling-dnfs}
Let $\phi$ be a read-once DNF formula with $m$ terms. For every $1 \leq \ell \leq m$, $\epsilon$-biased distributions $O(2^{-\Omega(\ell)}+\epsilon m^\ell)$-fool $\phi$. In particular, we can $\delta$-fool $\phi$ by an $\epsilon$-biased distribution for $\epsilon = m^{-O(\log(1/\delta))}$.
\end{theorem}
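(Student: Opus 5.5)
The plan is to use \emph{sandwiching by truncated inclusion--exclusion}, with a case split on the total acceptance probability of the terms. Write $\phi = \bigvee_{i=1}^m C_i$, and let $p_i = \Pr_{U}[C_i = 1]$. Because $\phi$ is read-once, the terms are on disjoint variables, so under the uniform distribution the events $C_i$ are independent. Let
\[
S_k(x) = \sum_{|T| = k} \prod_{i \in T} C_i(x),
\]
and note that each product $\prod_{i\in T} C_i$ is itself an AND of literals. The Bonferroni inequalities hold pointwise for every $x$: the truncations $B_r(x) = \sum_{k=0}^{r} (-1)^k S_k(x)$ satisfy $B_{r}(x) \ge 1-\phi(x)$ for even $r$ and $B_r(x) \le 1-\phi(x)$ for odd $r$.

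\textbf{Case 1: $\sum_i p_i \le \ell/(2e)$.} Take the consecutive truncations $B_{\ell}$ and $B_{\ell+1}$ (in the appropriate parity order) as upper and lower sandwiching functions for $1-\phi$. Each $B_r$ is a signed sum of at most $\sum_{k\le \ell+1}\binom{m}{k} = m^{O(\ell)}$ ANDs. By \cref{obs:fooling-and}, every AND is $\epsilon$-fooled, so $|\mathbb{E}[B_r(X)] - \mathbb{E}[B_r(U)]| \le \epsilon\, m^{O(\ell)}$. Under the uniform distribution, the gap between the two sandwiches is controlled by independence:
\[
\mathbb{E}_U[S_{\ell+1}] \le \frac{\left(\sum_i p_i\right)^{\ell+1}}{(\ell+1)!} \le \left(\frac{e\sum_i p_i}{\ell+1}\right)^{\ell+1} \le 2^{-\Omega(\ell)}.
\]
Chaining the inequalities $\mathbb{E}_X[B_{\text{low}}] \le \mathbb{E}_X[1-\phi] \le \mathbb{E}_X[B_{\text{up}}]$ with the transfer to $U$ then gives $|\mathbb{E}_X[\phi] - \mathbb{E}_U[\phi]| \le 2^{-\Omega(\ell)} + \epsilon m^{O(\ell)}$. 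The exponent $O(\ell)$ becomes $\ell$ after reparametrizing $\ell$ by a constant factor.

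\textbf{Case 2: $\sum_i p_i > \ell/(2e)$.} Order the terms and take the shortest prefix $\phi' = \bigvee_{i \le j} C_i$ with $\sum_{i\le j} p_i \ge \ell/(4e)$. Each $p_i \le 1$, so for $\ell$ larger than a constant this prefix sum is still at most $\ell/(2e)$, and Case 1 applies to $\phi'$. Under the uniform distribution,
\[
\Pr_U[\phi'=0] = \prod_{i\le j}(1-p_i) \le e^{-\ell/(4e)} = 2^{-\Omega(\ell)},
\]
so by Case 1 we also get $\Pr_X[\phi'=0] \le 2^{-\Omega(\ell)} + \epsilon m^{O(\ell)}$. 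Since $\phi \ge \phi'$ pointwise, both $\Pr_U[\phi=0]$ and $\Pr_X[\phi=0]$ are at most this quantity. Their difference is therefore bounded by the same amount. Small $\ell$ is absorbed into the $O(\cdot)$.

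\textbf{The ``in particular'' clause.} Choose $\ell = C\log(1/\delta)$ so that $2^{-\Omega(\ell)} \le \delta/2$, and then choose $\epsilon = \delta\, m^{-O(\ell)}/2 = m^{-O(\log(1/\delta))}$.

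The main obstacle is Case 2, that is, handling formulas whose total term mass is large. There the truncated inclusion--exclusion gap does not decay, and the prefix trick is what rescues the argument. A further care point is that the Bonferroni sandwiching must be applied with correct parity, and that the bound on $\mathbb{E}_U[S_{\ell+1}]$ relies on the disjointness of the terms.
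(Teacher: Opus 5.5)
The paper does not prove this statement; it imports it verbatim from \cite{DETT10}. Your argument is correct and is essentially the original De--Etesami--Trevisan--Tulsiani proof: pointwise Bonferroni sandwiching of $1-\phi$ by truncated inclusion--exclusion, with each product of terms being an AND that small bias fools, a case split on $\sum_i p_i$, and the shortest-prefix trick when the mass is large. The one gap is in the ``in particular'' clause, which silently assumes $C\log(1/\delta)\le m$. When $C\log(1/\delta) > m$, note that $B_m = 1-\phi$ exactly (since $S_k\equiv 0$ for $k>m$), so no mass condition is needed; the error is then at most $\epsilon 2^m \le \epsilon\,\delta^{-C}$, which is $\le\delta$ once $\epsilon = m^{-c\log(1/\delta)}$ with $c \ge C+1$ (using $m\ge 2$).
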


\subsection{Connectedness Given Large Minimum Cut}

In this section, we show a near-optimal sampling scheme that preserves connectedness while needing only $O(\log m\cdot \log\log m)$ random bits.

\begin{theorem}\label{thm:connectedness-from-dnfs}
There exists an absolute constant $c$ such that the following holds. Let $X^{(1)}, \dots,X^{(c\log n)}$ be independent copies of any distribution on $\{0, 1\}^{m\times \log\ell}$ that $0.1$-fools the following collection of read-once DNFs:
\[
\mathcal{F} = \left\{\bigvee_{i\in S}\left(\bigwedge_{j=1}^{\log \ell} x_{ij}\right): S\subseteq [m], |S|\leq \ell\right\}
\]
Then, for every undirected unweighted graph $G = (V, E)$ with $n$ vertices, $m$ edges, and minimum cut $\geq \ell$, subsampling the edges of $G$ using distribution $Y$ on $\{0, 1\}^m$ defined as
\[
Y_k = \bigvee_{i = 1}^{c\log n} \left(\bigwedge_{j=1}^{\log \ell} X_{kj}^{(i)}\right),
\]
yields a connected graph with probability $1-1/\poly(m)$.
\end{theorem}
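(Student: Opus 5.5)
The plan is to avoid any union bound over cuts. Instead I will argue round by round over the $c\log n$ independent copies, tracking the number of connected components. For each copy $i$, let $H_i\subseteq E$ be the set of edges $k$ with $\bigwedge_{j=1}^{\log\ell}X^{(i)}_{kj}=1$. By the definition of $Y$, the sampled edge set is exactly $H_1\cup\dots\cup H_{c\log n}$. Write $N_i$ for the number of connected components of $(V, H_1\cup\dots\cup H_i)$, with $N_0=n$. It suffices to show that $\Pr[N_{c\log n}\geq 2]\leq 1/\poly(m)$.

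\textbf{Step 1 (one round, one component).} Fix $H_1,\dots,H_{i-1}$; since the copies are independent, $X^{(i)}$ is still distributed as the original distribution. Let $A$ be any current component with $N_{i-1}\geq 2$. Then $\delta(A)$ is a nonempty cut of $G$, so $|\delta(A)|\geq \ell$. Choose any $S\subseteq \delta(A)$ with $|S|=\ell$. The event ``$H_i$ contains an edge leaving $A$'' contains the event that the read-once DNF $\bigvee_{k\in S}\bigwedge_{j}x_{kj}$, which lies in $\mathcal{F}$, evaluates to $1$ on $X^{(i)}$. Under uniform bits each term is true with probability $1/\ell$, independently across terms, so the DNF is true with probability $1-(1-1/\ell)^{\ell}\geq 1-1/e$. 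Since the distribution $0.1$-fools $\mathcal{F}$, the DNF is true on $X^{(i)}$ with probability at least $1-1/e-0.1\geq 0.53$.

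\textbf{Step 2 (component count contracts).} Call a component \emph{active} if $H_i$ contains an edge leaving it. Every active component merges with at least one other component, so the new components formed from active ones each contain at least two old components. Hence $N_i\leq (\#\text{inactive}) + (\#\text{active})/2$. By Step 1 and linearity of expectation,
\[
\mathbb{E}[N_i\mid H_{<i}]\leq N_{i-1}\left(1-\tfrac{0.53}{2}\right)\leq 0.74\,N_{i-1}
\qquad\text{whenever } N_{i-1}\geq 2.
\]
To make this iterate cleanly, I pass to $N_i-1$. If $N_{i-1}\geq 2$ then $0.74N_{i-1}-1\leq 0.74(N_{i-1}-1)$. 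If $N_{i-1}=1$ then $N_i-1=0$. So in both cases $\mathbb{E}[N_i-1\mid H_{<i}]\leq 0.74\,(N_{i-1}-1)$.

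\textbf{Step 3 (conclude).} Iterating the bound gives $\mathbb{E}[N_{c\log n}-1]\leq n\cdot 0.74^{c\log n}$. By Markov's inequality, $\Pr[N_{c\log n}\geq 2]\leq n^{1-c\log_2(1/0.74)}$. Since $G$ is connected (its minimum cut is $\geq\ell\geq 1$), we have $n\leq m+1$, so choosing $c$ a large enough absolute constant makes this $1/\poly(m)$.

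The main obstacle is conceptual: a union bound over exponentially many cuts is impossible with a small seed. The round-by-round potential argument resolves this, because each round only requires fooling a single DNF per current component, and fooling in expectation suffices. The only subtle points are the conditioning and the handling of the case $N=1$. Conditioning is safe because $X^{(i)}$ is independent of $H_{<i}$, so the cut $\delta(A)$ may be chosen adaptively from earlier rounds. The case $N=1$ is handled by the shift to $N_i-1$ in Step 2.
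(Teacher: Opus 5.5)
Your per-round ingredient is the same as the paper's, and you aggregate the rounds differently. The paper also contracts the current components after each round. A contracted vertex has degree $\ge \ell$ (this is your $|\delta(A)|\ge\ell$), and one DNF from $\mathcal{F}$ per vertex gives success probability $\ge 1/2$. The paper then applies Markov's inequality to the number of unsuccessful vertices, so each round shrinks the vertex count by a constant factor with probability $\ge 0.4$, and it finishes with a Chernoff bound over the $c\log n$ rounds. You instead prove the one-step contraction $\mathbb{E}[N_i-1\mid H_{<i}]\le 0.74\,(N_{i-1}-1)$, iterate it with the tower property, and apply Markov once at the end.

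Your route has real advantages. The paper's Chernoff step is applied to round-success indicators whose probability is bounded only conditionally on the past, so strictly it needs a martingale or stochastic-domination form of Chernoff; your potential argument avoids this. It also yields the explicit bound $(n-1)\,0.74^{c\log n}$. The count $N_i\le \#\text{inactive}+\#\text{active}/2$ makes precise the paper's informal ``the number of vertices is reduced by a constant factor.'' The shift to $N_i-1$ cleanly handles the absorbing state $N=1$.

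The last sentence of your proof is wrong as written. Your failure bound $n^{1-c\log_2(1/0.74)}$ is $1/\mathrm{poly}(n)$. To call it $1/\mathrm{poly}(m)$ you need $n$ to be at least polynomial in $m$, and $n\le m+1$ points the opposite way, so it gives nothing.

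The missing fact is that $G$ is simple, as in the paper's preliminaries. Then $m\le n^2$, so $n\ge\sqrt{m}$ and $n^{1-c\log_2(1/0.74)}\le m^{-(c\log_2(1/0.74)-1)/2}$. Simplicity is genuinely needed, because there are only $c\log n$ rounds. Take $n=2$ with $m$ parallel edges, and say $\ell\ge 2$. A distribution that outputs all zeros with probability $0.13$ and is uniform otherwise still $0.1$-fools $\mathcal{F}$, since $\mathbb{E}_U[f]\le 3/4$ for every $f\in\mathcal{F}$. Yet it leaves $G$ disconnected with probability at least $0.13^{c}$, a constant independent of $m$.

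The paper's proof tacitly makes the same identification when it speaks of $O(\log m)$ rounds although only $c\log n$ are available. Replace your final justification with the $m\le n^2$ argument and your proof is complete.
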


\begin{proof}

Intuitively, the above construction is the following. We start by taking a distribution $X$ that fools read-once DNFs with $\widetilde{O}(\ell)$ variables. Grouping the elements of $X$, we obtain the distribution $\overline{X}$ on $\{0, 1\}^m$ with smaller marginals, where $\overline{X}_k=\bigwedge_{j=1}^{\log \ell}X_{kj}$. Then, in $O(\log n)$ independent rounds, we subsample the edges of $G$ using independent copies of $\overline{X}$, and then take the union of edges sampled in all rounds, yielding $Y = \bigvee_{i=1}^{c\log n}\overline{X^{(i)}}$ as claimed. 

Once a connected component is formed after some round of sampling, it does not matter for connectivity what edges inside this component are sampled in the subsequent rounds. Hence, for our analysis it would be useful to imagine that after each round of sampling we contract the graph on newly formed connected components.

Hence, consider any sampling round $i$ and any vertex $v$ of this round ($v$ is a result of contracting $G$ on the connected components of the subgraph of $G$ whose edge set indicator vector is $\overline{X^{(1)}}\lor \overline{X^{(2)}}\lor\dots\log \overline{X^{(i-1)}}$). We will bound the probability that we sample at least one edge incident to $v$ in this round. Since the minimum cut does not decrease after contracting, $v$ has degree at least $\ell$. Take any subset $S\subseteq [m]$ of edges incident to $v$, such that $|S| = \ell$. Then, one of the edges in $S$ is sampled in this round if and only if $\phi_v\left(X^{(i)}\right) = 1$, where
\[
\phi_v\left(x\right) = \bigvee_{k\in S}\overline{x}_k = \bigvee_{k\in S}\left(\bigwedge_{j=1}^{\log \ell} x_{kj}\right)
\]

Since $|S| = \ell$, $\phi_v\in \mathcal{F}$, and so it is $0.1$-fooled by $X^{(i)}$. Under the uniform distribution $U$ on $\{0, 1\}^{m\times \log \ell}$, each edge in $S$ is sampled by $\overline{U}$ with probability $(1/2)^{\log \ell} = 1/\ell$, and so 
\[
\Pr[\phi_v(U) = 1] = 1 - (1- 1/\ell)^{\ell} \geq 1-1/e\geq 0.6.
\]
Therefore, 
\[
\Pr\left[\phi_v\left(X^{(i)}\right) = 1\right] \geq 0.6 - 0.1 = 1/2.
\]

Therefore, in round $i$, each vertex $v$ finds a neighbor with probability at least $1/2$. Assume that in round $i$ (after some contractions) we have $n' \leq n$ vertices, and let $N$ be the number of vertices that do not find a neighbor in this round. Then, $\mathbb{E}[N] \leq n'/2$. Therefore, by Markov's inequality
\[
\Pr[N \geq 0.9n'] \leq \frac{\mathbb{E}[N]}{0.9n'} \leq \frac{n'/2}{0.9n'} < 0.6
\]
Therefore, with probability $\Omega(1)$, a constant fraction of vertices find a neighbor, meaning that the total number of vertices after contracting on newly created components is reduced by a constant factor. Since we only need $O(\log n)$ such reductions to contract everything to a single vertex, $X^{(i)}$'s are independent, and the success probability is constant within each round, Chernoff bound implies that with probability $1-1/\mathrm{poly}(m)$, after $O(\log m)$ steps we will contract all vertices. This implies that the union of all sampled edges would form a connected graph.

\end{proof}

Using the fact that low-bias distributions fool read-once DNFs (\cref{thm:fooling-dnfs}), we obtain the following result:
\begin{corollary}\label{cor:connectedness-distribution-dnfs}
There exists an absolute constant $\kappa$ and an explicit distribution $Y = (y_1, \dots y_m)$ with seed length $O(\log m\cdot \log \log m)$ and marginals $\leq 1/2$ such that for every undirected unweighted graph $G = (V, E)$ with $n$ vertices, $m$ edges, and minimum cut $\geq \ell = \kappa\log m$, subsampling the edges of $G$ under $X$ gives a connected graph with probability $1 - 1/\poly(m)$.
\end{corollary}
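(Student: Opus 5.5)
The plan is to instantiate \cref{thm:connectedness-from-dnfs} with $\ell = \kappa\log m$ and build each copy $X^{(i)}$ from an explicit small-bias distribution via \cref{thm:fooling-dnfs}. For simplicity assume $\ell$ is a power of $2$; otherwise round $\kappa\log m$ down to a power of two, which only weakens the min-cut requirement by a factor of $2$, absorbed into $\kappa$. Fix the constant $c$ from \cref{thm:connectedness-from-dnfs}. Note that $c\log n \le 2c\log m$, since $n \le m+1$ for a connected graph, and a graph with min cut at least $\ell \geq 1$ is connected. So we may use $c' = 2c$ copies per $\log m$ and keep the construction independent of $n$.

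The key step is choosing the bias. Each formula in $\mathcal{F}$ is a read-once DNF with at most $\ell$ terms, each an AND of $\log\ell$ positive literals, over the $m\log\ell$ variables of $\{0,1\}^{m\times\log\ell}$. By \cref{thm:fooling-dnfs}, applied with the number of terms at most $\ell$ and $\delta = 0.1$, an $\epsilon$-biased distribution with $\epsilon = \ell^{-O(1)}$ will $0.1$-fool every $\phi\in\mathcal{F}$. The number of terms is $\ell$, not $m$, and this is what makes $\log(1/\epsilon) = O(\log\ell) = O(\log\log m)$.

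\cref{thm:low-bias-distribution}, taken with $k$ equal to the full length $m\log\ell$, then gives an explicit $\epsilon$-biased distribution on $m\log\ell$ bits with seed length
\[
O(\log(m\log\ell) + \log\log(m\log\ell) + \log(1/\epsilon)).
\]
This is where I expect the main obstacle. Read literally, the $\log k$ term is $\Theta(\log m)$, and multiplied by $c\log m$ copies it would give $O(\log^2 m)$ bits, which is too many.

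To avoid this, use that every formula in $\mathcal{F}$ depends on at most $\ell\log\ell$ variables. So it suffices that $X^{(i)}$ is $(\ell\log\ell)$-wise $\epsilon$-biased. Restricted to any set of at most $\ell\log\ell$ coordinates, such a distribution is $\epsilon$-biased on those coordinates, and \cref{thm:fooling-dnfs} applies to the restriction. With $k = \ell\log\ell$, \cref{thm:low-bias-distribution} gives seed length
\[
O(\log(\ell\log\ell) + \log\log(m\log\ell) + \log(1/\epsilon)) = O(\log\log m).
\]

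Taking $2c\log m$ independent seeds then gives total seed length $O(\log m\cdot\log\log m)$. Each $Y_k$ is an OR of ANDs of seed-generated bits, so $Y$ is computable in polynomial time and is therefore explicit.

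It remains to check the marginals. Each $\overline{X^{(i)}}_k$ is an AND of $\log\ell$ bits. Its probability is $1/\ell$ up to the bias error: by \cref{obs:fooling-and} it is within $\epsilon$ of $1/\ell$, so it is at most $2/\ell$. A union bound over the $2c\log m$ rounds gives
\[
\Pr[Y_k = 1] \le \frac{4c\log m}{\kappa\log m} = \frac{4c}{\kappa} \le \frac12
\]
once $\kappa \ge 8c$. Finally, \cref{thm:connectedness-from-dnfs} gives connectivity with probability $1-1/\poly(m)$ for every graph with minimum cut at least $\ell$.
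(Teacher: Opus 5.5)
Your proposal is correct and follows the paper's proof essentially step for step. Like the paper, you feed independent copies of a $\widetilde{O}(\log m)$-wise, $1/\mathrm{polylog}(m)$-biased distribution into \cref{thm:connectedness-from-dnfs}, and the limited-wise bias is exactly what keeps each copy at $O(\log\log m)$ seed bits. You also bound the marginals via \cref{obs:fooling-and}, as the paper does. The differences are cosmetic: you use a union bound over rounds in place of the paper's product $(1-2/\kappa\log m)^{c\log n}$, round $\ell$ to a power of two, and replace $c\log n$ by $2c\log m$ so that $Y$ does not depend on $n$.
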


\begin{proof}
Let $X$ be a $\widetilde{O}(\log m)$-wise $(1/\mathrm{polylog}(m))$-biased distribution on $\{0, 1\}^{m\times \log\log m}$. By \cref{thm:low-bias-distribution}, we can construct such $X$ explicitly using $O(\log \log m)$ random bits. By \cref{thm:fooling-dnfs}, $X$ fools read-once DNFs with $\widetilde{O}(\log m)$ variables up to any desired constant probability, meaning that by \cref{thm:connectedness-from-dnfs}, sampling the edges of $G$ using $Y_k\in \{0, 1\}^m$,
\[
Y_k = \bigvee_{i = 1}^{c\log n} \left(\bigwedge_{j=1}^{\log \ell} X_{kj}^{(i)}\right),
\]
yields a connected graph with probability $1 - 1/\poly(m)$. To construct $Y$, we need $O(\log n)$ independent copies of $X$, meaning that the total number of bits needed to generate $Y$ is $O(\log m\cdot \log\log m)$.

It remains to bound the marginal probabilities. Consider any edge $e$. In any given round $i$, $e$ is sampled if and only if
\[
\bigwedge_{j=1}^{\log \ell} X_{ej}^{(i)} = 1
\]
Under uniform distribution, the above formula is satisfied with probability $(1/2)^{\log \ell}= 1/\ell = 1/\kappa\log m$. Since $\ell = \kappa\log m$, this formula is an AND of $O(\log\log m)$ variables, then by \cref{obs:fooling-and}, it is $(1/\mathrm{polylog}(m))$-fooled by $X^{(i)}$, meaning that we sample $e$ in this round with probability at most $2/\kappa \log m$. Hence, the probability that $e$ is not sampled in all $c\log(n)$ rounds is at least $(1 - 2/\kappa\log(m))^{c\log n} \geq 1/2$ for sufficiently large constant $\kappa$. Hence, the marginal probability for each edge is at most $1/2$.
\end{proof}

\subsection{Unique Cut Survival Under Subsampling}

The above approach easily generalizes to the setting of unique cut survival. 

\begin{theorem}\label{thm:unique-survival-from-dnfs}
There exists an explicit distribution $Y = (y_1, \dots y_m)$ with seed length $O(\log m\cdot \log\log m)$ such that for every undirected unweighted graph $G = (V, E)$ with $n$ vertices, $m$ edges, and minimum cut $\ell \leq \kappa\log n$ (for $\kappa$ as in \cref{cor:connectedness-distribution-dnfs}), and every cut $C\subseteq E$ of size $\in [\ell, 1.01\ell]$, subsampling the edges of $G$ using $Y$ yields a subset $Q\subseteq E$ such that with probability $> 0$
\begin{enumerate}
    \item $C\subseteq Q$;
    \item for every other cut $C'$, $C'\not\subseteq Q$.
\end{enumerate}
\end{theorem}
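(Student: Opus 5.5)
The plan is to interpret $Q$ as the set of edges \emph{not} selected by the sample, i.e.\ $Q = E \setminus \{e : y_e = 1\}$. Write $A, B$ for the two sides of $C$. Then the two requirements become:
\begin{itemize}
  \item no edge of $C$ is sampled;
  \item the sampled edges connect $A$ internally and connect $B$ internally.
\end{itemize}
Indeed, if the sampled graph has exactly the two components $A$ and $B$, then any cut $C' \subseteq Q$ is crossed by no sampled edge. Its defining set is therefore a union of components, so $C' = C$ (or $C'$ is empty). Since we only need probability $>0$, we may multiply per-round success probabilities over independent rounds.

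\textbf{Step 1 (the sides are well connected).} Take $T \subsetneq A$ nonempty. The cuts $\delta_G(T)$ and $\delta_G(T \cup B)$ both have size $\geq \ell$. Their sum is at most $2|E(T, A\setminus T)| + |C|$, so
\[
|E(T,A\setminus T)| \;\geq\; \tfrac{2\ell - 1.01\ell}{2} \;\geq\; 0.49\,\ell .
\]
Symmetrically for $B$, so $G[A]$ and $G[B]$ both have minimum cut $\geq \ell' := 0.49\ell$.

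\textbf{Step 2 (the construction).} $Y$ will be the construction of \cref{thm:connectedness-from-dnfs}, instantiated with the following choices:
\begin{itemize}
  \item $O(\log n)$ independent rounds;
  \item per-round rate $1/\ell''$ for $\ell''$ a power of two of order $\ell'$;
  \item each round uses an $\widetilde O(\log m)$-wise $(1/\mathrm{polylog}(m))$-biased $X^{(i)}$.
\end{itemize}
By \cref{thm:low-bias-distribution} the seed length is $O(\log m\log\log m)$. Fix a round, and contract the components of $G\setminus C$ formed so far inside $A$ and inside $B$. Let $E_C$ be the event that no edge of $C$ is sampled in this round. For each current super-vertex $v$, choose $\ell'$ of its incident edges inside its own side; this is possible by Step 1, since contraction does not decrease min cut. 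Let $\phi_v$ be the read-once DNF recording that one of these edges is sampled, and let $\psi_C$ be the read-once DNF recording that an edge of $C$ is sampled. These formulas use disjoint variables, so $\phi_v \vee \psi_C$ is again a read-once DNF with $O(\ell)$ terms. Hence \cref{thm:fooling-dnfs} (fooling error a small constant $\eta$) gives
\[
\Pr[\neg\phi_v \wedge E_C] \leq \Pr_U[\neg\phi_v]\Pr_U[E_C] + \eta \le e^{-1}a_0 + \eta ,
\]
and likewise $\Pr[E_C] \geq a_0 - \eta$. Here $a_0 = (1-1/\ell'')^{1.01\ell} \geq e^{-2.1}$ is an absolute constant.

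\textbf{Step 3 (joint success in one round).} Let $N$ be the number of super-vertices finding no neighbor. Then $\mathbb{E}[N\cdot \mathbf 1_{E_C}] \le n'(e^{-1}a_0+\eta)$. Apply Markov conditionally on $E_C$, with $\eta$ a sufficiently small constant multiple of $a_0$. This gives that with probability at least some absolute constant $\beta > 0$, the event $E_C$ holds and at least a constant fraction of super-vertices merge. In that case the number of super-vertices in $G\setminus C$ drops by a constant factor.

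\textbf{Step 4 (combining rounds).} The rounds use independent copies. Hence with probability $\geq \beta^{O(\log n)} > 0$ every round succeeds, and then after $c\log n$ rounds (for $c$ large) each side is contracted to a single vertex while no edge of $C$ was ever sampled. This yields both properties of $Q$.

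The main obstacle is Step 3: correlating ``avoid $C$'' with ``make progress.'' The key idea is to fold $\psi_C$ into the same read-once DNF as $\phi_v$, so that the fooling guarantee controls the joint event. Some care is also needed to keep every formula inside the fooled family, since the term count grows to $O(\ell)$ and the rate changes from $1/\ell$ to $1/\ell''$. I expect this to be only a constant-factor adjustment of the parameters in \cref{thm:connectedness-from-dnfs} and \cref{cor:connectedness-distribution-dnfs}.
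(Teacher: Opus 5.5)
Your proposal follows the paper's proof closely. The shared ingredients are the complement trick ($Q$ is the set of unsampled edges), $O(\log n)$ independent rounds of the OR-of-ANDs construction over small-bias $X^{(i)}$, and contraction inside each side after every round. Most importantly, you fold the ``avoid $C$'' formula into the same read-once DNF as the progress formula $\phi_v$, so fooling controls the joint event. Your two variations are fine. The counting bound $|E(T,A\setminus T)|\ge 0.495\ell$ replaces the paper's cut-space argument, which only gives $0.2\ell$. Requiring every round to both avoid $C$ and make progress, with probability $\beta^{O(\log n)}>0$, replaces conditioning on avoidance and then applying Chernoff, and is slightly cleaner since only positive probability is needed.

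\textbf{Gap: small $\ell$.} The step that fails as written is the small-$\ell$ regime, which the paper treats as a separate case.
\begin{itemize}
\item Your bound $\Pr_U[\neg\phi_v]\le e^{-1}$ needs $\ell''\le\ell'$. Whenever $\ell'<2$ (the smallest values of $\ell$), this forces $\ell''=1$. Then every term is an empty AND, every edge of $C$ is sampled in every round, $a_0=0$, and the argument collapses.
\item More generally, $a_0\ge e^{-2.1}$ is not correct. With $\ell''\le\ell'$ a power of two, $1.01\ell/\ell''$ can approach $4.1$. For small $\ell''$ the estimate $(1-1/\ell'')^{\ell''}\approx e^{-1}$ is also poor. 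Only a positive absolute lower bound on $a_0$ is needed, though.
\end{itemize}
There are two ways to repair this.
\begin{itemize}
\item The paper's fix: for $\ell\le 100$, take $Y$ uniform over all edge sets of size at most $1.01\ell$. This has support $m^{O(1)}$, and $Q=C$ with positive probability. No other nonempty cut $C'$ with $C'\subseteq C$ exists, since then $C\setminus C'=C\oplus C'$ would be a nonempty cut of size at most $0.01\ell<\ell$.
\item Alternatively, force $\ell''\ge 2$ and replace $e^{-1}$ by some constant $1-\gamma$.
\end{itemize}

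\textbf{Minor slip in Step 2.} The inequality $\Pr_U[\neg\phi_v]\Pr_U[E_C]\le e^{-1}a_0$ runs the wrong way, because $a_0\le\Pr_U[E_C]$. Keep $p_C:=\Pr_U[E_C]$ explicit and bound $\Pr[\neg\phi_v\mid E_C]\le (e^{-1}p_C+\eta)/(p_C-\eta)$. This is at most $(e^{-1}+0.01)/0.99$ once $\eta\le 0.01a_0$, and the Markov step then goes through unchanged.
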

\begin{proof}
If $\ell \leq 100$, we take the uniform distribution on all subsets of $\leq 1.01\ell$ edges. This distribution has support size $\mathrm{poly}(n) = 2^{O(\log n)}$, and so can be generated using $O(\log m)$ random bits. Moreover, for each cut $C$ of size $\leq 1.01\ell$, there would be a sample containing exactly the edges of $C$. Now, assume $\ell \geq 100$.

For $\ell \geq 100$, we will construct the distribution $\widetilde{Y}$ on $\{0, 1\}^m$ such that for each cut $C$ of size $\leq 1.01\ell$, with probability $> 0$ \emph{no edge of $C$} is sampled under sampling from $\widetilde{Y}$, while for every other cut $C'$, \emph{at least one} edge is sampled. Then, by taking $Y_k = 1- \widetilde{Y}_k$, we immediately obtain the desired distribution $Y$.

We let $X^{(1)}, \dots,X^{(c\log n)}$ to be independent copies of a $\widetilde{O}(\log m)$-wise $(1/\mathrm{polylog}(m))$-biased distribution given by \cref{thm:low-bias-distribution}. We construct our distribution $\widetilde{Y}$ to be
\[
\widetilde{Y}_k = \bigvee_{i = 1}^{c\log n} \left(\bigwedge_{j=1}^{\log \ell} X_{kj}^{(i)}\right),
\]
where $c$ is some absolute constant to be fixed later. Note that by \cref{thm:low-bias-distribution}, each $X^{(i)}$ can be generated using $O(\log m)$ random bits, and so the total seed length of $\widetilde{Y}$ is $O(\log m\cdot \log \log m)$.

We again view this procedure as first grouping the elements of $X$ to obtain a distribution $\overline{X}_k = \bigwedge_{j=1}^{\log \ell}X_{kj}$ over $\{0, 1\}^m$, sampling the edges using $O(\log m)$ independent copies of $\overline{X}$, and then taking the union of edges sampled. In a given round $i$, no edge of $C$ is sampled if and only if $\phi_C\left(X^{(i)}\right) = 0$, where
\[
\phi_C\left(x\right) = \bigvee_{k\in C}\overline{x}_k = \bigvee_{k\in C}\left(\bigwedge_{j=1}^{\log \ell} x_{kj}\right),
\]
Note that $\phi_C$ is a read-once DNF with $\widetilde{O}(\log n)$ variables, meaning that by \cref{thm:fooling-dnfs}, it is $0.01$-fooled by $X^{(i)}$. Under the uniform distribution $U$ on $\{0, 1\}^{m\times \log\ell}$, each edge in $C$ is sampled with probability $(1/2)^{\log \ell} = 1/\ell$. Since $|C| \leq 1.01\ell$, 
\[
\Pr[\phi_C(U) = 0] = (1-1/\ell)^{|C|} \geq  (1-1/\ell)^{1.01\ell} \geq 0.99\cdot (1/e^{1.01}) > 0.36
\]
for $\ell \geq 100$. Therefore, under the distribution $X^{(i)}$, 
\[
\Pr\left[\phi_C\left(X^{(i)}\right) = 0\right] \geq 0.36-0.01 = 0.35.
\]
Therefore, since $X^{(1)}, \dots X^{(c\log n)}$ are independent, there is a non-zero probability that $\phi_C\left(X^{(i)}\right) = 0$ for all $i$, meaning that no edge of $C$ is sampled in $\widetilde{Y}$.

It remains to prove that conditioned on no edge of $C$ being sampled in $\widetilde{Y}$, we would sample at least one edge from each from each of the remaining cuts. The proof of this claim is essentially identical to the proof of \cref{thm:connectedness-from-dnfs}. We show that if we contract the connected components after each round of sampling, after $O(\log n)$ rounds we will get two components separated by $C$ with high probability (conditioned on not sampling edges in $C$). 

Consider any round $i$ and any vertex $v$ in this round. The set of edges incident to $v$, $C'$, is a cut $(\{v\}, V\setminus\{v\})$. We note that since the minimum cut is $\ell$, and $|C|\leq 1.01\ell$, we must have that $|C'\setminus C| \geq 0.2\ell$ (this is formally proven in \cref{clm:removeCutNewCutSize} of the Appendix). Hence, we can find a set $A\subseteq |C'\setminus C|$, $|A| = 0.2\ell$ of edges incident to $v$ that are not in $C$. We sample one of these edges in round $i$ if and only if $\phi_v\left(X^{(i)}\right) = 1$, where
\[
\phi_v(x) := \bigvee_{k\in A} \overline{x}_i = \bigvee_{k\in A}\left(\bigwedge_{j=1}^{\log \ell} x_{kj}\right).
\]
We need to bound $\Pr\left[\phi_v\left(X^{(i)}\right) = 1\mid \phi_C\left(X^{(i)}\right) = 0\right]$, i.e. the probability that $v$ finds a neighbor in round $i$ conditioned on no edge of $C$ being sampled. Note that:
\begin{equation*}
\begin{aligned}
\Pr\left[\phi_v\left(X^{(i)}\right) = 1\mid \phi_C\left(X^{(i)}\right) = 0\right] &= 1 - \Pr\left[\phi_v\left(X^{(i)}\right) = 0\mid \phi_C\left(X^{(i)}\right) = 0\right] \\ \\ &=1 - \frac{\Pr\left[\left(\phi_v\lor \phi_C\right)\left(X^{(i)}\right) = 0\right]}{\Pr\left[\phi_C\left(X^{(i)}\right) = 0\right]}
\end{aligned}
\end{equation*}
We have already seen that $\Pr\left[\phi_C\left(X^{(i)}\right) = 0\right] \geq 0.35$. Finally, note that $\phi_v\lor \phi_C$ is a read-once DNF with $\widetilde{O}(\log n)$ variables, meaning that it is $0.01$-fooled by $X^{(i)}$. Under the uniform distribution $U$ on $\{0, 1\}^{m\times \log \ell}$, 
\[
\Pr\left[(\phi_v\lor \phi_C)(U) = 0\right] = (1-1/\ell)^{|C| + |A|} \leq (1-1/\ell)^{\ell + 0.2\ell} \leq 1/e^{1.2} \leq 0.31,
\]
meaning that under $X^{(i)}$,
\[
\Pr\left[\left(\phi_v\lor \phi_C\right)\left(X^{(i)}\right) = 0\right] \leq 0.31 + 0.01 = 0.32
\]
. Therefore
\[
\Pr\left[\phi_v\left(X^{(i)}\right) = 1\mid \phi_C\left(X^{(i)}\right) = 0\right] \geq 1 - \frac{0.31}{0.35} > 0.1
\]
Therefore, conditioned on no edge of $C$ being sampled, in every round, each vertex still has finds a neighbor with probability $\geq 0.1$. Similarly to the proof of \cref{thm:connectedness-from-dnfs}, this implies that in each round, with probability $\Omega(1)$ a constant fraction of edges find a neighbor, meaning that the number of connected components decreases by a constant factor, and so in $O(\log n)$ rounds, we will get to two components separated by $C$ with high probability. Hence, taking $Y$ to be the complement of the subset sampled by $\widetilde{Y}$, we indeed get that for all cuts $C$ of size $\in [\ell, 1.01\ell]$, $C$ is the unique surviving cut in $Y$ with positive probability.

\end{proof}

\section{Cycle-Freeness Under Almost Bounded-Independence Sampling}\label{sec:cycleFreeness}

In this section, we consider using (almost) bounded-independence when sampling graphs with large girth. For these graphs, we show that even this bounded-independence sampling yields subsampled graphs with \emph{no cycles} with high probability. 

\subsection{Cycle-Freeness Under High Girth}

To start, we recall the work of Karp, Upfal, and Wigderson \cite{KUW85}, who showed the following claim that characterizes when cycles appear under \emph{uniform sampling} of edges:

\begin{claim}\label{clm:originalSampling}
	Let $G = (V, E)$ be a graph with $m$ edges and shortest cycle length $> 20 \log(m)$. A uniformly random subsample of $E$ at rate $1/2$ will (with high probability):
	\begin{enumerate}
		\item Contain at least $m/4$ edges.
		\item Not have any cycles. 
	\end{enumerate}
\end{claim}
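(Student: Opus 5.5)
The claim has two parts, and I will prove each with a simple probabilistic argument, then take a union bound.

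\textbf{Part 1 (at least $m/4$ edges).} The number of sampled edges is a sum of $m$ independent Bernoulli$(1/2)$ variables, so its mean is $m/2$. A Chernoff bound gives
\[
\Pr[\text{fewer than } m/4 \text{ edges}] \leq e^{-m/16}.
\]
Since the girth exceeds $20\log m$, the graph contains a cycle only if $m \geq 20\log m$. Even so, $e^{-m/16}$ is not automatically $1/\poly(m)$ for small $m$; I would treat ``with high probability'' asymptotically. If the graph is acyclic, part 2 is trivial, and part 1 still holds for large $m$ by Chernoff.

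\textbf{Part 2 (no cycles).} The plan is to count cycles by length and union bound.

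\emph{Counting step.} Let $g > 20\log m$ be the girth and fix a length $L \geq g$. Each cycle of length $L$ is determined by a starting vertex, a direction, and a sequence of edge choices. Counting these walks through degrees alone is too crude. Instead, I would invoke the Subramaniam bound cited in the introduction \cite{Sub95}: the number of cycles of length at most $\alpha g$ is at most $n^{2\alpha}$. Since $G$ may be assumed to have no isolated vertices, $n \leq 2m$, so this is at most $(2m)^{2\alpha}$.

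\emph{Dyadic binning.} Group the cycles into bins of lengths $[2^j g, 2^{j+1} g)$. Bin $j$ contains at most $(2m)^{2^{j+2}}$ cycles. Each cycle in bin $j$ survives with probability $2^{-L} \leq 2^{-2^j g} < m^{-20 \cdot 2^j}$. The union bound for bin $j$ therefore gives
\[
(2m)^{4\cdot 2^j}\, m^{-20\cdot 2^j} \leq m^{-15\cdot 2^j}
\]
for large $m$. Summing over $j \geq 0$ gives $O(m^{-15})$.

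\textbf{Main obstacle.} The crux is the cycle-counting bound. If I cannot simply cite it, my fallback is a self-contained argument in the style of Karp, Upfal, and Wigderson.
\begin{enumerate}
\item Each cycle $C$ of length $L$ is split into $t = \lceil L/(g/2)\rceil$ arcs of length about $g/4$, marked by chosen points.
\item Because the girth is $g$, any two vertices at distance less than $g/2$ are joined by a unique shortest path.
\item Hence each arc is determined by its endpoints, and the cycle is encoded by roughly $t$ vertices, giving $n^{O(L/g)}$ encodings.
\end{enumerate}
This yields $m^{O(2^j)}$ cycles in bin $j$, with a constant absorbed by the factor $20$. Whether the constant $20$ suffices depends on tracking this encoding constant. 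If it does not, I would fall back on adjusting the threshold, since the stated claim only needs some constant.

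Combining both parts with a union bound, the sample has at least $m/4$ edges and no cycles with probability $1 - O(m^{-15}) - e^{-m/16}$, which is high probability.
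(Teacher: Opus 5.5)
Your proof is correct, but it takes a different route from the paper's.

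\textbf{Your route.} You union bound over \emph{all} cycles. You control their number with the Subramaniam cycle-counting bound (or your arc-encoding fallback) and bin them dyadically by length. This is exactly the argument the introduction sketches for the fully random case.

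\textbf{The paper's route.} The paper never counts cycles. It observes that, since the girth exceeds $20\log m$, any two vertices are joined by at most one path of length $10\log m$, because two such paths would close a cycle of length at most $20\log m$. Every cycle that could appear in the sample has length $>20\log m$, so it contains such a path. It therefore suffices to union bound over at most $2m^2$ paths, each surviving with probability $m^{-10}$.

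Your fallback already contains this key fact: arcs shorter than $g/2$ are determined by their endpoints. The paper simply stops after one arc instead of encoding the whole cycle.

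\textbf{What each approach buys.} The paper's version is more than a shortcut. Every event in its union bound involves only $10\log m$ edges, so the same proof goes through essentially verbatim under $(1/m^{100})$-almost $10\log m$-wise independence (\cref{clm:derandomizeSampling}). Your argument instead needs survival probability $2^{-L}$ for cycles of arbitrary length $L$, which bounded independence cannot supply. Your route is the natural template shared with the cut setting and gives a somewhat smaller failure probability with these constants, but it leans on an external counting theorem.

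\textbf{Minor point.} Your arc count $t=\lceil L/(g/2)\rceil$ does not match arcs of length about $g/4$. With arcs of length just under $g/2$ you get at most about $n^{2L/g}$ cycles of length $L$. Since $n\le 2m$ and $g>20\log m$, this is far below $2^{L}$, so the constant $20$ is ample.
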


We include the proof here for completeness:

\begin{proof}
	To start, observe that because the shortest cycle is of length $> 20 \log(m)$, this means that for any pair of vertices $i, j \in V$, there can be at most one path of length $10 \log(m)$ between $i, j$.
	
	Now, let us consider sampling $E$ at rate $1/2$ to yield $\hat{E}$. If $\hat{E}$ contains a cycle, then it must be a cycle of length $>20 \log(m)$, as $E$ itself had no cycles of shorter length, and $\hat{E} \subseteq E$. At the same time, if $\hat{E}$ contains a cycle $C$ of length $>20 \log(m)$, $C$ must contain some path of length $10 \log(m)$. Thus, in order to show that $\hat{E}$ does not contain any cycles, it suffices to show that $\hat{E}$ does not contain any paths of length $10 \log(m)$. To see this, we recall that there is at most one path of such length for each pair of $i,j \in V$, and thus there are at most 
	\[
	\binom{|V|}{2} \leq \binom{2|E|}{2} \leq 2m^2
	\]
	such paths. Because each path survives with probability $\leq \left ( \frac{1}{2} \right )^{10 \log(m)}$, we see that there is no such path with probability $\geq 1 - \frac{2m^2}{m^{10}} \geq 1 - \frac{2}{m^8}$. A simple Chernoff bound yields that the number of edges sampled is $\geq m/4$ with exponentially high probability, and thus conditions (1) and (2) are satisfied with probability $\geq 1- \frac{1}{m^7}$.
\end{proof}

Now, we show how to use $\delta$-almost $k$-wise independent distributions to derandomize the above sampling procedure:

\begin{claim}\label{clm:derandomizeSampling}
	Let $G = (V, E)$ be a graph with $m$ edges and shortest cycle length $> 20 \log(m)$. Consider sampling the edges of $G$ using a $(1/m^{100})$-almost $10 \log(m)$-wise independent distribution with marginals $1/2$. Then, there is a sample from the distribution which:
	\begin{enumerate}
		\item Contains at least $m/10$ edges.
		\item Does not have any cycles. 
	\end{enumerate}
\end{claim}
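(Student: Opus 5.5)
We follow the proof of \cref{clm:originalSampling} step by step. Every event it uses depends on at most $10\log(m)$ edges, and such events can be transferred to the almost-independent distribution. Write $X$ for the sample. Set $L = 10\log(m)$ (rounding to an integer as needed).

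\emph{Cycle-freeness.} As before, girth $> 20\log(m) = 2L$ implies that each pair $i,j \in V$ is joined by at most one path of length exactly $L$. Otherwise two distinct such paths would together contain a cycle of length at most $2L$. So there are at most $2m^2$ paths of length $L$ in total. Any cycle in the sample has length $> 2L$, so it contains a path of length $L$, all of whose edges survive. Hence it suffices to show that no path of length $L$ survives.

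For a fixed path $P$ with $|P| = L \le 10\log(m)$, the indicator of its survival depends only on the coordinates in $P$. The marginal of $X$ on $P$ is within total variation distance $1/m^{100}$ of uniform. Therefore
\[
\Pr[P \subseteq X] \le 2^{-L} + m^{-100} = m^{-10} + m^{-100}.
\]
A union bound over at most $2m^2$ paths bounds the probability that some cycle survives by $\le 2m^{-8} + 2m^{-98}$.

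\emph{Many edges.} This is the step where the Chernoff bound is unavailable, so we use only the first two moments. We have $\mathbb{E}[|X|] = \sum_e \Pr[x_e = 1] \ge m(1/2 - m^{-100})$, since each single coordinate is $m^{-100}$-close to uniform. For the variance, each pair $\{e,f\}$ has joint law within $m^{-100}$ of uniform, so $\operatorname{Cov}(x_e, x_f) = O(m^{-100})$. Summing over all pairs gives
\[
\operatorname{Var}(|X|) \le m/4 + O(m^{2-100}) + O(m \cdot m^{-100}) = O(m).
\]
Chebyshev's inequality then gives
\[
\Pr\bigl[|X| < m/10\bigr] \le \Pr\bigl[\,|X| - \mathbb{E}|X| < -m/3\,\bigr] \le O(m)/(m^2/9) = O(1/m).
\]
An alternative that avoids pairwise correlations is Markov's inequality applied to $m - |X|$, which has mean $\le m/2 + m^{-99}$. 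This gives $\Pr[m - |X| > 9m/10] \le (1/2 + o(1))/(9/10) < 0.6$, a constant bounded away from $1$, and that is enough here.

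\emph{Conclusion.} By a union bound, the sample fails condition 1 or condition 2 with probability at most $0.6 + O(m^{-8}) < 1$ using the Markov route, or $O(1/m)$ using the Chebyshev route. Hence some sample in the support satisfies both conditions.

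The main point needing care is the path-counting step. First, the argument must consider paths of exactly length $L$ (not at most $L$), so that each one is an event on at most $k = 10\log(m)$ coordinates. Second, the uniqueness of such a path for each vertex pair is what keeps the number of events polynomial. The rest is routine.
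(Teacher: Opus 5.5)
Your proposal is correct and follows essentially the same route as the paper: reduce cycle-freeness to non-survival of the at most $2m^2$ paths of length $10\log(m)$ (at most one per vertex pair by the girth bound), bound each path's survival by $2^{-10\log m} + m^{-100}$ via the total variation guarantee, union bound, and then apply Markov's inequality to the edge count, which is exactly your ``alternative'' route. Your Chebyshev argument, and your care that single marginals are only $m^{-100}$-close to $1/2$, are a harmless strengthening that gives success probability $1 - O(1/m)$ rather than a constant; the existence statement does not need it.
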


\begin{proof}
	As in the proof of \cref{clm:originalSampling}, it suffices to prove that none of the $\leq 2m^2$ paths of length $10 \log(m)$ survive the sampling. For each such path, let us denote the constituent edges by $S$. Because our sample space is $\delta$-almost $10\log(m)$-wise independent, we know that over a random sample $X = (x_1, \dots x_m)$ from our space,
	\[
	d_{\mathrm{TV}}(X(S), U(S)) \leq \frac{1}{m^{100}}.
	\]
	In particular, this means that 
	\[
	\Pr[X(S) \neq 1^S] \geq \Pr[U(S) \neq 1^S] - \frac{1}{m^{100}} = 1 - \frac{1}{m^{10}} - \frac{1}{m^{100}} \geq 1 - \frac{2}{m^{10}},
	\]
	which means that 
	\[
	\Pr[\text{path } S \text{ survives}] \leq \frac{2}{m^{10}}.
	\]
	Taking a union bound over all $\leq 2m^2$ paths then yields that with probability $\geq 1 - \frac{4}{m^8}$, no path of length $10 \log(m)$ survives, and thus that no cycle survives the sampling.
	
	Now, because we use a $\delta$-almost $10\log(m)$-wise independent distribution with marginals $1/2$, we know that $\E[\sum_{i = 1}^m X_i]  = m/2$. By a simple Markov bound, we know then that $\Pr[\sum_{i = 1}^m X_i \geq m/10] \geq 0.4$. So, by a union bound, we know that with probability $\geq 0.4 - \frac{4}{m^8}$, a sample from our distribution will have at least $m/10$ edges and not have any cycles. Because this probability is $>0$, this yields the stated claim. 
\end{proof}

We also can extend the result above to graphs with smaller shortest cycle length provided that we decrease the marginals accordingly:
\begin{corollary}\label{cor:smallerGirth}
    Given an undirected unweighted graph $G = (V, E)$ on $m$ edges, if the shortest cycle length of $G$ is $\ell \leq 20 \log(m)$, then sampling the edges of $G$ using \cref{alg:arbMarginals} with $p =(1/2)^{\left\lceil40 \log(m)/\ell\right\rceil}$ and $(1/m^{200})$-almost $\ell$-wise independence yields a connected graph with probability $1 - 1 / \mathrm{poly}(m)$.
\end{corollary}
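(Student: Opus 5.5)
The plan is to follow the template of \cref{clm:derandomizeSampling}, transplanted to the lower-marginal regime via \cref{cor:arbMarginals}. First, by \cref{cor:arbMarginals}, feeding a $(1/m^{200})$-almost $\ell\lceil 40\log(m)/\ell\rceil$-wise independent distribution with marginals $1/2$ into \cref{alg:arbMarginals} gives a $(1/m^{200})$-almost $\ell$-wise independent distribution $X$ with marginals $p=(1/2)^{\lceil 40\log(m)/\ell\rceil}$. The seed length is $O(\log m)$ by \cref{thm:boundedIndSampling}. The key quantitative fact is that for any fixed set $S$ of $\ell$ edges,
\[
\Pr[X(S)=1^S]\le p^{\ell}+m^{-200}\le m^{-40}+m^{-200}.
\]
This is the analogue of the single-path survival bound in \cref{clm:derandomizeSampling}.

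Before going further, I would test the stated conclusion, that the sample is \emph{connected} with probability $1-1/\poly(m)$, against the simplest examples, because connectivity is monotone increasing while everything above only bounds the \emph{survival} of edge sets from above. Take any connected $G$ with $m=O(n)$ edges containing a triangle, so $\ell=3$ and $p\approx m^{-13}$. The expected number of sampled edges is $pm\ll 1$, so by Markov's inequality (which is valid since the marginals are exactly $p$) the sample is \emph{empty} with probability $1-o(1)$. In particular it is disconnected. The same happens for every $\ell\le 20\log(m)$, since $p\le 1/4$ always forces $pm<n-1$ once $m\le 2n$. So the main obstacle is not technical: the conclusion as literally worded fails. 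My first step would therefore be to write down this counterexample explicitly.

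The remaining plan is to prove the statement that the machinery does support, namely the monotone-decreasing event, in case the word ``connected'' is meant in that sense. Every cycle in $G$ has length at least $\ell$. Any cycle longer than $2\ell$ contains a path of exactly $\ell$ edges. Any shorter cycle, of length between $\ell$ and $2\ell$, either contains such a path or is itself a short cycle, and the number of short cycles is at most $m^{O(1)}$ by the \cite{Sub95} count. Each such edge set survives with probability at most $2m^{-40}$ by the display above. A union bound then over at most $\poly(m)$ events, where the path count uses that girth at least $\ell$ allows only $m^{O(1)}$ length-$\ell$ paths when counted through endpoints and a bounded branching argument, gives failure probability $1/\poly(m)$.

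The step I expect to be hardest, beyond the wording issue, is bounding the number of length-$\ell$ paths. When $\ell$ is small, uniqueness of short paths between endpoint pairs (used in \cref{clm:originalSampling}) no longer holds. There I would first try counting paths by walks from each starting edge and charging the $p^{\ell}$ factor against the branching. If that fails, I would raise the independence parameter to $2\ell$ and work with paths of length $\lceil\ell/2\rceil$ instead.
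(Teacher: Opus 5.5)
You are right that ``connected'' is a slip for ``cycle-free''. The paper's proof runs through \cref{clm:derandomizeSampling}, and \cref{clm:UniqueCycleSurvival} invokes this corollary to get a cycle-free sample. Your counterexample to the literal wording is fine. Your reduction is also sound: every cycle either has length exactly $\ell$ or contains a path with exactly $\ell$ edges, and each such $\ell$-edge set survives with probability at most $p^{\ell}+m^{-200}\le 2m^{-40}$.

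The gap is the step you defer, and it carries the whole argument. The union bound needs the number of $\ell$-edge paths to be at most $m^{c}$ for a constant $c<40$ independent of $\ell$, and you never establish this. Your first idea cannot supply it. At $\ell=20\log m$ you have $p=1/4$, while counting walks from each starting edge gives $2m(\Delta-1)^{\ell-1}$. Already for $\Delta=5$ this is $m^{41}/2>p^{-\ell}=m^{40}$. Moreover, the additive $m^{-200}$ error is paid once per event and no branching or charging reduces it, so you genuinely need a polynomial bound on the number of events. Your fallback changes the hypothesis, since the statement fixes $\ell$-wise almost independence. It also does not help: paths with $\lceil \ell/2\rceil$ edges are still not determined by their endpoints, because uniqueness needs length strictly below $\ell/2$.

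The missing idea is to use the girth inside the count. If two distinct $u$--$v$ paths both have fewer than $\ell/2$ edges, their symmetric difference is a nonempty even-degree edge set, so it contains a cycle of length $<\ell$. Hence such short paths are determined by their endpoints. For $\ell\ge 3$, cut an $\ell$-edge path into four consecutive segments of at most $\lceil \ell/4\rceil<\ell/2$ edges each; the case $\ell\le 2$ is trivial. The path is then determined by five of its vertices, so there are at most $(2m)^5$ such paths. Likewise there are at most $(2m)^4$ cycles of length exactly $\ell$. The union bound then gives failure probability $O(m^{-35})$, using only output-level $\ell$-wise almost independence.

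The paper sidesteps this count by subdividing each edge into a path of $s=\lceil 40\log m/\ell\rceil$ edges. The resulting graph $G'$ has girth $s\ell\ge 40\log m>20\log m'$, where $m'=sm<m^2$. The paper then applies \cref{clm:derandomizeSampling} to $G'$, where paths of length $10\log m'$ are below half the girth and hence unique between endpoints. Finally, \cref{alg:arbMarginals} is exactly rate-$1/2$ sampling of $G'$ from a $(1/m^{200})$-almost $s\ell$-wise independent input: an edge of $G$ survives iff its whole subdivided path does. Your repaired route is slightly more general, since it never uses the AND-of-blocks structure. The paper's route instead reuses an existing claim with the simplest possible path count.
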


\begin{proof}
    We construct an auxiliary graph $G'$, whereby we replace each edge in $G$ with a path of length $s = \left\lceil40 \log(m)/\ell\right\rceil$ (adding $s-1$ new vertices for each edge). Immediately, we can see that the shortest cycle in $G'$ is of length $\ell' \geq 40 \log(m)$. At the same time, the number of edges in $G'$ is $m' = sm < m^2$, since $s \leq m$ for sufficiently large $m$. Hence, we get that $\ell' \geq 40\log(m) > 20\log(m')$ and $\log(m') \leq 2\log(m)$, meaning that we can apply \cref{clm:derandomizeSampling} to $G'$. Namely, if one samples the edges of $G'$ using $(1/m^{2\cdot 100})$-almost $(2\cdot 10\log m)$-wise independent random bits with marginals $1/2$, the resulting graph is connected with probability $1 - 1 / \mathrm{poly}(m)$.

    Now, observe that our implementation of $\delta$-almost $\ell$-wise independent sampling with marginals $(1/2)^{\left\lceil 40 \log(m)/\ell\right\rceil}$ from \cref{alg:arbMarginals} exactly corresponds to the above approach.
\end{proof}

\subsection{Unique Cycle Survival Under Careful Sampling Rates}\label{sec:uniqueSurvival}

In this section, we extend the argument from the previous section to also address the setting of \emph{cycles uniquely surviving} under sampling:

\begin{claim}\label{clm:boundedLengthContract}
    Let $G = (V, E)$ be a graph with shortest cycle length $\ell$, and let $C \subseteq E$ denote the edges participating in a cycle of length $\in [\ell, 1.01 \ell]$. Now, let $G / C$ denote the result of contracting the graph $G$ on the edges $C$. Then, the shortest cycle length in $G / C$ is $\geq 0.2 \ell$.
\end{claim}

\begin{proof}
    Note a set of edges $C' \subseteq  G / C$ forms a cycle in $G / C$ if and only if $C' \cup C$ contains a cycle different from $C$ in $G$. Thus, we instead focus on cycles $\hat{C} \subseteq G$, $\hat{C}\neq C$, and lower bound $|\hat{C} \setminus C|$.
    
    It is easy to verify that for any pair of cycles $\hat{C}\neq C$, their symmetric difference $\hat{C}\oplus C = (\hat{C}\cup C)\setminus(\hat{C}\cap C)$ also contains a cycle (if we consider the edge-induced subgraph, any vertex in the symmetric difference would have an even degree).
    
    Assume $|\hat{C}\setminus C| < 0.2\ell$. Then, since $|\hat{C}| \geq \ell$ ($\ell$ is the minimum cycle length), we have that $|\hat{C}\cup C| > 0.8\ell$. Then, since $|C| \leq 1.01\ell$, $|C\setminus C'| < 1.01\ell - 0.8\ell = 0.21\ell$. But then $|\hat{C}\oplus C| = |\hat{C}\setminus C| + |C\setminus \hat{C}| < 0.2\ell + 0.21\ell < \ell$. Since $\hat{C}\oplus C$ contains a cycle, and the minimum cycle length is $\ell$, this is a contradiction. Hence, $|\hat{C}\setminus C| \geq 0.2\ell$, and so the shortest cycle length in $G/C$ is $\geq 0.2\ell$.
\end{proof}

Now, we have the following claim:

\begin{claim}\label{clm:UniqueCycleSurvival}
    Let $G = (V, E)$ be a graph with shortest cycle length $\ell \leq 20\log(m)$, and let $C$ be an arbitrary cycle in $G$ of length $[\ell, 1.01 \ell]$. Consider sampling the edges of $G$ using \cref{alg:arbMarginals} with $(1/m^{500})$-almost $2\ell$-wise independence and marginals $p = (1/2)^{\left\lceil200\log(m)/\ell\right\rceil}$. Then, with probability $> 0$, the resulting sample will:
    \begin{enumerate}
        \item Sample all edges involved in the cycle $C$.
        \item Not contain any other cycle $C'$.
    \end{enumerate}
\end{claim}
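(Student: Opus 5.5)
The plan is to bound the probability that the sample contains all of $C$ from below, and then show that, conditioned on this, another cycle appears only with much smaller probability. Write $s = \lceil 200\log(m)/\ell\rceil$, so $p = 2^{-s}$. By \cref{cor:arbMarginals}, the sample $X$ is $(1/m^{500})$-almost $2\ell$-wise independent with marginals $p$. Since $|C| \le 1.01\ell \le 2\ell$,
\[
\Pr[C \subseteq X] \ge p^{|C|} - m^{-500} \ge 2^{-1.01\ell s} - m^{-500}.
\]
Because $\ell s \le 200\log m + \ell \le 220\log m$ (using $\ell \le 20\log m$), the first term is at least $m^{-223}$, so $\Pr[C\subseteq X] \ge m^{-223}/2$. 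It then suffices to show that
\[
\Pr[C \subseteq X \text{ and some other cycle } C' \subseteq X] < m^{-223}/2 .
\]

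Next I would reduce "another cycle" to short witness sets. As in the proof of \cref{clm:boundedLengthContract}, any cycle $C' \neq C$ with $C'\subseteq X$ gives a cycle $C'\setminus C$ in $G/C$ lying inside the sampled edges. By that claim, $G/C$ has girth $\ell' \ge 0.2\ell$. I then pass to the subdivided graph $H$ used in the proof of \cref{cor:smallerGirth}, where each edge of $G/C$ becomes a path of $s$ edges. $H$ has girth at least $0.2\ell s \ge 40\log m$ and at most $m^2$ edges. Following \cref{clm:originalSampling}, any cycle of $H$ contains a path of length $L = 10\log m \le \ell s/20$ (in $H$-edges). Since the girth exceeds $2L$, each pair of endpoints has at most one such path, so there are at most $O(m^4)$ such paths in total (the vertex count of $H$ is polynomial in $m$).

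Each such path covers at most $\lceil L/s\rceil + 1$ original edges. This count, plus the at most $1.01\ell$ edges of $C$, is at most $2\ell$, so a single event "$C$ and this path's edge set all survive" involves at most $2\ell$ original coordinates. These are exactly the underlying bits $y_j$ from \cref{alg:arbMarginals}. Partially covered original edges only require their covered sub-bits to equal $1$, and this is still an AND of a bounded set of coordinates of $Y$, which the almost-independence controls. The probability of each event is therefore at most $2^{-|C|s}\cdot 2^{-L} + m^{-500} \le m^{-202}\cdot m^{-10} + m^{-500}$ (using $|C|s \ge \ell s \ge 200\log m$). Summing over the $O(m^4)$ paths gives $O(m^{-208}) + O(m^{-496})$, which is far below $m^{-223}/2$... wait, no: $m^{-208}$ is larger than $m^{-223}$.

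This is the main obstacle: the constants. The upper bound on $\Pr[C\subseteq X]$ is $2^{-|C|s}$ with $|C|s$ as large as $1.01\ell s$, while in the joint event I may only use $2^{-|C|s}$ as well. So the right comparison is between the joint event and the same factor $2^{-|C|s}$ multiplied by the path's survival probability. Concretely, I will compare $\Pr[C\subseteq X,\ P\subseteq X] \le p^{|C|}2^{-L} + m^{-500}$ against $\Pr[C\subseteq X] \ge p^{|C|} - m^{-500}$. The union bound then gives a conditional failure probability of at most
\[
\frac{O(m^4)\bigl(p^{|C|}m^{-10} + m^{-500}\bigr)}{p^{|C|} - m^{-500}} = O(m^{-6}) + O(m^{4-500+223}) < 1 .
\]
This uses $p^{|C|} \ge m^{-223} \gg m^{-500}$. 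The care needed is in making the coordinate count of each joint event at most $2\ell$ (if necessary, by choosing $L$ slightly smaller, e.g. $L = \ell s/10$, which still gives survival probability at most $2^{-20\log m}$), and in checking that the subdivision correspondence with \cref{alg:arbMarginals} respects edges partially covered by a path.
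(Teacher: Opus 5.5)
Your argument is correct and is essentially the paper's proof. The paper likewise lower-bounds $\Pr[C \text{ survives}]\ge m^{-250}$, contracts $C$ via \cref{clm:boundedLengthContract}, and handles $G/C$ through the subdivision and short-path union bound. It packages the conditioning as ``the conditional distribution is $(1/m^{200})$-almost $0.2\ell$-wise independent'' and then invokes \cref{cor:smallerGirth} as a black box, whereas you bound each joint event $\{C\subseteq X,\ P \text{ survives}\}$ directly and divide by $\Pr[C\subseteq X]$, which is the same computation unrolled.

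One thing to tidy is the coordinate count. It is best stated in bits of $Y$: the joint event fixes $|C|s+L\le 1.06\,\ell s\le 2\ell s$ of them. Counted in original edges it can exceed $2\ell$ when $\ell=1$. The fallback $L=\ell s/10$ is unnecessary, since it is larger than $10\log m$, not smaller, and would give only girth $\ge 2L$ rather than $>2L$.
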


\begin{proof}
    Because $C$ is a cycle of length $[\ell, 1.01 \ell]$, $\ell \leq 20\log(m)$, and $p = (1/2)^{\left\lceil 200\log(m)/\ell\right\rceil}\geq 1/(2m^{200/\ell})$, the probability $C$ survives sampling is at least:
    \[
    \Pr[C \text{ survives}] \geq \left ( \frac{1}{2m^{200/\ell}} \right )^{1.01 \ell} - \frac{1}{m^{500}} = \frac{1}{2^{1.01\ell}\cdot m^{202}} - \frac{1}{m^{500}} \geq \frac{1}{m^{250}},
    \]
    where the last inequality follows since $2^{1.01\ell} \leq m^{20.2}$ for $\ell \leq 20\log(m)$.

    At the same time, conditioned on $C$ surviving, the only way for another cycle $C'$ to survive is if the contracted graph $G / C$ contains a cycle. 

    Consider the probability distribution of edges in $G/C$ conditioned on $C$ surviving sampling. Take any event $A$ in this distribution (event depending on the edges in $E \setminus C$) that depends on $\leq 0.2\ell$ edges. Then:
\[
\Pr[A|C\text{ survives}] = \frac{\Pr[A\wedge C\text{ survives}]}{\Pr[C\text{ survives}]} = \frac{\Pr_{U}[A\wedge C\text{ survives}] + e_1}{\Pr_U[C\text{ survives}] + e_2},
\]
where $\Pr_U$ denotes the probabilities under fully independent sampling and $e_1, e_2$ are error terms. Since the event of $C$ surviving depends on $|C| \leq 1.01\ell$ edges, then both of the events above depend only on $\leq 2\ell$ edges. Our (unconditional) distribution is $\delta$-almost $2\ell$-wise independent, meaning that $|e_1|, |e_2| \leq \delta$. Moreover, for the independent distribution, $\Pr_{U}[A\wedge C\text{ survives}] = \Pr_{U}[A]\cdot \Pr_U[C\text{ survives}]$. Then note that:
\begin{equation*}
\begin{aligned}
\left|\Pr[A|C\text{ survives}] - \Pr_U[A]\right| = \left|\frac{\Pr_{U}[A]\cdot \Pr_U[C\text{ survives}]+e_1}{\Pr_U[C\text{ survives}] + e_2} - \Pr_U[A]\right| \\ \\ \leq \frac{|e_1| + \Pr_U[A]\cdot |e_2|}{\Pr[C\text{ survives}]} \leq \frac{2\delta}{\Pr[C\text{ survives}]}
\end{aligned}
\end{equation*}

Hence, since $\delta = 1/m^{500}$ and $\Pr[C\text{ survives}] \geq 1/m^{250}$, we get that the distribution of edges in $G/C$ conditioned on $C$ surviving is $(1/m^{200})$-almost $0.2\ell$-wise independent.

Since the shortest cycle length in $G / C$ is at least $\ell / 5$ (as per \cref{clm:boundedLengthContract}), by \cref{cor:smallerGirth} we know that subsampling the edges of $G/C$ using \cref{alg:arbMarginals} with $(1/m^{200})$-almost $0.2\ell$-wise independence and marginals $(1/2)^{\left\lceil 40\log(m)/0.2\ell\right\rceil}$ yields a cycle-free graph with probability $1-1/\poly(m)$\footnote{Note that the number of edges in $G/C$ is actually $m - |C| < m$; however, by using $m$ instead of $m - |C|$ we only decrease the marginal probabilities $(1/2)^{\left\lceil 40\log(m)/0.2\ell\right\rceil}$ and error term $\delta = 1/m^{500}$, meaning that cycle-freeness is also preserved with these parameters. We also note that for $\ell \leq 20\log(m)$, we have $0.2\ell \leq 20\log(m-1.01\ell)\leq 20\log(m - |C|)$ for sufficiently large $m$, and so our invocation of \cref{cor:smallerGirth} is valid.}. Therefore:
    \begin{equation*}
    \begin{aligned}
    \Pr[C \text{ survives } \wedge \not \exists C': C' \text{ survives}] = \Pr[C \text{ survives}]\cdot \Pr[\text{$G/C$ is cycle-free }|C \text{ survives}] \geq& \\ \geq \frac{1}{m^{250}}\cdot (1-1/\poly(m)) > 0&.
    \end{aligned}
    \end{equation*}
This yields the claim. 
\end{proof}

\section{Derandomizing Matroid Basis Finding}\label{sec:framework}

\subsection{Matroids}

In this section, we introduce basic properties of matroids that we will use in later sections.

\begin{definition}[Matroid]
	A matroid $\mathcal{M}$ is given by a ground set $E$ on $m$ elements, and the set of independent sets $\mathcal{I} \subseteq 2^E$. $\mathcal{I} $ satisfies three conditions:
	\begin{enumerate}
		\item $\emptyset \in \mathcal{I}$.
		\item If $S \in \mathcal{I}$ then for any $S' \subseteq S$, $S' \in \mathcal{I}$.
		\item If $S, T \in \mathcal{I}$ and $|S| > |T|$, then there exists $e \in S \setminus T$ such that $T \cup \{e\} \in \mathcal{I}$. 
		\end{enumerate}
\end{definition}

Ultimately, the algorithms we design will be used for finding \emph{matroid bases}:

\begin{definition}[Matroid Basis]
Given a matroid $\mathcal{M} = (E, \mathcal{I})$, a basis of  $\mathcal{M}$ is any set $S \subseteq E$ such that $S \in \mathcal{I}$ and $S$ is of maximal size. 
\end{definition}

\begin{definition}[Matroid Rank]
Given a matroid $\mathcal{M} = (E, \mathcal{I})$, a rank of $\mathcal{M}$ equals the cardinality of its basis.
\end{definition}

In this work, we will focus primarily on the setting of finding bases in \emph{graphic} and \emph{cographic} matroids:

\begin{definition}[Graphic Matroid]
	Given a graph $G = (V, E)$, its corresponding graphic matroid is the matroid with ground set $E$, where a set $S \subseteq E$ is independent if and only if $S$ contains no cycles in $G$.
\end{definition}

\begin{definition}[Cographic Matroid]
Given a graph $G = (V, E)$, its corresponding cographic matroid is the matroid with ground set $E$, where a set $S \subseteq E$ is independent if and only if $S$ does not completely contain any non-empty cut in $G$ (equivalently, there must be some spanning forest $F$ of $G$ such that $S \cap F = \emptyset$).
\end{definition}

\begin{remark}
	In graphic matroids, bases are spanning forests. In cographic matroids, bases are \emph{complements} of spanning forests.
\end{remark}

We will also make use of the \emph{dual} of a matroid basis:

\begin{definition}[Circuit]
	In a matroid $\mathcal{M} = (E, \mathcal{I})$, a circuit is a set $C \subseteq E$ such that $C \notin \mathcal{I}$, but for any $e \in C$, $C \setminus \{e\} \in \mathcal{I}$.
\end{definition}

\begin{remark}
	Note that in a graphic matroid, a circuit is a single, simple cycle in the underlying graph. In a cographic matroid, a circuit is the edges contained in exactly one cut in the underlying graph.
\end{remark}

We will often use the operations of \emph{deletion} and \emph{contraction} in the context of matroids:

\begin{definition}[Deletion and Contraction]
	Given a matroid $\mathcal{M} = (E, \mathcal{I})$ and a set of elements $T \subseteq E$, the result of deleting the set $T$ (denoted $\mathcal{M} \setminus T$)  is the matroid $(E \setminus T, \{S \setminus T: S \in \mathcal{I}\})$.
	
	Given a matroid $\mathcal{M} = (E, \mathcal{I})$ and a set of elements $T \subseteq E$ such that $T \in \mathcal{I}$, the result of contracting on the set $T$ (denoted $\mathcal{M} / T$)  is the matroid $(E \setminus T, \{S \setminus T: S \in \mathcal{I}: T \subseteq S\})$.
\end{definition}

\begin{remark}
	We will often use the fact that in a matroid $\mathcal{M} = (E, \mathcal{I})$, if $T \subseteq E$ and $T \in \mathcal{I}$, then for any basis $B$ of $\mathcal{M} / T$, $B \cup T$ is a basis of $\mathcal{M}$.
\end{remark}

As is typical in the matroid setting, when we work with graphic and cographic matroids, we will not assume access to the underlying graph, and instead assume access only to an \emph{independence oracle}:

\begin{definition}
	Given a matroid $\mathcal{M} = (E, \mathcal{I})$, the independence oracle $\mathrm{Ind}$ of $\mathcal{M}$ is a function which maps $2^E \rightarrow \{0,1\}$ such that for $S \subseteq E$,  $\mathrm{Ind}(S) = \mathbf{1}[S \in \mathcal{I}]$.
\end{definition}

\paragraph{Problem Description}

Our goal will be to design explicit, deterministic, parallel algorithms for finding bases of graphic and cographic matroids. These algorithms operate in a round-by-round manner, where in each round, the algorithm submits a batch of $\mathrm{poly}(m)$ queries to the independence oracle. The algorithm then reads the answers to these queries and prepares the next round of queries. The goal is to minimize both the number of rounds of queries required for finding a matroid basis and the number of queries required per round. 

\subsection{Matroid Algorithm Primitives}

Now, we recall some basic sub-routines for identifying circuits in matroids using only independence queries. First, we have the unique circuit detection algorithm as presented in \cite{khanna2025optimal}.

\begin{algorithm}[H]
	\caption{DetectSingleCircuit($E'$)}\label{alg:detectSingleCycle}
	Initialize the set of critical edges $S = \emptyset$.\\
	\If{\text{Query }$\mathrm{Ind}(E') = 1$}{
		\Return{$\perp$, No circuits.}
	}
	\For{$e \in E'$}{
		\If{\text{Query }$\mathrm{Ind}(E' - e) = 1$}{
			$S \leftarrow S \cup \{ e \}$.
		}
	}
	\If{$S = \emptyset$}{
		\Return{$\perp$, $\geq 2$ circuits.}
	}
	\Return{$S$.}
\end{algorithm}

Importantly, this algorithm satisfies the following property:

\begin{lemma}[\cite{khanna2025optimal}]\label{lem:detectSingleCycle}
	For a set of edges $E'$, \cref{alg:detectSingleCycle} returns $\perp$ if $E'$ contains $0$ or $\geq 2$ circuits, and otherwise returns $S \subseteq E'$ where $S$ is exactly the edges participating in the unique circuit in $E'$.
\end{lemma}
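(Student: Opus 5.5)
The plan is a direct case analysis on the number of circuits contained in $E'$, using only the matroid axioms and the definition of a circuit. The key structural fact is the following: for $e \in E'$, the set $E' - e$ is independent exactly when $e$ lies in \emph{every} circuit contained in $E'$. Indeed, $E' - e$ is dependent if and only if it contains some circuit, and the circuits contained in $E' - e$ are precisely the circuits of $E'$ that avoid $e$. So the set $S$ built by \cref{alg:detectSingleCycle} is exactly the intersection of all circuits contained in $E'$, provided $E'$ is dependent.

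With this characterization, the cases go as follows.

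\textbf{Zero circuits.} If $E'$ contains no circuit, then $E'$ is independent. This holds because every dependent set contains a minimal dependent subset, and by downward closure that subset is a circuit. So the first query returns $1$ and the algorithm outputs $\perp$.

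\textbf{Exactly one circuit $C$.} Then $E'$ is dependent, so the first query returns $0$. By the characterization, $S$ is the intersection over the single circuit, namely $S = C$. Since $C$ is nonempty (as $\emptyset \in \mathcal{I}$), the algorithm returns $S = C$, which is exactly the set of elements participating in the unique circuit.

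\textbf{At least two distinct circuits.} Here I must show that the intersection of all circuits in $E'$ is empty, so that the algorithm returns $\perp$. It suffices to show $C_1 \cap C_2 = \emptyset$ fails to be needed in general: for every $e \in C_1 \cap C_2$ with $C_1 \neq C_2$, I will exhibit a circuit of $E'$ that avoids $e$. This is the circuit elimination axiom, which I expect to be the main obstacle, since it is not among the three stated axioms and must be derived from them. I would derive it via the rank function, using the standard facts that $r(X) = $ the size of a maximal independent subset of $X$ is well defined (by the exchange property) and submodular. Then
\[
r(C_1 \cup C_2) \le r(C_1) + r(C_2) - r(C_1 \cap C_2) = |C_1| + |C_2| - 2 - |C_1 \cap C_2|.
\]
This uses $r(C_i) = |C_i| - 1$, and also $r(C_1 \cap C_2) = |C_1 \cap C_2|$, because $C_1 \cap C_2$ is a proper subset of a circuit and hence independent. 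The right-hand side equals $|C_1 \cup C_2| - 2$. Consequently $r((C_1 \cup C_2) - e) \le |(C_1 \cup C_2) - e| - 1$, so the set $(C_1 \cup C_2) - e$ is dependent and therefore contains a circuit avoiding $e$.

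For the graphic and cographic settings, one could alternatively argue concretely: the symmetric-difference argument from \cref{clm:boundedLengthContract} handles cycles, and an analogous argument handles cuts. I will give the general matroid argument so that the lemma holds as stated.

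Combining the three cases yields the lemma. Finally, note that the algorithm uses a single round of $|E'|+1$ independence queries.
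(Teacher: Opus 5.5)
The paper gives no proof of this lemma; it imports it from \cite{khanna2025optimal}. Your argument is correct and complete, and it is the standard one. Once $E'$ is known to be dependent, the set $S$ collected by \cref{alg:detectSingleCycle} is exactly the intersection of all circuits contained in $E'$. So the only real content is that this intersection is empty whenever there are two distinct circuits, which is circuit elimination.

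Your derivation of elimination from submodularity of the rank function is valid. You leave one step implicit: that $C_1 \cap C_2$ is a \emph{proper} subset of $C_1$. This holds because distinct circuits are incomparable: if $C_1 \subsetneq C_2$, then $C_1$ would be a dependent proper subset of the circuit $C_2$. Also, the sentence beginning ``It suffices to show $C_1 \cap C_2 = \emptyset$ fails to be needed'' is garbled. What you mean is that any $e \notin C_1 \cap C_2$ is already avoided by $C_1$ or by $C_2$, so only $e \in C_1 \cap C_2$ needs an extra circuit.

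The paper only uses this lemma for graphic and cographic matroids, and there elimination follows from tools the paper already has:
\begin{itemize}
\item For cycles: two distinct cycles through $e$ have a nonempty even-degree symmetric difference avoiding $e$, which therefore contains a cycle. This is the argument in \cref{clm:boundedLengthContract}.
\item For cuts: two distinct bonds through $e$ have symmetric difference a nonempty cut avoiding $e$, by the cut-space fact used in \cref{clm:removeCutNewCutSize}.
\end{itemize}
Your rank argument needs slightly more machinery (well-definedness and submodularity of $r$). In exchange, it proves the lemma for arbitrary matroids, exactly as stated.
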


We also have the following building block for deleting a set of edges without altering the connectivity of the graph: 

\begin{lemma}[\cite{khanna2025optimal}]\label{lem:deleteCycles}
	Let $E$ be a set of edges with some fixed ordering of the edges $e_1, \dots e_m$, and let $\mathrm{Circuits}$ be an arbitrary subset of the circuits in $E$. For each circuit $C \in \mathrm{Circuits}$, let $C = (e_{i_{C, 1}}, \dots e_{i_{C, |C|}})$ denote the ordered set of the edges that are in the circuit $C$. Let $E'$ be the result of simultaneously deleting from $E$ the edge with the \emph{largest} index from every circuit in $\mathrm{Circuits}$. Then,
	\begin{enumerate}
		\item Every circuit in $\mathrm{Circuits}$ has at least one edge removed.
		\item The rank of $E'$ is the same as the rank of $E$. 
	\end{enumerate}
\end{lemma}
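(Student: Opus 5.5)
Both parts rest on one exchange fact about matroids: if $C$ is a circuit and $e \in C$, then $e$ lies in the span of $C \setminus \{e\}$. Part 1 is then immediate. Each $C \in \mathrm{Circuits}$ has a unique largest-index element $e_C$, and we delete $e_C$, so $C$ loses at least one edge.

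For part 2, let $D = \{e_C : C \in \mathrm{Circuits}\}$, so $E' = E \setminus D$. Since $E' \subseteq E$, we have $\mathrm{rank}(E') \le \mathrm{rank}(E)$. For the reverse inequality it suffices to show that every $e \in D$ lies in $\mathrm{span}(E')$. Then $\mathrm{span}(E') \supseteq E$, so $\mathrm{rank}(E') = \mathrm{rank}(E)$.

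We argue by strong induction on the index, processing the elements of $D$ in increasing order. The claim is that each $e_j \in D$ is in $\mathrm{span}(E')$. Fix $e_j = e_C$ for some circuit $C$ with largest index $j$. Every other element $f \in C \setminus \{e_j\}$ has index strictly less than $j$. Either $f \in E'$, or $f \in D$ with smaller index, in which case $f \in \mathrm{span}(E')$ by the induction hypothesis. Hence
\[
C \setminus \{e_j\} \subseteq \mathrm{span}(E').
\]
Because $C$ is a circuit, $e_j \in \mathrm{span}(C \setminus \{e_j\})$. The span operator is monotone and idempotent, so $\mathrm{span}(C\setminus\{e_j\}) \subseteq \mathrm{span}(\mathrm{span}(E')) = \mathrm{span}(E')$, and therefore $e_j \in \mathrm{span}(E')$.

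The only delicate point, which I expect to be the main obstacle, is that deletions happen simultaneously. A circuit's other edges may themselves have been deleted, and this is exactly why we order by index. The deleted edges of $C$ other than $e_C$ all have smaller index, which makes the induction well-founded. For graphic matroids this is the familiar statement that removing the heaviest edge of cycles preserves connectivity, with span read as "endpoints connected". For the general matroid statement I would use the standard closure axioms rather than graph paths.
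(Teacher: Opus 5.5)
Your proof is correct. The strong induction on index shows that every deleted element lies in the closure of $E'$, and that is exactly what rank preservation requires. The argument uses only the matroid closure axioms, so it also covers the cographic case (where circuits are bonds), which is where the paper applies this lemma again.

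The paper gives no proof of its own here; it imports the lemma from \cite{khanna2025optimal}. Your argument is the standard one. An equivalent phrasing: the greedy basis, built by scanning $e_1,\dots,e_m$ in order, never selects the largest-index element of any circuit, so it lies entirely inside $E'$. That version names an explicit basis contained in $E'$, while yours avoids naming one.
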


\cref{lem:deleteCycles} can be implemented in the following manner:

\begin{algorithm}[H]
	\caption{DeleteCircuits$(E, \mathrm{Circuits})$}\label{alg:deleteCycles}
	Let $e_1, \dots e_m$ be an arbitrary (but fixed) ordering of the edges in $E$. \\ 
	\For{$C \in \mathrm{Circuits}$ in parallel}{
		Let $e^*$ be the edge with the largest index in $C$. \\
		$E \leftarrow E - e^*$. \\
	}
	\Return{$E$.}
\end{algorithm}

\begin{corollary}\label{cor:deleteCycles}
	Let $G = (V, E)$ be a graph, and let $\mathrm{Circuits}$ be a set of circuits in $E$. Then, \cref{alg:deleteCycles} returns a subset of $E$ such that:
	\begin{enumerate}
		\item Every circuit in $\mathrm{Circuits}$ has at least one edge removed. 
		\item The rank of $G$ is unchanged. 
	\end{enumerate}
\end{corollary}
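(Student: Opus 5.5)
The plan is to derive both conclusions of \cref{cor:deleteCycles} directly from \cref{lem:deleteCycles}, after checking that \cref{alg:deleteCycles} performs exactly the operation described there. \cref{alg:deleteCycles} fixes an ordering $e_1, \dots, e_m$ of $E$. For each $C \in \mathrm{Circuits}$ it removes the edge of largest index in $C$. The loop runs ``in parallel'' and the maximum is taken with respect to the original, fixed ordering of $C$'s edges. Hence the deleted set is $D = \{\max\text{-index edge of } C : C \in \mathrm{Circuits}\}$, and the output is $E' = E \setminus D$, independent of the order in which circuits are processed. This is precisely the simultaneous deletion set of \cref{lem:deleteCycles}.

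Next I would specialize to the graphic matroid of $G$, where circuits are simple cycles. \cref{lem:deleteCycles} is stated for an arbitrary set of circuits of an arbitrary edge set with a fixed ordering, so it applies verbatim. Item 1 follows at once: each $C$ loses its maximum-index edge, which lies in $D$.

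For item 2, if I wanted to verify the lemma rather than cite it, I would argue as follows. Suppose the rank dropped. Then some deleted edge $e \in D$ joins two distinct components of $E'$. Among such edges, pick $e$ of minimum index, and let $C$ be a circuit whose maximum-index edge is $e$. The path $C - e$ uses only edges of index smaller than $e$. Any deleted edge $f$ on this path has smaller index than $e$, so by minimality its endpoints are already connected in $E'$. Therefore the endpoints of $e$ are connected in $E'$, a contradiction. It follows that $E'$ and $E$ have the same connected components, and hence the same rank.

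I expect no real obstacle. The only subtle point is that the parallel deletions are all determined by the fixed ordering and do not interact, and the minimal-index argument above handles exactly this interaction.
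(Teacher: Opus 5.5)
Your proposal is correct and follows the paper's proof, which simply notes that \cref{alg:deleteCycles} directly implements \cref{lem:deleteCycles}. Your check that the parallel deletions depend only on the fixed ordering is exactly what that remark relies on. Your optional minimal-index verification of item 2 is also sound, and it extends beyond graphs: phrase it with matroid closure (each deleted $e$ lies in the span of $E'$) instead of graph connectivity, and it also covers the cographic use of this corollary in the framework.
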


\begin{proof}
	\cref{alg:deleteCycles} directly implements \cref{lem:deleteCycles}.
\end{proof}

\subsection{Explicit Derandomization Framework}

Below we present our general framework that we then apply to both graphic and cographic matroids. The main idea of our algorithm is to first find and remove short circuits in a matroid, and then, when the smallest circuit is large ($> s\log n$ for some constant $s$), find a large independent set and contract the graph on it, recursively finding a basis for the contracted graph.

Specifically, suppose we have two algorithms: $\mathrm{ListShortCircuits}(E, \ell, m)$ which returns a list of circuits of size in $[\ell, 1.01\ell]$, where $\ell$ is the smallest circuit size, and $\mathrm{FindLargeIndependentSet}(E)$, which returns an independent set with $\geq |E|/10$ edges if the minimum circuit size is $>s\log m$, where $s$ is some absolute constant. Given these sub-routines, we present the main algorithm for finding a basis of a matroid:

\begin{algorithm}[H]
\caption{FindBasis$(G = (V, E), m)$}\label{alg:final}
$T = \emptyset.$ \tcp{Edges in a basis.}
\While{$E$ is not empty}{
$\ell \leftarrow 1$ \\
\While{$\ell \leq s \log(m)$}{
$\mathrm{Circuits} = \mathrm{ListShortCircuits}(G, \ell, m)$. \\
$E = \mathrm{DeleteCircuits}(E, \mathrm{Circuits})$. \\
$\ell \leftarrow \ell \cdot 1.01$.\\
}
$S = \mathrm{FindLargeIndependentSet}(E)$. \\
$E \leftarrow E / S$. \\
$T \leftarrow T \cup S$. \\
}
\Return{$T$.}
\end{algorithm}

Ultimately, we will show the following:

\begin{theorem}\label{thm:final}
    Assume we have algorithms
    
\begin{itemize}
    \item $\mathrm{ListShortCircuits}(E, \ell, m)$, which makes $Q_1$ queries and returns a set of circuits of size in $[\ell, 1.01\ell]$, where $\ell$ is the minimum circuit size, and
    \item $\mathrm{FindLargeIndependentSet}(E)$, which makes $Q_2$ queries and, provided that the minimum circuit size is $> s\log(m)$ for some absolute constant $s$, returns an independent set with $\geq |E|/10$ edges.
\end{itemize}

Then, for a graph $G$ on $m$ edges, \cref{alg:final} makes $O((Q_1 + Q_2)\operatorname{polylog}(m))$ independence queries and finds a basis of a matroid corresponding to $G$ in $O(\log(m) \log\log(m))$ rounds. 
\end{theorem}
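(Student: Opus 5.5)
We prove \cref{thm:final} by analyzing the two nested loops of \cref{alg:final} separately: correctness of the output, the per-iteration cost of the inner loop, and the number of outer iterations.

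\textbf{Correctness.} We maintain the invariant that $T$ is independent in the original matroid and that $T\cup B$ is a basis of the original matroid for every basis $B$ of the current matroid on $E$. The current matroid is obtained by deletions and contractions.

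\begin{itemize}
\item \emph{Deletion steps.} By \cref{cor:deleteCycles}, $\mathrm{DeleteCircuits}$ leaves the rank unchanged. Since deleted elements are never needed, any basis of the matroid after deletion is still a basis of the matroid before deletion.
\item \emph{Contraction steps.} Each $S$ returned by $\mathrm{FindLargeIndependentSet}$ is independent in the current matroid. By the remark on contraction, any basis $B$ of $\mathcal{M}/S$ gives a basis $B\cup S$ of $\mathcal{M}$, so the invariant survives $T\leftarrow T\cup S$.
\end{itemize}

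When $E$ becomes empty, the empty set is the only basis, so $T$ is a basis of the original matroid.

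\textbf{Inner loop.} We argue by induction on $\ell$ that at the start of each inner iteration the minimum circuit size is at least $\ell$. Suppose the minimum circuit size is at least $\ell$ before the call. Then $\mathrm{ListShortCircuits}$ returns circuits of size in $[\ell,1.01\ell]$; we assume it returns \emph{all} of them, which is what makes the inner loop effective. $\mathrm{DeleteCircuits}$ destroys each listed circuit by removing an element from it, and deletion creates no new circuits. Hence afterwards the minimum circuit size exceeds $1.01\ell$, and the invariant holds at the next value of $\ell$.

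The inner loop runs from $\ell=1$ to $\ell=s\log m$ with multiplicative step $1.01$, so it has $O(\log(s\log m))=O(\log\log m)$ iterations. Each iteration costs $O(1)$ rounds: one round of $Q_1$ queries and a parallel deletion with no queries. On exit the minimum circuit size exceeds $s\log m$, so the hypothesis of $\mathrm{FindLargeIndependentSet}$ holds and it returns $S$ with $|S|\ge |E|/10$.

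\textbf{Outer loop.} This is the delicate step: we need geometric decrease of some potential. The natural potential is $|E|$.

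\begin{itemize}
\item Deletions only decrease $|E|$.
\item Contracting $S$ removes exactly the elements of $S$ from the ground set; we would check that contraction of graphic and cographic matroids is consistent with this.
\end{itemize}

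So each outer iteration multiplies $|E|$ by at most $9/10$. Hence there are $O(\log m)$ outer iterations, since $m$ in the sampling parameters stays the original bound.

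\textbf{Totals.} The round count is $O(\log m\cdot\log\log m)$. The query count is $O(\log m\log\log m\,(Q_1+Q_2))=O((Q_1+Q_2)\operatorname{polylog} m)$, which completes the proof.

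\textbf{Main obstacle.} The step most likely to need care is the inductive invariant. It requires that $\mathrm{ListShortCircuits}$ return \emph{all} circuits of size in $[\ell,1.01\ell]$, and that contractions performed in earlier outer iterations do not invalidate the restart at $\ell=1$; the latter is fine, since the inner loop restarts and handles any short circuits contraction creates.
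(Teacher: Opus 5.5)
Your proposal is correct and follows essentially the same route as the paper's proof (\cref{clm:finalCorrect} and \cref{clm:finalRoundBound}). Both argue that $\mathrm{DeleteCircuits}$ and contraction on independent sets preserve rank, that the inner loop has $O(\log\log m)$ iterations because $\ell$ grows by a factor $1.01$, and that the outer loop has $O(\log m)$ iterations because each contraction removes at least $|E|/10$ elements. Your explicit assumption that $\mathrm{ListShortCircuits}$ returns \emph{all} circuits of size in $[\ell,1.01\ell]$, which is needed for the precondition of $\mathrm{FindLargeIndependentSet}$, is used only implicitly in the paper and is exactly what its instantiations \cref{cor:listShortCycles} and \cref{cor:listSmallCuts} provide, so stating it adds rigor rather than departing from the paper.
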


Towards proving this theorem, we first prove the \emph{correctness} of the above algorithm:

\begin{claim}\label{clm:finalCorrect}
    For a graph $G$ on $m$ edges, \cref{alg:final} returns a basis of a matroid corresponding to $G$. 
\end{claim}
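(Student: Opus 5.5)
We prove \cref{clm:finalCorrect} by maintaining a single invariant across the outer loop of \cref{alg:final}. Let $\mathcal{M}$ be the original matroid and $\mathcal{M}_t$ the current matroid on edge set $E$ at the start of outer iteration $t$. The invariant is that $T$ is independent in $\mathcal{M}$, $E\cap T=\emptyset$, and for every basis $B$ of $\mathcal{M}_t$, the set $B\cup T$ is a basis of $\mathcal{M}$. Here $\mathcal{M}_t$ is obtained from $\mathcal{M}$ by contracting the elements already placed in $T$ and deleting the discarded elements.

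At the start, $T=\emptyset$ and $\mathcal{M}_0=\mathcal{M}$, so the invariant holds trivially. When the outer loop terminates, $E$ is empty, so the only basis of $\mathcal{M}_t$ is $\emptyset$. The invariant then says $T$ itself is a basis of $\mathcal{M}$, which is the claim. Termination follows because each outer iteration removes at least $|E|/10\ge 1$ elements whenever $E\neq\emptyset$; this is quantified later in the round analysis.

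The invariant is preserved by the two steps of each outer iteration.

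\textbf{Deletion steps.} Each call to $\mathrm{DeleteCircuits}$ removes, by \cref{cor:deleteCycles}, a set $D$ of elements with $\mathrm{rank}(E\setminus D)=\mathrm{rank}(E)$. Hence every basis of $\mathcal{M}_t\setminus D$ is independent in $\mathcal{M}_t$ and has full rank, so it is a basis of $\mathcal{M}_t$. This step uses only that $\mathrm{ListShortCircuits}$ returns genuine circuits of the current matroid. The actual lengths of the circuits it returns are irrelevant for correctness.

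\textbf{Contraction step.} After the inner loop, the current minimum circuit size exceeds $s\log m$, or there are no circuits at all. In either case $\mathrm{FindLargeIndependentSet}(E)$ returns an independent set $S$ of the current matroid. I expect to argue this either by using the hypothesis on the inner loop or, more robustly, by noting that the algorithm can verify independence with a single query and fall back to $S=\emptyset$ if the check fails. The Remark after the definition of contraction then gives that for any basis $B'$ of $\mathcal{M}_t'/S$, the set $B'\cup S$ is a basis of $\mathcal{M}_t'$, where $\mathcal{M}_t'$ is the matroid after the deletions. Combining this with the deletion step and the invariant for $T$ shows that $B'\cup S\cup T$ is a basis of $\mathcal{M}$. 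So the invariant holds with $T\leftarrow T\cup S$ and $\mathcal{M}_{t+1}=\mathcal{M}_t'/S$, and $T\cup S$ is independent in $\mathcal{M}$ because it is contained in such a basis.

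\textbf{Main obstacle.} The delicate point is making sure that ``circuits'' and ``independence'' always refer to the current minor $\mathcal{M}_t$. An independence oracle for $\mathcal{M}/T$ is simulated by querying $\mathrm{Ind}(X\cup T)$, and deletions are simply restriction. Once this bookkeeping is stated explicitly, each step reduces to \cref{cor:deleteCycles} or the contraction Remark.
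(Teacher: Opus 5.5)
Your proposal is correct and takes essentially the same route as the paper. Both arguments run an induction in which the $\mathrm{DeleteCircuits}$ calls preserve rank by \cref{cor:deleteCycles}, the contraction step uses that an independent $S$ together with a basis of $E/S$ is a basis of $E$, and $T$ accumulates the contracted sets. Your invariant (every basis of the current minor, together with $T$, is a basis of $\mathcal{M}$) and your remarks on simulating the oracle for $\mathcal{M}/T$ and on the independence of $S$ are a more explicit version of the paper's statement that ``$T$ is independent and $T\cup E$ keeps the original rank.''
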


\begin{proof}
We prove this via the inductive claim that the set $T$ always corresponds to a set of independent edges in $E$, and that $T \cup E$ always maintains the same matroid rank. Within any given iteration, observe that $\mathrm{ListShortCircuits}(E, \ell, m)$ returns a valid set of circuits by definition, and so by \cref{cor:deleteCycles}, $\mathrm{DeleteCircuits}(E, \mathrm{Circuits})$ does not alter the rank of $E$ (and of $T\cup E$).

Thus, the only modification to the rank of $E$ happens when we contract on the set $S$ recovered by $\mathrm{FindLargeIndependentSet}$. By definition, $\mathrm{FindLargeIndependentSet}$ always recovers independent edges, and so $E$ is always contracted on a set of independent edges which are subsequently added to $T$. 

Note that if $S$ is independent, and $S'$ is a basis of $E / S$, then $S \cup S'$ is a basis of $E$. Thus, we show by induction that in any given iteration, $T$ is exactly the set which has been contracted on, and the remainder of the algorithm finds a basis of $E / T$, which is added to the set $T$. Therefore, $T$ will indeed constitute a basis in $E$. 
\end{proof}

\begin{claim}\label{clm:finalRoundBound}
    For a graph $G$ on $m$ edges, \cref{alg:final} makes $O((Q_1 + Q_2)\operatorname{polylog}(m))$ queries, and terminates in $O(\log(m)\log\log(m))$ rounds. 
\end{claim}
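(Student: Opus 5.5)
We bound the round count of \cref{alg:final} by bounding separately the number of outer \texttt{while} iterations and the number of inner iterations per outer iteration. The number of queries is then obtained by multiplying the number of subroutine calls by $Q_1$ or $Q_2$. Throughout, we treat each call to $\mathrm{ListShortCircuits}$, $\mathrm{DeleteCircuits}$, $\mathrm{FindLargeIndependentSet}$ and the contraction step as $O(1)$ adaptive rounds. Here $\mathrm{DeleteCircuits}$ and the contraction are pure bookkeeping and make no queries, since the circuits are already known.

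\textbf{Inner loop.} Within one outer iteration, $\ell$ starts at $1$ and is multiplied by $1.01$ until it exceeds $s\log(m)$. This gives $\log_{1.01}(s\log m)+1 = O(\log\log m)$ inner iterations. We also need the invariant that, when $\mathrm{ListShortCircuits}(E,\ell,m)$ is called, the current minimum circuit size is at least $\ell$, so that the promise of the subroutine is met.
\begin{itemize}
\item If the subroutine returns all circuits of size in $[\ell,1.01\ell]$, then by \cref{cor:deleteCycles} every such circuit loses an edge.
\item Deletion never creates new circuits.
\item Hence after the step the minimum circuit size exceeds $1.01\ell$, which is the next value of $\ell$.
\end{itemize}
I would state this invariant explicitly. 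If the current minimum exceeds $\ell$, the subroutine may return an empty list, which is harmless. Upon exiting the inner loop, the minimum circuit size is $> s\log m$, so the hypothesis of $\mathrm{FindLargeIndependentSet}$ holds.

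\textbf{Outer loop.} Each outer iteration ends by contracting an independent set $S$ with $|S|\ge |E|/10$, where $E$ is the current (post-deletion) ground set. Contraction removes $S$ from the ground set, so $|E|$ shrinks by a factor of at least $9/10$ per iteration. Deletions only shrink it further. Starting from $m$ elements, after $O(\log m)$ outer iterations the ground set is empty and the loop halts.

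Multiplying the two counts gives $O(\log m\log\log m)$ rounds. For queries, there are $O(\log m\log\log m)$ calls to $\mathrm{ListShortCircuits}$ and $O(\log m)$ calls to $\mathrm{FindLargeIndependentSet}$. This totals $O((Q_1+Q_2)\log m\log\log m)=O((Q_1+Q_2)\operatorname{polylog} m)$ queries. Queries within one call are issued in parallel within its constant number of rounds.

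The main subtlety I expect is the inner-loop invariant, namely that the promise \enquote{$\ell$ is the minimum circuit size} is (approximately) satisfied at every call. This includes the first inner iteration after a contraction: contraction may create new short circuits in $E/S$. That case is fine because we reset $\ell\leftarrow 1$ and the minimum circuit size is always at least $1$. I would interpret the hypothesis on $\mathrm{ListShortCircuits}$ as \enquote{returns all circuits of size at most $1.01\ell$ whenever the minimum is at least $\ell$}, and argue the invariant by induction on the inner iterations.
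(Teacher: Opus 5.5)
Your proposal is correct and follows the same route as the paper. The inner loop runs $O(\log\log m)$ times because $\ell$ grows geometrically, the outer loop runs $O(\log m)$ times because each contraction removes at least $|E|/10$ elements, and the query bound comes from multiplying the number of calls by $Q_1$ and $Q_2$. The paper leaves implicit the invariant you add, namely that the minimum circuit size is at least $\ell$ at every call and hence exceeds $s\log m$ before $\mathrm{FindLargeIndependentSet}$; that invariant is exactly what justifies invoking that subroutine's promise. One small correction: your remark that an empty list is harmless holds only when the minimum exceeds $1.01\ell$, but your stated reading of the $\mathrm{ListShortCircuits}$ hypothesis (it returns all circuits of size at most $1.01\ell$ whenever the minimum is at least $\ell$) already covers that case.
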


\begin{proof}
    The bound on queries follows trivially by the individual query bounds for every subroutine, $\mathrm{ListShortCircuits}(E, \ell, m)$ and $\mathrm{FindLargeIndependentSet}(E)$ along with the assumed bound on the number of rounds.

    To see the bound on the number of rounds, observe that the while loop on line $4$ always runs for a maximum of $O(\log\log(m))$ rounds. This is because in each interior iteration, $\ell \rightarrow 1.01\ell$, and so after $O(\log\log(m))$ rounds, $\ell > s \log(m)$.

Next, we can observe that the outer while loop (line $2$) only runs for $O(\log(m))$ rounds, as each invocation of $\mathrm{FindLargeIndependentSet}(E)$ recovers $\geq |E| / 10$ independent edges which are subsequently contracted on (decreasing the number of edges in $E$ by a constant factor). Thus, after $O(\log(m))$ rounds of the outer while loop, $E$ becomes empty. 

Thus, in total the number of rounds is $O(\log(m)\log\log(m))$.
\end{proof}

Now, we are ready to conclude \cref{thm:final}.

\begin{proof}[Proof of \cref{thm:final}]
    The correctness of the algorithm follows from \cref{clm:finalCorrect}, and the bound on the number of queries and rounds follows from \cref{clm:finalRoundBound}.
\end{proof}

\section{Near-Optimal Explicit Derandomization of Basis Finding in Graphic Matroids}\label{sec:graphicDerand}

In this section, we proceed by instantiating the sub-routines required by \cref{alg:final} for graphic matroids. Specifically, we give an efficient algorithm for finding short cycles, and finding a large independent set given that the shortest cycle length is large.

\subsection{Finding a Large Independent Set}
We first show that if the shortest cycle length is sufficiently large, then there is an explicit set of queries that is guaranteed to find a large independent set. This follows from \cref{clm:derandomizeSampling}:
\begin{lemma}\label{lem:deterministicQuery}
    Let $G = (V, E)$ be a graph with $m$ edges and shortest cycle length $> 20 \log(m)$. Then, there is an explicit set of $\mathrm{poly}(m)$ queries $Q \subseteq 2^E, |Q| = \mathrm{poly}(m)$, which guarantees that there is some $\hat{E} \in Q$ for which:
    \begin{enumerate}
        \item $|\hat{E}| \geq \frac{m}{10}$.
        \item $\mathrm{Ind}(\hat{E}) = 1$.
    \end{enumerate}
\end{lemma}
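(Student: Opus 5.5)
The plan is to take $Q$ to be the entire support of the explicit $(1/m^{100})$-almost $10\log(m)$-wise independent distribution with marginals $1/2$ from \cref{thm:boundedIndSampling}. Each point of the support is read as a subset of $E$: we fix an arbitrary ordering $e_1,\dots,e_m$ of the ground set and let the sample $x\in\{0,1\}^m$ correspond to $\hat{E}_x=\{e_i : x_i = 1\}$.

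The first step is to check that this $Q$ is explicit and of polynomial size. With $k = 10\log(m)$, $n = m$ and $\delta = 1/m^{100}$, \cref{thm:boundedIndSampling} gives seed length $O(10\log m + \log\log m + 100\log m) = O(\log m)$. Enumerating all $2^{O(\log m)} = \mathrm{poly}(m)$ seeds and applying the polynomial-time generator therefore lists $Q$ explicitly, with $|Q| = \mathrm{poly}(m)$. The enumeration needs only the ordering of the ground set, so it is uniform and never uses the (unknown) graph. The algorithm issues all of these queries in one round.

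The second step is correctness, which follows from \cref{clm:derandomizeSampling}. Because $G$ has shortest cycle length $> 20\log(m)$, that claim says some sample in the support of this distribution contains at least $m/10$ edges and has no cycles. The corresponding $\hat{E}\in Q$ then satisfies $|\hat{E}|\ge m/10$, and, being cycle-free, $\mathrm{Ind}(\hat{E}) = 1$ in the graphic matroid.

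The only real subtlety is the mismatch between the graph-theoretic statement and the oracle-based one. \cref{clm:derandomizeSampling} talks about cycles in $G$, while the lemma talks about the oracle. For a graphic matroid these are the same thing: independence means cycle-freeness, so no knowledge of $G$ is needed to build $Q$. It is also worth noting, in view of how the lemma will be used to implement $\mathrm{FindLargeIndependentSet}$, that the algorithm can afterwards identify such an $\hat{E}$ just by inspecting the oracle answers and the sizes of the queried sets.
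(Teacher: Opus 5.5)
Your proposal is correct and is essentially the paper's proof. Take $Q$ to be the full support of the $(1/m^{100})$-almost $10\log(m)$-wise independent distribution from \cref{thm:boundedIndSampling}, bound $|Q|$ by $2^{O(\log m)}=\mathrm{poly}(m)$, and invoke \cref{clm:derandomizeSampling} to get a sample with at least $m/10$ edges that is cycle-free, i.e., independent.
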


\begin{proof}
    The set $Q$ that we use is the entire range of the distribution of \cref{clm:derandomizeSampling}. Importantly, \cref{thm:boundedIndSampling} implies that the support of this distribution (i.e. the size of $Q$) is bounded in size by:
    \[
    2^{O(\log(m) + \log\log(m) + \log(\mathrm{poly}(m))} = \mathrm{poly}(m).
    \]
    \cref{clm:derandomizeSampling} shows that one such sample from $Q$ will satisfy both of the stated conditions above. 
\end{proof}

Importantly, \cref{lem:deterministicQuery} allows us to implement $\mathrm{FindLargeIndependentSet}$:

\begin{algorithm}[H]
    \caption{FindLargeIndependentSet$(G, m)$ (graphic matroids)} \label{alg:findLargeIndependentSet}
    Let $Q$ be the set of queries guaranteed by \cref{lem:deterministicQuery}. \\
    $S^* = \emptyset$. \\
    \For{$S \in Q$}{
    \If{$\mathrm{Ind}(S) = 1 \wedge |S| \geq |S^*|$}{
    $S^* = S$. 
    }
    }
    \Return{$S^*$.}
\end{algorithm}

We have the following guarantee on the above algorithm:

\begin{corollary}\label{cor:findLargeIndependentSet}
    For a graph $G$ on $m$ edges with minimum cycle length $\ell \geq 20 \log(m)$, \cref{alg:findLargeIndependentSet} makes $\mathrm{poly}(m)$ queries and finds a set of $\geq m/10$ edges with no cycles. 
\end{corollary}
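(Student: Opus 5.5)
The corollary follows by combining \cref{lem:deterministicQuery} with a direct inspection of \cref{alg:findLargeIndependentSet}. Let $Q$ be the explicit query set from \cref{lem:deterministicQuery}, i.e.\ the full support of the $(1/m^{100})$-almost $10\log(m)$-wise independent distribution with marginals $1/2$ from \cref{thm:boundedIndSampling}. Its seed length is $O(\log m + \log\log m + \log m^{100}) = O(\log m)$, so $|Q| = \mathrm{poly}(m)$. The algorithm makes one independence query per element of $Q$, and all of them can be issued in parallel within a single round. This gives the $\mathrm{poly}(m)$ query bound.

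For correctness, I will first establish an invariant. The algorithm only updates $S^*$ to a set $S$ with $\mathrm{Ind}(S)=1$, and it starts from $S^* = \emptyset$, which is independent. Hence the output $S^*$ is always cycle-free.

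Next, I will show the output is large. The algorithm keeps the largest independent set among those queried. So it suffices that some $\hat{E} \in Q$ has $\mathrm{Ind}(\hat{E}) = 1$ and $|\hat{E}| \geq m/10$, because then $|S^*| \geq |\hat{E}| \geq m/10$. This existence is exactly what \cref{lem:deterministicQuery}, via \cref{clm:derandomizeSampling}, provides.

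The only point needing care is the hypothesis mismatch. The corollary assumes girth $\ell \geq 20\log(m)$, while \cref{clm:derandomizeSampling} is stated for girth $> 20\log(m)$. I will handle this in one of two ways:

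\begin{itemize}
\item Rerun the proof of \cref{clm:derandomizeSampling} with the non-strict inequality. A cycle of length $\geq 20\log m$ still contains a path of length $10\log m$. Two distinct paths of length $10\log m$ between the same pair of vertices would close up into a cycle of length at most $20\log m$, but such a cycle must have length $\geq 20\log m$, which is fine for the counting up to constants. Alternatively, one can use paths of length $10\log m - 1$, which changes the counting bounds only by a constant factor.
\item Alternatively, reduce to the strict case by invoking \cref{lem:deterministicQuery} with slightly adjusted constants, for example $(1/m^{100})$-almost $11\log m$-wise independence with paths of length $\lceil 9.9\log m\rceil$. This keeps every bound polynomial.
\end{itemize}

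I expect this boundary issue to be the only real subtlety; everything else is immediate.
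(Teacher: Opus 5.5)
Your proposal is correct and is essentially the paper's proof. You take $Q$ from \cref{lem:deterministicQuery} and note that \cref{alg:findLargeIndependentSet} returns a largest independent set among those queried, defaulting to $\emptyset$, which is also independent. The guaranteed $\hat{E}\in Q$ then forces $|S^*|\geq m/10$, and $|Q|=\mathrm{poly}(m)$ bounds the number of queries.

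You are also right that the paper silently ignores the $\geq$ versus $>$ mismatch with \cref{clm:derandomizeSampling}. The clean fix is your variant with paths of length $10\log(m)-1$: two such paths between one pair would close a cycle of length at most $20\log(m)-2$, and this variant keeps the algorithm's query set unchanged, unlike your second bullet. Your remark that girth exactly $20\log(m)$ is ``fine up to constants'' is inaccurate, since a theta graph has many length-$10\log(m)$ paths between one pair. The path count still stays polynomial, so nothing breaks.
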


\begin{proof}
    For such a graph, by \cref{lem:deterministicQuery}, we know that the set of queries $Q$ contains some set of edges $\hat{E}$ for which $\mathrm{Ind}(\hat{E}) = 1$, and $|\hat{E}| \geq m/10$. \cref{alg:findLargeIndependentSet} queries all the sets in $Q$, and returns the independent set of largest size, and so the returned set is of size $\geq m/10$. The bound on the number of queries follows because $|Q|  = \mathrm{poly}(m)$.
\end{proof}

\subsection{Removing Short Cycles}
\begin{lemma}\label{lem:listShortCycles}
    Let $G = (V, E)$ be a graph with $m$ edges and shortest cycle length $\ell \leq 20 \log(m)$. Then, there is an explicit set of $\mathrm{poly}(m)$ queries $Q \subseteq 2^E, |Q| = \mathrm{poly}(m)$, which guarantees that for every cycle $C$ in $G$ of length $[\ell, 1.01 \ell]$, there is some $\hat{E} \in Q$ for which $C$ is the unique cycle contained in $\hat{E}$.
\end{lemma}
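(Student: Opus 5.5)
The plan is to take $Q$ to be the entire support of the sampling distribution from \cref{clm:UniqueCycleSurvival}. Concretely, $Q$ consists of all edge sets produced by \cref{alg:arbMarginals}, run with marginals $p = (1/2)^{\lceil 200\log(m)/\ell\rceil}$ and fed by a $(1/m^{500})$-almost $(2\ell\log(1/p))$-wise independent distribution with marginals $1/2$. This base distribution is explicitly constructed by \cref{thm:boundedIndSampling}. Every seed yields one subset $\hat{E}\subseteq E$, namely the edges $e_i$ with $x_i = 1$, and we put that subset into $Q$. Since \cref{alg:arbMarginals} is a deterministic map applied seed by seed, $Q$ is explicit: enumerating all seeds and applying the map takes time polynomial in $|Q|\cdot m$. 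The choice of $Q$ does not depend on $C$, so a single family works for all short cycles at once.

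For correctness, fix any cycle $C$ of length in $[\ell, 1.01\ell]$. \cref{clm:UniqueCycleSurvival} applies directly, since its hypothesis $\ell\le 20\log(m)$ matches ours. It says that with positive probability the sample contains every edge of $C$ and contains no other cycle $C'$. An event of positive probability is witnessed by at least one seed in the support. For that seed, the resulting $\hat{E}\in Q$ contains $C$ as its unique cycle, which is exactly the required guarantee.

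The step needing care is the size bound $|Q| = \mathrm{poly}(m)$. By \cref{cor:arbMarginals} and the remark after it, the seed length is
$O(k' + \log\log(m\log(1/p)) + \log(1/\delta))$, where $k' = 2\ell\lceil 200\log(m)/\ell\rceil$ and $\delta = m^{-500}$. The key observation is that the product $\ell\cdot\lceil 200\log m/\ell\rceil$ is at most $200\log m + \ell$. Since $\ell \le 20\log m$, this gives $k' = O(\log m)$ regardless of how small $\ell$ is. The remaining terms are $\log\log(m\log(1/p)) = O(\log\log m)$, because $\log(1/p)\le 200\log m+1$, and $\log(1/\delta) = 500\log m$. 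The total seed length is therefore $O(\log m)$, so $|Q|\le 2^{O(\log m)} = \mathrm{poly}(m)$.

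I expect this bookkeeping to be the only potential obstacle. The worry would be that small $\ell$ forces tiny marginals and hence a large independence parameter once we expand through \cref{alg:arbMarginals}. The bound on $\ell\lceil 200\log m/\ell\rceil$ shows that this independence parameter stays $O(\log m)$ throughout.
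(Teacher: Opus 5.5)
Your proposal is correct and is the paper's argument. $Q$ is the full support of the \cref{clm:UniqueCycleSurvival} distribution (built by \cref{alg:arbMarginals} on top of \cref{thm:boundedIndSampling}). Positive probability gives a witnessing seed for each short cycle $C$, and the seed length is $O(\log m)$ because $\ell\lceil 200\log m/\ell\rceil\le 200\log m+\ell$.

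The only difference is that the paper first handles $\ell\le 100$ separately by querying every edge set of size at most $101$. This sidesteps the regime $0.2\ell<1$, where the edge-level conditioning step in the claim's proof says nothing. Your uniform use of the claim is still sound: the claim is stated for all $\ell\le 20\log m$, and that step goes through when carried out on the underlying bits of \cref{alg:arbMarginals}, since at least $0.99\ell\lceil 200\log m/\ell\rceil\ge 198\log m$ bits remain almost independent after conditioning on $C$.
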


\begin{proof}
    First, if $\ell \leq 100$, then we simply query all sets of $\leq 101$ edges. By construction, this only requires $\mathrm{poly}(m)$ queries, and necessarily, for any cycle $C$ of length $\leq 101$, there will be some query which contains exactly $C$. 

    Otherwise, for $\ell > 100$, we let $S$ be the support the entire support of a $(1/m^{250\gamma})$-almost $2\ell$-wise independent distribution with marginals $p = (1/2)^{\left\lceil 10\gamma\log(m)/\ell\right\rceil}$ sampled by \cref{alg:arbMarginals} with the input distribution obtained from \cref{thm:boundedIndSampling}. By \cref{clm:UniqueCycleSurvival} we know that for any cycle $C$ of length $[\ell, 1.01\ell]$, there must exist a query from $S$ which makes $C$ the unique surviving cycle. Also, the distribution sampled by \cref{alg:arbMarginals} with \cref{thm:boundedIndSampling}, $p = (1/2)^{\Theta(\log(m)/\ell)}$, $\delta = 1/\poly(m)$, and $k = O(\ell)$ has the following support size:
    \[
    2^{O(k \log(1/p) + \log\log((m \log(1/p)) \log(1/p)) + \log(1 / \delta))} = 2^{O\left(\ell \cdot \frac{\log(m)}{\ell}  + \log(m)\right)} = 2^{O(\log(m))} = \mathrm{poly}(m).
    \]
    Importantly, observe that $\log(1/p) = \left\lceil200\log(m)/\ell\right\rceil$ is an integer, and thus our invocation of \cref{alg:arbMarginals} is valid. This finishes the lemma. 
\end{proof}

With this, we present the following implementation of $\mathrm{ListShortCircuits}$:

\begin{algorithm}[H]
    \caption{ListShortCycles$(G, \ell, m)$ (graphic matroids)}\label{alg:ListShortCycles}
    Let $Q$ be the set of queries guaranteed by \cref{lem:listShortCycles} with parameters $\ell, m$. \\
    $\mathrm{Cycles} = \emptyset$. \\
    \For{$S \in Q$}{
    \If{DetectSingleCycle$(S) \neq \perp$}{
    $\mathrm{Cycles} \leftarrow \mathrm{Cycles}  \cup \mathrm{DetectSingleCycle}(S)$.
    }
    }
    \Return{$\mathrm{Cycles}$.}
\end{algorithm}

We immediately have the following guarantee on the performance of ListShortCycles (\cref{alg:ListShortCycles}):

\begin{corollary}\label{cor:listShortCycles}
    For a graph $G$ on $\leq m$ edges with minimum circuit length $\ell$, \cref{alg:ListShortCycles} makes $\poly(m)$ queries and returns a set $\mathrm{Cycles}$ that contains all cycles of length $[\ell, 1.01\ell]$.
\end{corollary}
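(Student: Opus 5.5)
The plan is to combine \cref{lem:listShortCycles} with the guarantee of \cref{lem:detectSingleCycle} (the unique circuit detection routine, called DetectSingleCycle in \cref{alg:ListShortCycles}). I would split according to the size of $\ell$. If $\ell \leq 20\log(m)$, \cref{lem:listShortCycles} applies directly: the set $Q$ has size $\mathrm{poly}(m)$, and for every cycle $C$ of length in $[\ell, 1.01\ell]$ there is some $\hat{E}\in Q$ in which $C$ is the unique cycle. If $\ell > 20\log(m)$, I would note that in \cref{alg:final} the algorithm is only ever called with $\ell \leq s\log(m)$. Choosing $s$ and the constants in \cref{lem:listShortCycles} consistently, for instance by restating that lemma with the threshold $s\log m$ in place of $20\log m$, makes the same argument go through. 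The independence parameter there is $2\ell = O(\log m)$, so the support size stays $\mathrm{poly}(m)$.

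Given such a $Q$, I would count queries and then argue correctness. For each $S\in Q$, DetectSingleCycle$(S)$ makes $1+|S| \leq m+1$ queries, so the total is $|Q|(m+1) = \mathrm{poly}(m)$. Each query depends only on $S$ and not on earlier answers, so everything fits in one round. For correctness, I fix any cycle $C$ of length in $[\ell,1.01\ell]$ and take the $\hat{E}\in Q$ from \cref{lem:listShortCycles} in which $C$ is the unique circuit. \cref{lem:detectSingleCycle} then returns exactly the edge set of $C$, which is added to $\mathrm{Cycles}$. This shows that every such cycle appears in the output.

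I would also check that nothing spurious is returned. Any non-$\perp$ output of DetectSingleCycle is the edge set of the unique circuit of some $S\subseteq E$, hence a genuine cycle of $G$. So $\mathrm{Cycles}$ consists only of actual cycles, which is what \cref{cor:deleteCycles} needs. The statement itself only requires containment, but the extra output may include cycles of length outside $[\ell,1.01\ell]$, and this is harmless for \cref{alg:final}.

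The main obstacle is the mismatch of hypotheses. The corollary is stated for a graph on $\leq m$ edges, whereas \cref{lem:listShortCycles} and \cref{clm:UniqueCycleSurvival} are phrased with $m$ equal to the edge count. I would handle this as in the footnote to \cref{clm:UniqueCycleSurvival}: using the upper bound $m$ in place of the true edge count $m'$ only lowers the marginals and the error $\delta$ relative to the $m'$-based parameters. The surviving probability of $C$ remains at least $1/\mathrm{poly}(m)$, and the cycle-freeness of $G/C$ is only helped by the lower rates. I would therefore re-run that calculation with $m$ as a parameter rather than as the edge count, and confirm the $\mathrm{poly}(m)$ support bound is unaffected.
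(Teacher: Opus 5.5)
Your proposal is correct and follows the paper's proof. The paper also just combines \cref{lem:listShortCycles}, which gives a $\mathrm{poly}(m)$-size query family in which each cycle of length in $[\ell,1.01\ell]$ is the unique cycle of some query, with \cref{lem:detectSingleCycle}, which recovers that cycle, and bounds the query count via $|Q|$. Your extra steps (restricting to $\ell \le s\log m$ as actually used in \cref{alg:final}, counting $|Q|(m+1)$ queries, and handling the $\le m$ edge count as in the footnote to \cref{clm:UniqueCycleSurvival}) cover hypotheses the paper's two-line proof passes over silently, but they do not change the argument.
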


\begin{proof}
    We know by \cref{lem:listShortCycles} that for every cycle $C$ of length $[\ell, 1.01\ell]$, there is some query $S \in Q$ for which $C$ is the unique cycle contained in $S$. By \cref{lem:detectSingleCycle}, whenever this is the case, $\mathrm{DetectSingleCycle}(S)$ will exactly recover the identity of the contained cycle, after which it is then added to the set $\mathrm{Cycles}$. The bound on the number of queries follows from \cref{lem:listShortCycles}, which states that $|Q| = \poly(m)$.
\end{proof}

\subsection{Complete Algorithm}
A uniform derandomization algorithm in graphic matroids follows from the correctness of the above sub-routines and the general framework established in \cref{sec:framework}:
\begin{theorem}\label{thm:finalGraphic}
For a graph $G$ on $m$ edges, \cref{alg:final} with \cref{alg:findLargeIndependentSet} and \cref{alg:ListShortCycles} as sub-routines returns a spanning forest of $G$ using $\poly(n)$ queries in $O(\log(m) \log\log(m))$ rounds. 
\end{theorem}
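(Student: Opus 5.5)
The plan is to instantiate \cref{thm:final} for the graphic matroid, taking \cref{alg:ListShortCycles} as $\mathrm{ListShortCircuits}$ and \cref{alg:findLargeIndependentSet} as $\mathrm{FindLargeIndependentSet}$, with the absolute constant $s = 20$ (more precisely, the threshold at which the inner loop of \cref{alg:final} exits should be such that the current minimum cycle length exceeds $20\log(m)$). Once both hypotheses of \cref{thm:final} are verified with $Q_1, Q_2 = \poly(m)$, the theorem yields correctness, $O(\poly(m)\operatorname{polylog}(m)) = \poly(m)$ queries, and $O(\log(m)\log\log(m))$ rounds. Here $\poly(n)$ and $\poly(m)$ coincide up to the trivial relation $m \le n^2$ for simple graphs; I would remark on this, or simply read the query bound as $\poly(m)$.

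\textbf{Verifying $\mathrm{ListShortCircuits}$.} \cref{cor:listShortCycles} guarantees that whenever the current graph has minimum cycle length $\ell \le 20\log(m)$, \cref{alg:ListShortCycles} returns all cycles of length in $[\ell, 1.01\ell]$ using $\poly(m)$ queries. It returns only genuine circuits, because by \cref{lem:detectSingleCycle} $\mathrm{DetectSingleCycle}$ outputs either $\perp$ or an actual cycle. Two points need care.

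\begin{itemize}
\item \emph{The loop parameter versus the true girth.} \cref{alg:final} passes the loop variable $\ell$, not the true girth. I would maintain the invariant that at the start of each inner iteration the true minimum cycle length is at least the current $\ell$. Initially $\ell = 1$, so this holds trivially. If the true girth $g$ lies in $[\ell, 1.01\ell)$, the call removes every cycle of length in $[g, 1.01 g] \supseteq [g, 1.01\ell]$. By \cref{cor:deleteCycles}, every listed cycle loses an edge. Deletion creates no new cycles, so afterwards the girth exceeds $1.01\ell$. If instead $g \ge 1.01\ell$, the invariant holds without any work. To make the listing match, I would run \cref{alg:ListShortCycles} with the true girth $g$. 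Computing $g$ costs an extra round (test all candidate $\ell'$ in parallel), or one can simply run the lemma for all $O(\log\log m)$ candidate values in parallel, which costs $\poly(m)$ queries and no extra rounds.
\item \emph{Exit condition.} When the inner loop exits, $\ell > s\log(m)$, so the invariant gives girth $> 20\log(m)$, which is exactly the hypothesis of \cref{cor:findLargeIndependentSet}.
\end{itemize}

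\textbf{Verifying $\mathrm{FindLargeIndependentSet}$.} \cref{cor:findLargeIndependentSet} gives an acyclic set of at least $|E|/10$ edges with $\poly(m)$ queries. Contracting an acyclic edge set in a graph does not increase the edge count, and contracting a matroid independent set is exactly the operation \cref{alg:final} performs, so the outer loop runs $O(\log m)$ times, as in \cref{clm:finalRoundBound}.

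\textbf{Uniformity and the main obstacle.} Uniformity holds because every query family is explicit: it is the support of the \cite{NN90} construction (\cref{thm:boundedIndSampling}) pushed through \cref{alg:arbMarginals}, or all sets of at most $101$ edges. The main obstacle I expect is the girth-tracking invariant: ensuring that the threshold passed to \cref{alg:ListShortCycles} really is the current minimum cycle length, so that \cref{clm:UniqueCycleSurvival} applies. Handling all candidate lengths in parallel as above resolves it without changing the round count.
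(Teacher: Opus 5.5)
Your proposal is correct and follows the paper's route: the paper's proof simply plugs \cref{cor:findLargeIndependentSet} and \cref{cor:listShortCycles} into \cref{thm:final}, and your invariant that the girth stays $\geq \ell$ (hence exceeds $20\log(m)$ when the inner loop exits) fills in a step the paper leaves implicit. One small correction to your fix: the true girth has $O(\log m)$ integer candidates rather than $O(\log\log m)$, although no candidate search is needed, because the proofs of \cref{clm:boundedLengthContract} and \cref{clm:UniqueCycleSurvival} only use $\ell$ as a lower bound on the girth, so passing the loop variable directly (as \cref{alg:final} does) already works.
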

\begin{proof}
\cref{cor:findLargeIndependentSet} and \cref{cor:listShortCycles} show that \cref{alg:findLargeIndependentSet} and \cref{alg:ListShortCycles} satisfy the requirements of \cref{thm:final} with $Q_1 = \poly(n)$ and $Q_2 = \poly(n)$.
\end{proof}

\section{Explicit Derandomization of Basis Finding in Cographic Matroids}\label{sec:cographicDerand}

Below we present the implementations of sub-routines required by \cref{thm:final} for cographic matroids.

\subsection{Finding a Large Independent Set}

\begin{lemma}\label{lem:deterministicQueryCographic}
Let $G = (V, E)$ be a graph with $m$ edges and the minimum cut $\geq \kappa\log n$ (with $\kappa$ as in \cref{cor:connectedness-distribution-dnfs}). Then, there is an explicit set of $m^{O(\log\log(m))}$ queries $Q\subseteq 2^E$, $|Q| = m^{O(\log\log(m))}$, which guarantees that there is some $\hat{E} \in Q$ for which:
    \begin{enumerate}
        \item $|\hat{E}| \geq \frac{m}{10}$.
        \item $\mathrm{Ind}(\hat{E}) = 1$.
    \end{enumerate}
\end{lemma}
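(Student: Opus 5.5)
The plan is to take $Q$ to be the set of \emph{complements} of the samples in the support of the distribution $Y$ from \cref{cor:connectedness-distribution-dnfs}. Concretely, for every seed $r \in \{0,1\}^{O(\log m \cdot \log\log m)}$ we form the sampled edge set $Y(r) \subseteq E$ and put $\hat{E}_r = E \setminus Y(r)$ into $Q$. Since the seed length is $O(\log m\cdot \log\log m)$, we get $|Q| \le 2^{O(\log m \log\log m)} = m^{O(\log\log m)}$. The set $Q$ is explicit because $Y$ is explicit: it is built from $O(\log m)$ independent copies of the Naor--Naor small-bias distribution of \cref{thm:low-bias-distribution}, combined through the Tribes-like formula. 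Note that $Q$ does not depend on $G$ beyond $m$. This matters, because the algorithm has only oracle access.

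Next I would connect independence in the cographic matroid to connectivity. A set $\hat{E}$ is independent in the cographic matroid exactly when it contains no nonempty cut, i.e.\ when $E \setminus \hat{E}$ meets every cut. For $\hat E_r$ this says $Y(r)$ is connected, or spans the same components as $G$ if $G$ is disconnected. A graph with minimum cut $\ge \kappa \log n>0$ is connected, so the condition is simply that $Y(r)$ is a connected spanning subgraph. By \cref{cor:connectedness-distribution-dnfs}, this happens with probability $1-1/\poly(m)$ over $r$. One detail needs checking: the hypothesis there is minimum cut $\ge \kappa\log m$, whereas here we have $\kappa \log n$. I would handle this by invoking the construction with $\ell = \kappa\log n$ directly. \cref{thm:connectedness-from-dnfs} only needs minimum cut $\ge \ell$ and $O(\log n)$ rounds, and $\log n \le \log m + 1$ in a connected graph, so only constants change. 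Alternatively, one can adjust $\kappa$.

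For the size condition, I would use the marginal bound from \cref{cor:connectedness-distribution-dnfs}: each edge lies in $Y$ with probability $\le 1/2$. Hence $\mathbb{E}[|E \setminus Y|] \ge m/2$. Since $|E\setminus Y| \le m$, Markov's inequality applied to $|Y|$ gives $\Pr[|Y| > 0.9m] \le \frac{m/2}{0.9m} < 0.56$. So $|\hat E_r| \ge m/10$ with probability at least $0.44$. A union bound with the connectivity failure probability $1/\poly(m)$ leaves positive probability that both events hold. Therefore some $\hat E_r\in Q$ satisfies both conditions.

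The main obstacle is bookkeeping rather than anything conceptual. First, the marginal bound must really be $\le 1/2$ uniformly, which forces $\kappa$ large enough relative to the constant $c$ in the number of rounds. Second, the $\log n$ versus $\log m$ parameterization must match the corollary. If the marginal bound were only slightly above $1/2$, the Markov argument still goes through with any marginal bounded below $9/10$, so there is slack to spare.
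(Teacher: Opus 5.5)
Your proposal is correct and matches the paper's proof. Both take $Q$ to be the complements of the support of the distribution from \cref{cor:connectedness-distribution-dnfs}, use the connectivity of the sample to get cographic independence, and use Markov on the marginals $\le 1/2$ plus a union bound for the size condition. You are also right to flag the $\kappa\log n$ versus $\kappa\log m$ hypothesis, which the paper's proof silently passes over; if you instantiate with $\ell=\kappa\log n$, take $Q$ to be the union of the supports over all $\le m+1$ possible values of $n$, so that $Q$ still depends only on $m$ and has size $m^{O(\log\log m)}$.
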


\begin{proof}
By \cref{cor:connectedness-distribution-dnfs}, there exists an explicit distribution $Y$ on $\{0, 1\}^m$ with marginals $\leq 1/2$ such that for a graph with minimal cut $> \kappa\log m$, subsampling the edges using $Y$ yields a connected graph with probability $1 - 1/\poly(m)$. In particular, this means that this graph has non-zero intersection with any cut, meaning that the edges in its complement form an independent set, i.e. do not fully contain any cut. Hence, taking the distribution $Y'$ whose samples are complements of samples in $Y$, we ensure that sampling from $Y'$ gives an independent set with probability $1 - 1/\poly(m)$.

Moreover, since the marginal probabilities of $Y$ are $\leq 1/2$, then the marginals for $Y'$ are $\geq 1/2$, meaning that by Markov's inequality, with probability $\Omega(1)$ $Y'$ samples $\geq m/10$ edges. Taking the union bound of this event and the connectedness events gives that with positive probability, a sample from $Y'$ is both connected and contains $\geq m/10$ edges. 

Therefore, we let $Q$ to be the entire support of $Y'$. By \cref{cor:connectedness-distribution-dnfs}, $|Q| = m^{O(\log\log(m))}$.

\end{proof}

Similarly to the case of graphic matroids, this allows us to implement $\mathrm{FindLargeIndependentSet}$:

\begin{algorithm}[H]
    \caption{FindLargeIndependentSet$(G, m)$ (cographic matroids)} \label{alg:findLargeIndependentSetCographic}
    Let $Q$ be the set of queries guaranteed by \cref{lem:deterministicQueryCographic}. \\
    $S^* = \emptyset$. \\
    \For{$S \in Q$}{
    \If{$\mathrm{Ind}(S) = 1 \wedge |S| \geq |S^*|$}{
    $S^* = S$. 
    }
    }
    \Return{$S^*$.}
\end{algorithm}

We have the following guarantee on the above algorithm:

\begin{corollary}\label{cor:findLargeIndependentSetCographic}
    For a graph $G$ on $m$ edges with minimum cut $\ell \geq \kappa\log(m)$ (with $\kappa$ as in \cref{cor:connectedness-distribution-dnfs}), \cref{alg:findLargeIndependentSetCographic} makes $m^{O(\log\log(m))}$ queries and finds a set of $\geq m/10$ edges with no cut completely included in the set. 
\end{corollary}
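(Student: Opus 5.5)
This follows the proof of \cref{cor:findLargeIndependentSet}, with \cref{lem:deterministicQueryCographic} in place of \cref{lem:deterministicQuery}. Let $Q$ be the explicit query family of size $m^{O(\log\log(m))}$ from \cref{lem:deterministicQueryCographic}. This family is the support of the distribution $Y'$ whose samples are complements of samples of the distribution $Y$ from \cref{cor:connectedness-distribution-dnfs}. Since $Y$ has seed length $O(\log m \cdot \log\log m)$, we can enumerate $Q$ explicitly by running the generator on every seed. Then \cref{alg:findLargeIndependentSetCographic} issues one independence query per element of $Q$, all in a single round, so the query count is $|Q| = m^{O(\log\log(m))}$.

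For correctness, I would invoke \cref{lem:deterministicQueryCographic} on $G$. Its hypothesis is minimum cut $\geq \kappa \log n$, and the hypothesis here is $\ell \geq \kappa\log(m)$. Since $n \leq m+1$ once $G$ is connected (and the large-min-cut assumption makes $G$ connected, or componentwise so), $\kappa\log m \geq \kappa \log n$ up to lower-order terms, so the hypothesis is met. One could also simply observe that the proof of \cref{lem:deterministicQueryCographic} actually uses \cref{cor:connectedness-distribution-dnfs} with threshold $\kappa\log m$, which matches our assumption exactly. The lemma then guarantees some $\hat E \in Q$ with $\mathrm{Ind}(\hat E)=1$ and $|\hat E| \geq m/10$.

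It remains to check that the algorithm returns a set at least this good. It scans all of $Q$ and keeps the largest $S$ with $\mathrm{Ind}(S)=1$, so the returned $S^*$ is independent and satisfies $|S^*| \geq |\hat E| \geq m/10$. In the cographic matroid, independence means exactly that $S^*$ contains no nonempty cut of $G$ completely. That is the stated conclusion.

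The only mildly delicate point is the $\log n$ versus $\log m$ mismatch in the hypotheses. I expect it to be resolved by the observation above, that \cref{cor:connectedness-distribution-dnfs} is stated with $\ell = \kappa\log m$. Everything else is bookkeeping.
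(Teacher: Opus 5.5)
Your proof is correct and is essentially the paper's argument. The paper only says the corollary follows from \cref{lem:deterministicQueryCographic}; you spell out the same reasoning as in \cref{cor:findLargeIndependentSet}: some $\hat E \in Q$ is independent with $|\hat E| \geq m/10$, and the algorithm returns the largest independent query, which in the cographic matroid contains no nonempty cut.

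The $\log n$ versus $\log m$ point is simpler than you suggest. Every vertex has degree at least $\ell \geq 2$, so $m \geq n\ell/2 \geq n$, and hence $\kappa\log m \geq \kappa\log n$ exactly, with no lower-order slack.
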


\begin{proof}
    Follows trivially from \cref{lem:deterministicQueryCographic}.
\end{proof}

\subsection{Removing Small Cuts}

\begin{lemma}\label{lem:listSmallCuts}
    Let $G = (V, E)$ be a graph with $m$ edges and minimum cut size $\ell \leq \kappa\log(m)$ (for $\kappa$ as in \cref{cor:connectedness-distribution-dnfs}). Then, there is an explicit set of $m^{O(\log\log(m))}$ queries $Q \subseteq 2^E, |Q| = m^{O(\log\log(m))}$, which guarantees that for every cut $C$ in $G$ of size $[\ell, 1.01 \ell]$, there is some $\hat{E} \in Q$ for which $C$ is the unique cut contained in $\hat{E}$.
\end{lemma}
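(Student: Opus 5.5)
The plan is to derive \cref{lem:listSmallCuts} directly from \cref{thm:unique-survival-from-dnfs}, mirroring how \cref{lem:listShortCycles} was obtained from \cref{clm:UniqueCycleSurvival}. We take $Q$ to be the entire support of the explicit distribution $Y$ from \cref{thm:unique-survival-from-dnfs}. That distribution has seed length $O(\log m \cdot \log\log m)$, so its support is enumerable in time $2^{O(\log m\log\log m)}$ and has size $|Q| = m^{O(\log\log m)}$. Explicitness follows because $Y$ is built from $O(\log m)$ independent copies of a low-bias distribution from \cref{thm:low-bias-distribution}, each generated by a polynomial-time map. We then apply the Tribes-like combination from the proof of \cref{thm:unique-survival-from-dnfs} and take the complement.

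The key steps, in order, are as follows.

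First, split into the cases $\ell \leq 100$ and $\ell > 100$, exactly as in \cref{thm:unique-survival-from-dnfs}. For $\ell \leq 100$, we let $Q$ be all subsets of at most $101$ edges. This is $\mathrm{poly}(m)$ sets, and every cut $C$ of size at most $1.01\ell$ appears verbatim as some query, which contains no other cut because cuts are inclusion-minimal circuits here.

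Second, for $\ell > 100$, fix any cut $C$ of size in $[\ell, 1.01\ell]$. \cref{thm:unique-survival-from-dnfs} says that with positive probability a sample $Q_0$ from $Y$ satisfies $C \subseteq Q_0$ and $C' \not\subseteq Q_0$ for every other cut $C'$. Since an event of positive probability has a witness in the support, some $\hat E \in Q$ contains $C$ and no other cut. That is exactly the statement of the lemma.

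Third, reconcile the parameter conventions. \cref{thm:unique-survival-from-dnfs} is stated with $\ell \leq \kappa\log n$, while the lemma uses $\kappa\log m$. Since $m \le n^2$, we can rescale the constant, or rerun the proof with $\log m$ throughout, since only the DNF size $\widetilde{O}(\log m)$ and the $O(\log n)\le O(\log m)$ rounds matter.

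The main point needing care is the semantics of ``cut'' versus circuit. In the cographic matroid, circuits are minimal cuts (bonds). We must ensure that ``$C$ is the unique cut contained in $\hat E$'' is what \cref{alg:detectSingleCycle} needs, i.e.\ that $\hat E$ contains exactly one circuit. Any cut contained in $\hat E$ contains a bond, so uniqueness over all cuts implies that $C$ itself is the unique bond in $\hat E$. A minimum-size cut is necessarily a bond, and the near-minimum cuts we care about can be restricted to bonds without loss.

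I would also double-check the conditional-probability step inside \cref{thm:unique-survival-from-dnfs}, where the ratio should read $0.32/0.35$ rather than $0.31/0.35$. It still gives a constant success probability per round, which is all that step needs.
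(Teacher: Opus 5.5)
Your proposal is correct and is essentially the paper's own argument. Take $Q$ to be the full support of the explicit distribution $Y$ from \cref{thm:unique-survival-from-dnfs}; each near-minimum cut's positive-probability unique-survival event is then witnessed by some element of the support, and $|Q| \le 2^{O(\log m\log\log m)} = m^{O(\log\log m)}$ follows from the seed length. Your extra checks are valid refinements of details the paper glosses over, not a different route: that the unique surviving cut is a bond and hence the unique cographic circuit, the $\log n$ versus $\log m$ bookkeeping, and the $0.32/0.35$ correction in the theorem's conditional step.
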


\begin{proof}
    We let $Q$ to be the entire support of the distribution $Y$ given by \cref{thm:unique-survival-from-dnfs}.  By \cref{thm:unique-survival-from-dnfs}, we know that for any cut $C$ of size $[\ell, 1.01\ell]$, $C$ will be the unique surviving cycle under this distribution with probability $> 0$. This means there must exist a query from the support of this distribution which makes $C$ the unique surviving cut. Moreover, the same theorem implies that $|Q| = m^{O(\log\log(m))}$.
\end{proof}

With this, we present the following implementation of $\mathrm{ListSmallCuts}$:

\begin{algorithm}[H]
    \caption{ListSmallCuts$(G, \ell, m)$ (cographic matroids)}\label{alg:ListSmallCuts}
    Let $Q$ be the set of queries guaranteed by \cref{lem:listSmallCuts} with parameters $\ell, m$. \\
    $\mathrm{Cuts} = \emptyset$. \\
    \For{$S \in Q$}{
    \If{DetectSingleCircuit$(S) \neq \perp$}{
    $\mathrm{Cuts} \leftarrow \mathrm{Cuts}  \cup \mathrm{DetectSingleCircuit}(S)$.
    }
    }
    \Return{$\mathrm{Cuts}$.}
\end{algorithm}

We immediately have the following guarantee on the performance of ListSmallCuts (\cref{alg:ListSmallCuts}):

\begin{corollary}\label{cor:listSmallCuts}
    For a graph $G$ on $\leq m$ edges with minimum cut size $\ell$, \cref{alg:ListSmallCuts} makes $m^{O(\log\log(m))}$ queries and returns a set $\mathrm{Cuts}$ that contains all cuts of size $[\ell, 1.01\ell]$.
\end{corollary}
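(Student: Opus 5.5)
The plan is to mirror the proof of \cref{cor:listShortCycles}: first \cref{lem:listSmallCuts}, then \cref{lem:detectSingleCycle}. In a cographic matroid the circuits are exactly the minimal nonempty cuts (bonds), so ``$C$ is the unique cut contained in $\hat{E}$'' means that $\hat{E}$ contains exactly one circuit, namely $C$. I will read ``cut of size $[\ell,1.01\ell]$'' in the statement as a cut that is also a circuit. These are exactly the objects that $\mathrm{ListShortCircuits}$ must return, and exactly those covered by \cref{thm:unique-survival-from-dnfs}.

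\textbf{Coverage.} Fix such a cut $C$. By \cref{lem:listSmallCuts}, the query family $Q$ contains some $\hat{E}$ for which $C$ is the unique cut, hence the unique circuit, contained in $\hat{E}$.
\begin{itemize}
\item By \cref{lem:detectSingleCycle}, the call $\mathrm{DetectSingleCircuit}(\hat{E})$ then returns exactly the edge set of $C$, so $C$ is added to $\mathrm{Cuts}$.
\item For any other $S \in Q$ that contains zero or at least two circuits, the subroutine returns $\perp$, so it adds nothing.
\item Whenever it does return a set, that set is a genuine circuit, so $\mathrm{Cuts}$ consists only of valid circuits.
\end{itemize}

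\textbf{Query count.} For each $S \in Q$, $\mathrm{DetectSingleCircuit}$ issues $1+|S| \le m+1$ queries, and all of them can be issued in a single round. The total is therefore $|Q|\cdot(m+1) = m^{O(\log\log m)}$, using $|Q| = m^{O(\log\log m)}$ from \cref{lem:listSmallCuts}.

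\textbf{Applicability of \cref{lem:listSmallCuts}.} That lemma needs the minimum cut $\ell$ to satisfy $\ell \le \kappa\log m$. Inside \cref{alg:final} this holds, since the inner loop only runs while $\ell \le s\log m$, and we take $s=\kappa$. The graph has at most $m$ edges, so we build the distribution with parameter $m$; as in the footnote to \cref{clm:UniqueCycleSurvival}, using $m$ in place of the true edge count only helps.

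This step is the main obstacle: \cref{thm:unique-survival-from-dnfs} is stated for the exact minimum cut $\ell$, but the algorithm passes in its running guess $\ell$. The remedy is to invoke the construction with the true minimum cut $\ell^\ast$ whenever the guess is off, or else to argue that the guess is a valid lower bound that is tight up to the factor $1.01$. Either way the parameters stay in the range $\ell \le \kappa\log m$, so the size bound on $Q$ is unaffected.
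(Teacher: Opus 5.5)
Your proposal is correct and follows the paper's proof. \cref{lem:listSmallCuts} supplies, for each cut $C$ of size in $[\ell,1.01\ell]$, a query $\hat{E}\in Q$ in which $C$ is the unique cut; \cref{lem:detectSingleCycle} then recovers $C$; and the query bound comes from $|Q| = m^{O(\log\log m)}$. Your extra factor $m+1$ for the calls to $\mathrm{DetectSingleCircuit}$ is a harmless refinement the paper leaves implicit.

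Two small remarks. First, restricting to cuts that are also circuits loses nothing, since a non-minimal nonempty cut is an edge-disjoint union of at least two bonds and so has size at least $2\ell > 1.01\ell$. Second, your final ``main obstacle'' does not arise for this corollary, because its hypothesis takes $\ell$ to be the true minimum cut. The mismatch between the running $\ell$ and the true minimum cut concerns the analysis of \cref{alg:final}, not this statement.
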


\begin{proof}
    We know by \cref{lem:listSmallCuts} that for every cut $C$ of size $[\ell, 1.01\ell]$, there is some query $S \in Q$ for which $C$ is the unique cut contained in $S$. By \cref{lem:detectSingleCycle}, whenever this is the case, $\mathrm{DetectSingleCircuit}(S)$ will exactly recover the identity of the contained cut, after which it is then added to the set $\mathrm{Cuts}$. The bound on the number of queries follows from \cref{lem:listSmallCuts}, which states that $|Q| = m^{O(\log\log(m))}$.
\end{proof}

\subsection{Complete Algorithm}
Explicit derandomization algorithm in cographic matroids follows from the correctness of the above sub-routines and the general framework established in \cref{sec:framework}:
\begin{theorem}\label{thm:finalCoGraphic}
For a graph $G$ on $m$ edges, \cref{alg:final} with \cref{alg:findLargeIndependentSetCographic} and \cref{alg:ListSmallCuts} as sub-routines returns a spanning forest of $G$ using $m^{O(\log\log(m))}$ queries in $O(\log m\cdot \log\log m)$ rounds.
\end{theorem}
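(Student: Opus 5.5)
The plan is to mirror the proof of \cref{thm:finalGraphic}: check that the two cographic subroutines satisfy the hypotheses of \cref{thm:final}, and let the framework give correctness, rounds, and queries. In a cographic matroid the circuits are the minimal cuts (bonds) of $G$. The minimum circuit size is therefore the minimum cut $\ell$. A basis is the complement of a spanning forest, so the output $T$ of \cref{alg:final} determines a spanning forest $E\setminus T$ of $G$. This is the sense in which the statement says the algorithm ``returns a spanning forest''.

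First, \cref{alg:ListSmallCuts} is an implementation of $\mathrm{ListShortCircuits}(E,\ell,m)$ with $Q_1=m^{O(\log\log m)}$. For every iteration with $\ell\le \kappa\log m$, \cref{cor:listSmallCuts} (via \cref{lem:listSmallCuts} and \cref{thm:unique-survival-from-dnfs}) shows it returns every cut of size in $[\ell,1.01\ell]$. Every set returned by $\mathrm{DetectSingleCircuit}$ is a genuine circuit by \cref{lem:detectSingleCycle}, so \cref{cor:deleteCycles} guarantees that deleting one edge from each preserves the rank. Two points need care.

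\emph{Minimum circuit size increases.} After the deletions, no circuit of size at most $1.01\ell$ remains, so the invariant ``minimum circuit size $\ge \ell$'' carries into the next inner iteration. Deleting edges in a cographic matroid corresponds to contracting them in $G$. Cuts of the new graph are cuts of the old graph minus the deleted edges, and every old near-minimum cut lost an edge. I need to check that this does not create smaller new circuits. This follows because circuits of a matroid deletion $\mathcal M\setminus T$ are exactly the circuits of $\mathcal M$ avoiding $T$.

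\emph{Parameter $m$.} The current ground set may have fewer than $m$ edges, so the lemmas are invoked with the original $m$, which only enlarges the query sets.

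Second, once $\ell>s\log m$ with $s\ge\kappa$, \cref{alg:findLargeIndependentSetCographic} serves as $\mathrm{FindLargeIndependentSet}$ with $Q_2=m^{O(\log\log m)}$. By \cref{cor:findLargeIndependentSetCographic}, it returns an independent set containing at least $|E|/10$ elements. Contracting on it (in the matroid sense, which corresponds to deleting those edges from the graph) is allowed by \cref{clm:finalCorrect}. The resulting graph may have a smaller minimum cut. That is why the inner loop restarts at $\ell=1$, and all subroutine lemmas hold for arbitrary minimum cut.

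Finally, \cref{thm:final} gives $O((Q_1+Q_2)\operatorname{polylog} m)=m^{O(\log\log m)}$ queries and $O(\log m\log\log m)$ rounds. The main obstacle is the bookkeeping between matroid deletion/contraction and graph contraction/deletion. In particular, the hypotheses of the cut lemmas must apply to the current (multi)graph at every stage, since those lemmas are stated for graphs while the algorithm manipulates the matroid. I would first verify that each intermediate cographic matroid is the cographic matroid of some multigraph on at most $m$ edges, using the fact that cographic matroids are closed under minors.
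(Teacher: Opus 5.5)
Your proposal is correct and takes the same route as the paper. The paper's proof simply cites \cref{cor:findLargeIndependentSetCographic} and \cref{cor:listSmallCuts} as verifying the hypotheses of \cref{thm:final}; your extra bookkeeping (circuits of $\mathcal M\setminus T$ are the circuits of $\mathcal M$ avoiding $T$, cographic matroids are minor-closed, and the output $T$ is the complement of a spanning forest) makes explicit what the paper leaves implicit.
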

\begin{proof}
\cref{cor:findLargeIndependentSetCographic} and \cref{cor:listSmallCuts} show that \cref{alg:findLargeIndependentSetCographic} and \cref{alg:ListSmallCuts} satisfy the requirements of \cref{thm:final} and the desired query bounds.
\end{proof}

\section{Acknowledgments}
The authors would like to thank CCC reviewers and Madhu Sudan for helpful points on presentation.

\bibliographystyle{alpha}
\bibliography{references}

\newpage 

\begin{appendices}
\section{Unique Cut Survival Under Bounded-Independence Edge Subsampling}\label[appendix]{appendix:unique-survival}

First, we show that \cref{cor:connectednessBoundedIndSampling} can be extended for graphs with smaller minimum cut:

\begin{corollary}\label{cor:smallerMinCutConnected}
    Given an undirected unweighted graph $G = (V, E)$ on $m$ edges, if the minimum cut of $G$ is $\ell \leq \kappa \log m$ (for $\kappa$ as in \cref{cor:connectednessBoundedIndSampling}), then sampling the edges of $G$ using \cref{alg:arbMarginals} with $p = 1 - (1/2)^{\left\lceil2\kappa \log m/\ell\right\rceil}$ and $O(\ell)$-wise independence yields a connected graph with probability $1 - 1 / \mathrm{poly}(m)$.
\end{corollary}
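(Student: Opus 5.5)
The plan mirrors the proof of \cref{cor:smallerGirth}, with the roles of cycles and cuts exchanged. Cycle-freeness there was handled by subdividing edges into paths; for connectivity the dual operation is to \emph{replace each edge by a bundle of parallel edges}. Set $s = \left\lceil 2\kappa\log m/\ell\right\rceil$ and build the auxiliary multigraph $G'$ on the same vertex set, where each edge $e\in E$ becomes $s$ parallel copies $e^{(1)},\dots,e^{(s)}$.

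\textbf{Step 1: $G'$ satisfies the hypotheses of \cref{cor:connectednessBoundedIndSampling}.}
Every cut of $G$ of size $k$ becomes a cut of $G'$ of size $sk$. Hence the minimum cut of $G'$ is $s\ell \geq 2\kappa\log m$. Moreover $m' = sm \leq m^2$ for large $m$, since $s\le m$ and $\ell\ge 1$. This gives $\log m' \leq 2\log m$, and therefore $s\ell \geq \kappa\log m'$. \cref{cor:connectednessBoundedIndSampling} and \cref{thm:connectednessBoundedIndSampling} are stated for unweighted graphs, but they only use the min-cut bound through \cref{thm:lev-scores-min-cut}. That argument (the clustering of \cref{cor:graph-clustering-eff-resistance}, the contraction lemma, and the bound $|V|/|E|\le 2/c$) applies verbatim to multigraphs. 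So sampling the edges of $G'$ in an $O(\log m')=O(\log m)$-wise independent way with marginals $1/2$ yields a connected subgraph with probability $1-1/\poly(m)$.

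\textbf{Step 2: transfer connectivity back to $G$.}
Keep edge $e$ of $G$ iff at least one copy $e^{(j)}$ survives, i.e.\ $x_e = 1-\prod_{j}(1-y_{e^{(j)}})$. If the sample of $G'$ is connected, then the subgraph of $G$ formed by kept edges has exactly the same connectivity, because parallel copies only matter through whether some copy is present. Each edge is dropped with probability $(1/2)^s$, so $\Pr[x_e=1] = 1-(1/2)^{\lceil 2\kappa\log m/\ell\rceil} = p$.

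It remains to match this with \cref{alg:arbMarginals}. That algorithm with input bits $z=1-y$ produces the AND of $s$ bits, and complementing the output gives exactly the OR above. Complementing bits preserves (almost) $k$-wise independence with marginals $1/2$, and it is the intended reading of ``\cref{alg:arbMarginals} with $p=1-(1/2)^{s}$.'' Concretely, the sampler draws $ms$ bits from a $K$-wise independent distribution and groups them into blocks of size $s=\log(1/(1-p))$. By \cref{cor:arbMarginals}, this is the same as feeding the $G'$ sampling a distribution that is $K$-wise independent on the $ms$ copy-variables.

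\textbf{Step 3: the independence parameter.}
Step 1 needs $K=\Theta(\log m)$-wise independence on the copies. On the original edges this is $K/s$-wise independence, since each edge uses $s$ bits. With $s=\Theta(\log m/\ell)$ we get $K/s = \Theta(\ell)$, which matches the claimed $O(\ell)$-wise independence. Precisely, $O(\ell)$-wise independence in the sense of \cref{alg:arbMarginals} means the base bits are $O(\ell)\cdot s = O(\log m)$-wise independent.

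\textbf{Expected obstacle.}
The main care point is Step 1's use of \cref{thm:lev-scores-min-cut} on a multigraph. I would justify it by observing that parallel edges can be merged into a single weighted edge for all electrical quantities, and that the recursive weighting in \cref{alg:edge-weighting} never assumed simplicity. A secondary detail is the ceiling, and the case where $s$ is not large enough for $\log m'\le 2\log m$. Both are handled by the inequality $s\le m$ for $\ell\ge1$ and sufficiently large $m$, exactly as in \cref{cor:smallerGirth}.
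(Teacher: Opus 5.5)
Your proposal is correct and follows the paper's proof. You replace each edge by $s=\lceil 2\kappa\log m/\ell\rceil$ parallel copies, note that the multigraph $G'$ has minimum cut $s\ell\ge 2\kappa\log m\ge\kappa\log m'$ with $m'=sm\le m^2$, apply the large-min-cut connectivity corollary to $G'$, and read the result back through the grouped sampler. Your extra bookkeeping fills in details the paper leaves implicit: keeping an edge iff some copy survives (the complemented AND, which is what actually gives marginals $1-(1/2)^s$, where the paper's last line loosely says $(1/2)^s$), noting that the reweighting argument already handles multigraphs, and checking that $\ell s=\Theta(\log m)$, so $O(\ell)$-wise independence in the sense of the grouped sampler means $\Theta(\log m)$-wise independence on the base bits.
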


\begin{proof}
    We construct an auxiliary graph $G'$, whereby we replace each edge in $G$ with $s = \left\lceil2\kappa \log(m)/\ell\right\rceil$ many multi-edges (again, using $\kappa$ to be the constant from \cref{cor:connectednessBoundedIndSampling}). Immediately, we can see that the minimum cut in $G'$ is of size $\ell' \geq 2\kappa \log(m)$. At the same time, the number of edges in $G'$ is $m' = sm \leq m^2$, since $s \leq m$ for sufficiently large $m$. Hence, we get that $\ell' \geq 2\kappa\log(m) \geq \kappa\log(m')$, meaning that we can apply \cref{cor:connectednessBoundedIndSampling} to $G'$. Namely, if one samples the edges of $G'$ using $O(\log(m))$-wise independent random bits with marginals $1/2$, the resulting graph is connected with probability $1 - 1 / \mathrm{poly}(m)$.

    Now, observe that our implementation of $O(\ell)$-wise independent sampling with marginals $(1/2)^{\left\lceil 2\kappa \log(n)/\ell\right\rceil}$ from \cref{alg:arbMarginals} exactly corresponds to the above approach.
\end{proof}

Using this, we can show that there is a non-negligible probability that \emph{any} near-minimum cut is the \emph{unique} surviving cut in a random $k$-wise independent sample of edges.

More formally, we have the following claim:

\begin{claim}\label{clm:uniqueCutSurvival}
    Let $G = (V, E)$ be a graph with $m$ edges and minimum (non-empty) cut size $\ell \leq \kappa\log m$ (for $\kappa$ as in \cref{cor:connectednessBoundedIndSampling}), and let $C \subseteq E$ denote the edges participating in some cut of size $[\ell, 1.01 \ell]$. Now, consider sampling the edges of $G$ using \cref{alg:arbMarginals} with $O(\ell)$-wise independence and marginals $p = (1/2)^{\left\lceil10\kappa\log(m)/\ell\right\rceil}$, with probability $>0$, the resulting sample $Q \subseteq E$ will both:
    \begin{enumerate}
        \item Contain $C$, i.e., $C \subseteq Q$.
        \item For all other (non-empty) cuts $C' \neq C$, $C' \not \subset Q$. 
    \end{enumerate}
\end{claim}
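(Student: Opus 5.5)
The plan is to condition on $C \subseteq Q$, which is exactly computable under exact bounded independence. Then, in the conditioned space, look at the complement $E \setminus Q$ restricted to $E \setminus C$. In that complement each edge appears with marginal $1-p = 1-(1/2)^{\lceil 10\kappa\log m/\ell\rceil}$, and we show it connects each side of $C$ by invoking \cref{cor:smallerMinCutConnected} on each side separately. Throughout, $Q$ is drawn exactly as in the statement, with $\Pr[e \in Q] = p$. We assume, as in the matroid setting, that $C$ is a circuit of the cographic matroid: $C = \delta(S)$ with both $G[S]$ and $G[\bar S]$ connected. The independence parameter is $k = c_0\ell$ (times $\log(1/p)$ at the bit level, per \cref{cor:arbMarginals}), with $c_0$ large enough to cover $|C| \le 1.01\ell$ plus the $O(0.2\ell)$-wise independence that \cref{cor:smallerMinCutConnected} needs.

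\textbf{Step 1 (survival of $C$).} By \cref{cor:arbMarginals} with $\delta = 0$, the indicators $(\mathbf{1}[e\in Q])_e$ are exactly $k$-wise independent with marginals $p$. Since $|C| \le 1.01\ell \le k$, we get $\Pr[C \subseteq Q] = p^{|C|} > 0$.

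\textbf{Step 2 (conditioning preserves independence).} Consider any $T \subseteq E\setminus C$ with $|T| \le k - |C|$ and any pattern $a$. Exact $k$-wise independence gives $\Pr[Q\cap T = a \mid C \subseteq Q] = \Pr[Q \cap T = a]$. So conditioned on $C \subseteq Q$, the indicators on $E \setminus C$ are exactly $(k-|C|)$-wise independent with marginals $p$. Hence the complement indicators $\mathbf{1}[e \notin Q]$ are $(k-|C|)$-wise independent with marginals $1-p$. The same holds at the level of the underlying bits of \cref{alg:arbMarginals}: each edge is governed by its own block of $\log(1/p)$ bits, and those blocks are disjoint from the blocks of $C$. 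The complemented sampler therefore is exactly the construction used in \cref{cor:smallerMinCutConnected}.

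\textbf{Step 3 (min cut of each side).} We show that $G[S]$ and $G[\bar S]$ have minimum cut at least $0.2\ell$. Take any nonempty $T \subsetneq S$. The edges of $G[S]$ crossing $(T, S\setminus T)$ equal $\delta_G(T)\setminus C$. The cut $\delta_G(T)$ is a nonempty cut of $G$ different from $C$. By \cref{clm:removeCutNewCutSize}, the cut analogue of \cref{clm:boundedLengthContract} (symmetric differences of cuts are cuts), it satisfies $|\delta_G(T)\setminus C| \ge 0.2\ell$. Applying \cref{cor:smallerMinCutConnected} to $G[S]$ with min cut $\ell' \ge 0.2\ell$ gives keep-rate $1-(1/2)^{\lceil 2\kappa\log m/\ell'\rceil}$. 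This is at most $1-p$, since $2\kappa\log m/(0.2\ell) = 10\kappa\log m/\ell$, and a larger keep-rate only helps connectivity (couple by taking the union with extra kept edges, or redo that proof with the exact rate). It also requires only $O(\ell')$-wise independence, which Step 2 provides. Hence, conditioned on $C\subseteq Q$, the kept edges $E(S)\setminus Q$ connect $S$ with probability $1-1/\poly(m)$, and likewise for $\bar S$.

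\textbf{Step 4 (conclude).} On the intersection of these events, $C \subseteq Q$ and $(V, E\setminus Q)$ has exactly the two components $S$ and $\bar S$. So every nonempty cut $C' \neq C$ has an edge outside $Q$, i.e.\ $C' \not\subseteq Q$. The total probability is at least $p^{1.01\ell}\,(1 - 2/\poly(m)) > 0$.

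I expect the main obstacle to be Step 3's parameter bookkeeping rather than the probability. Three things need checking. First, the side-graphs have at most $m$ edges, so the $\log m$ in \cref{cor:smallerMinCutConnected} is valid. Second, $0.2\ell \le \kappa\log m$ holds. Third, the rate $p$ prescribed in the statement matches exactly the rate \cref{cor:smallerMinCutConnected} demands for min cut $0.2\ell$. If a monotonicity or coupling argument for rates were awkward, I would instead re-run the multi-edge reduction in that corollary's proof directly with $s = \lceil 10\kappa\log m/\ell\rceil$ copies.
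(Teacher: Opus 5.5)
Your proposal is correct and follows essentially the paper's route. You compute $\Pr[C\subseteq Q]=p^{|C|}$ exactly, note that conditioning on $C\subseteq Q$ leaves the blocks of $E\setminus C$ exactly $O(\ell)$-wise independent, use \cref{clm:removeCutNewCutSize} to get residual min cut $\ge 0.2\ell$ (with $2\kappa\log m/(0.2\ell)=10\kappa\log m/\ell$), and apply \cref{cor:smallerMinCutConnected} to the complement of the sample. The only real difference is that you apply the corollary to $G[S]$ and $G[\bar S]$ separately rather than to the possibly disconnected graph $G-C$, which sidesteps the paper's footnote. In exchange, your bond assumption implicitly takes $G$ connected. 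There it holds automatically, since a cut of size $<2\ell$ cannot have a disconnected side; for disconnected $G$ you would additionally apply the corollary to each component not touched by $C$.
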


Note that this claim implies that for each near-minimum cut $C$, it will be the \emph{unique} surviving cut in some sample from our distribution. Before proving this claim, we first require the following structural characterization of cut sizes in graphs under edge removals:

\begin{claim}\label{clm:removeCutNewCutSize}
    Let $G = (V, E)$ be a graph with minimum (non-empty) cut $\ell$, and let $C \subseteq E$ denote the edges participating in a cut of size $[\ell, 1.01 \ell]$. Now, let $G - C$ denote the result of deleting all the edges in $C$ from $G$. Then:
    \begin{enumerate}
        \item $G - C$ has one more connected component than $G$.
        \item The minimum (non-empty) cut in $G - C$ is of size $\geq 0.2 \ell$.
    \end{enumerate}
\end{claim}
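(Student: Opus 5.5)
The plan is to mirror the proof of \cref{clm:boundedLengthContract}, replacing cycles by cuts. Two facts drive everything. First, every non-empty cut of $G$ has at least $\ell$ edges. Second, cuts are closed under symmetric difference: $E(S,\bar S)\oplus E(T,\bar T) = E(S\oplus T, \overline{S\oplus T})$, since an edge crosses the new bipartition iff it crosses exactly one of the two old ones. Throughout, write $C = E(S,\bar S)$ with $\ell \le |C| \le 1.01\ell$, and assume $\ell \ge 1$.

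\textbf{Part 1.} I first show that $C$ lies inside a single connected component $K$ of $G$. If $C$ had edges in two different components $K_1 \neq K_2$, then $E(S\cap K_1, \overline{S\cap K_1}) = C\cap E(K_1)$ and $C \cap E(K_2)$ would be two disjoint non-empty cuts. Each has size $\ge \ell$, so $|C| \ge 2\ell > 1.01\ell$, a contradiction.

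Next I show that $G[S\cap K]$ is connected, and symmetrically $G[\bar S\cap K]$. Suppose $S\cap K = A \sqcup B$ with no edges between $A$ and $B$. Every edge leaving $A$ then goes into $\bar S\cap K$, so $E(A,\bar A)\subseteq C$; likewise $E(B,\bar B)\subseteq C$, and these two sets are disjoint. Both are non-empty, because $K$ is connected and neither $A$ nor $B$ equals $K$ (as $\bar S\cap K\neq\emptyset$, since $C$ has an edge in $K$). Hence $|C| \ge 2\ell$, again a contradiction.

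Deleting $C$ therefore leaves every other component intact and splits $K$ into exactly the two connected pieces $S\cap K$ and $\bar S\cap K$. This gives exactly one extra component.

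\textbf{Part 2.} Take any non-empty cut $C'$ of $G-C$, say $C' = E_{G-C}(T,\bar T)$. Let $D = E_G(T,\bar T)$, so that $C' = D\setminus C$. Suppose for contradiction that $|D\setminus C| < 0.2\ell$.
\begin{itemize}
\item Since $D$ is a non-empty cut of $G$, $|D|\ge \ell$, so $|D\cap C| > 0.8\ell$.
\item Then $|C\setminus D| < 1.01\ell - 0.8\ell = 0.21\ell$.
\item So $|D\oplus C| = |D\setminus C| + |C\setminus D| < 0.41\ell$.
\end{itemize}
But $D\oplus C$ is a cut of $G$ by closure under symmetric difference, and it is non-empty because it contains $D\setminus C = C'\neq\emptyset$. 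So it has size $\ge \ell$, a contradiction. Hence $|C'|\ge 0.2\ell$.

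\textbf{Main obstacle.} The only delicate point is Part 1: it uses the near-minimality $|C|<2\ell$ to rule out $C$ being a union of several bonds. I expect the argument above via two disjoint sub-cuts to handle it cleanly. Part 2 is routine once the symmetric-difference closure of cuts is noted.
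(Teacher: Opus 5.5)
Your proof is correct. Part 2 is the same computation the paper uses. The paper shows that every cut $C'\neq C$ of $G$ satisfies $|C'\setminus C|\ge 0.2\ell$ via the symmetric-difference argument. Your lifting of a cut of $G-C$ to $D=E_G(T,\bar T)$ with $C'=D\setminus C$ makes explicit how that inequality transfers to $G-C$.

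Where you genuinely differ is Part 1. The paper first reduces to connected $G$, justifying in one line that a near-minimum cut lies in a single component. It then derives Part 1 from the same $0.2\ell$ inequality: no cut $C'\neq C$ is contained in $C$, so deleting $C$ leaves exactly two components. You instead prove directly that $C$ is a bond. You show it is confined to one component $K$ and that both shores $S\cap K$ and $\bar S\cap K$ induce connected subgraphs, by exhibiting two disjoint non-empty sub-cuts of $C$ whenever either property fails.

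The paper's route is more unified: one inequality yields both parts, mirroring \cref{clm:boundedLengthContract}. However, its ``at most two components'' step silently needs the observation that three or more components would force some cut other than $C$ to lie inside $C$. Your route is fully explicit and proves the component-confinement that the paper only asserts. It also shows that Part 1 needs only $|C|<2\ell$ rather than the $1.01$ factor. The cost is a separate argument for Part 1 instead of getting both parts from one inequality.
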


\begin{proof}
    WLOG, we assume that the graph $G$ is connected (i.e., there is only one connected component). Note that this is WLOG as if $G$ has more than one connected component, any near-minimum cut must still be completely contained within one of these components (otherwise its size would be $\geq 2\ell$).
    
    Beyond this, the key starting observation is that the cuts in a graph form an $\mathbb{F}_2$-vector space over $\mathbb{F}_2^E$ \cite{Oxl06}. That is to say, for any two cuts $C, C'$ in a graph $G$, it will be the case that $C \oplus C'$ is also a cut in the graph $G$.

    Consider a cut $C' \neq C$ in $G$. We will show that $|C' \setminus C| \geq 0.2 \ell$. Assume that $|C'\setminus C| < 0.2\ell$, then, since $|C'| \geq \ell$ ($\ell$ is the minimum cut size), it must be that $|C'\cap C| > 0.8\ell$. However, since $|C| \leq 1.01\ell$, then $|C\setminus C'| = |C| - |C\cap C'| < 1.01\ell - 0.8\ell = 0.21\ell$. But this means that for the cut $C\oplus C'$, $|C\oplus C'| = |C'\setminus C|+|C\setminus C'| < 0.2\ell + 0.21\ell < \ell$, which contradicts $\ell$ being the minimum cut. Hence, $|C'\setminus C| \geq 0.2\ell$.
        
        In particular, $C'\setminus C \neq \emptyset$, so $C'\not\subseteq C$. This implies that $G - C$ has \textit{exactly} two connected components: it has at least two because we removed all edges in a cut $C$ (disconnecting two of its sides), while it is at most two because all other cuts $C'\neq C$ have at least one edge preserved. This proves the first part of the claim.

        Finally, the fact that $|C'\setminus C| \geq 0.2\ell$ for all cuts $C'\neq C$ also clearly implies that the minimum (non-empty) cut in $G - C$ is of size at least $0.2\ell$, proving the second part of the claim.
\end{proof}

We now prove \cref{clm:uniqueCutSurvival}.

\begin{proof}[Proof of \cref{clm:uniqueCutSurvival}.]
    Let $k \geq 1.01\ell$ be an independence parameter to be chosen later.
    
    First, we analyze the probability that the cut $C$ survives under sampling. Because $|C| \leq 1.01\ell$ and we sample with marginals $p = (1/2)^{\left\lceil10\kappa\log(m)/\ell\right\rceil} \geq 1/(2m^{10\kappa/\ell})$ in a $k$-wise independent manner, we see that 
    \[
    \Pr[C \text{ survives}] = p^{|C|} \geq \left ( \frac{1}{2m^{10\kappa/\ell}}\right )^{1.01\ell} \geq \frac{1}{2^{1.01\ell}\cdot m^{10.1\kappa}} \geq \frac{1}{m^{12\kappa}},
    \]
    where we have used that $|C| < 1.01\ell \leq k$, and so its survival probability under $k$-wise independent sampling is the same as its survival probability under uniform sampling. 

    Next, we consider the probability that any cut $C' \neq C$ \emph{also} survives when sampling at rate $p$ conditioned on $C$ surviving sampling. To analyze this, observe that conditioned on $C$ surviving sampling, a cut $C'$ survives sampling if and only if all edges in $C' \setminus C$ survive sampling. Thus, in order to argue that \emph{no other} cut $C'$ survives sampling conditioned on $C$ surviving sampling, it suffices to argue that in the graph $G - C$, there is no cut that completely survives sampling. Formally,
    \[
    \Pr[\not \exists C' \neq C: C' \text{ survives sampling} | C \text{ survives sampling}]
    \]
    \[
    \geq \Pr[\not \exists C': C' \text{ survives sampling in }G - C| C \text{ survives sampling}].
    \]
     Let us denote the sampled edges in $G - C$ by $\widehat{G - C}$. No cut surviving sampling in $G-C$ is equivalent to $(G-C) - (\widehat{G - C})$ having the same number of connected components as $(G-C)$.

     Since the original distribution is $O(\ell)$-wise independent, if we condition it on $C$ surviving sampling, the remaining distribution will be $O(\ell)$-wise independent since $|C| \leq 1.01\ell$. Moreover, it is easy to verify that the conditional distribution is still sampled by \cref{alg:arbMarginals} with marginals $p = (1/2)^{\left\lceil 2\kappa\log(m)/0.2\ell\right\rceil}$.

    Finally, we note that \cref{thm:connectednessBoundedIndSampling} and \cref{cor:smallerMinCutConnected} can be directly generalized for disconnected graphs with large minimum non-empty cut, in which case we get that the connected components are preserved with high probability\footnote{Formally, if the minimal non-empty cut in a disconnected graph $c$ is at least $\kappa \log(m)$, for $\kappa$ as in \cref{thm:connectednessBoundedIndSampling}, one can still find a weighting such that the graph is sparsified when sampling with marginals at most $1/2$ in a $k$-wise independent manner. To get this weighting, we simply invoke \cref{thm:lev-scores-min-cut} separately for each connected component. Since graph sparsifiers preserve connected components, we get that connectivity is preserved with high probability. \cref{cor:smallerMinCutConnected} can be applied to this generalized statement of the theorem without any change.}. By \cref{clm:removeCutNewCutSize}, we know that the minimum non-empty cut in $G - C$ is of size $\geq 0.2\ell$. Thus, in order to argue that removing $\widehat{G - C}$ from $G - C$ does not change connected components, we must only show that $O(\ell)$-wise independent sampling in a graph with minimum non-empty cut $\geq 0.2 \ell$ with marginals $1 - (1/2)^{\left\lceil 2\kappa\log(m)/0.2\ell\right\rceil}$ does not change connected components with high probability, since this is exactly the distribution of $(G-C) - (\widehat{G - C})$. By invoking \cref{cor:smallerMinCutConnected}\footnote{Again, we note that the corollary holds even for disconnected graphs with large non-empty minimum cut.}, we get that this latter claim follows for some $k-1.01\ell = O(\ell)$, meaning that we can set $k = O(\ell)$. In this case, $\Pr[\not \exists C': C' \text{ survives sampling in }G - C | C \text{ survives sampling}] \geq 1 - 1/\poly(m)$.

    To conclude, we see that
    \[
    \Pr[C \text{ unique surviving cut}] 
    \]
    \[
    =  \Pr[C \text{ survives}] \cdot \Pr[\not \exists C' \neq C: C' \text{ survives sampling} | C \text{ survives sampling}]
    \]
    \[
    \geq \frac{1}{m^{12\kappa}} \cdot (1 - 1 / \mathrm{poly}(m)) \geq \frac{1}{2m^{12\kappa}} > 0.
    \]
\end{proof}
\end{appendices}

\end{document}

%% file: thmmacros.tex
\usepackage{amsthm}
\usepackage{thmtools,thm-restate}

\numberwithin{equation}{section}
\declaretheoremstyle[bodyfont=\it,qed=\qedsymbol]{noproofstyle}

\declaretheorem[numberlike=equation,name=Observation, Refname={Observation,Observations}]{observation}

\declaretheorem[name=Observation,numbered=no]{observation*}

\declaretheorem[numberlike=equation]{fact}

\declaretheorem[numberlike=equation]{theorem}

\declaretheorem[name=Theorem,numbered=no]{theorem*}

\declaretheorem[numberlike=equation]{lemma}
\declaretheorem[name=Lemma,numbered=no]{lemma*}

\declaretheorem[numberlike=equation]{corollary}
\declaretheorem[name=Corollary,numbered=no]{corollary*}

\declaretheorem[name=Proposition,numbered=no]{proposition*}

\declaretheorem[numberlike=equation,name=Claim, Refname={Claim,Claims}]{claim}
\declaretheorem[name=Claim,numbered=no]{claim*}

\declaretheorem[name=Conjecture,numbered=no]{conjecture*}

\declaretheorem[name=Question,numbered=no]{question*}

\declaretheoremstyle[bodyfont=\it,qed=$\lrcorner$]{defstyle} 

\declaretheorem[numberlike=equation,style=defstyle]{definition}
\declaretheorem[unnumbered,name=Definition,style=defstyle]{definition*}

\declaretheorem[unnumbered,name=Example,style=defstyle]{example*}

\declaretheorem[unnumbered,name=Notation=defstyle]{notation*}

\declaretheorem[unnumbered,name=Construction,style=defstyle]{construction*}

\declaretheorem[numberlike=equation,style=defstyle]{remark}
\declaretheorem[unnumbered,name=Remark,style=defstyle]{remark*}
